\newtheorem{theorem}{Theorem}[section]
\newtheorem{lemma}[theorem]{Lemma}
\newcommand{\id}{\mathbbm{1}}
\newcommand{\Htar}{H_{\mathrm{tar}}}
\newcommand{\Henc}{H_{\mathrm{enc}}}
\newcommand{\Hencfirst}{H_{\mathrm{enc}}^{(1)}}
\newcommand{\Hencsecond}{H_{\mathrm{enc}}^{(2)}}
\newcommand{\Hpen}{H_{\mathrm{pen}}}
\newcommand{\Hsim}{H_{\mathrm{sim}}}
\newcommand{\Htot}{H_{\mathrm{tot}}}
\newcommand{\Heff}{H_{\mathrm{eff}}}
\newcommand{\Penc}{P_{\mathrm{enc}}}
\newcommand{\Senc}{\mathcal{S}_{\mathrm{enc}}}
\newcommand{\Szero}{\mathcal{S}_{0}}
\newcommand{\Pzero}{P_{0}}
\newcommand{\ket}[1]{\left| #1 \right\rangle}
\newcommand{\bra}[1]{\left\langle #1 \right|}
\newcommand{\sgn}{\operatorname{sgn}}
\newtheorem{corollary}[theorem]{Corollary}
\begin{document}
\begin{CJK}{UTF8}{gbsn}

\title{Robust analog quantum simulators by quantum error-detecting codes}

\author{Yingkang Cao} 
\thanks{These authors contributed equally.}
\affiliation{Joint Center for Quantum Information and Computer Science, University of Maryland, College Park, Maryland 20742, USA}
\affiliation{Department of Computer Science, University of Maryland, College Park, USA}

\author{Suying Liu} 
\thanks{These authors contributed equally.}
\affiliation{Joint Center for Quantum Information and Computer Science, University of Maryland, College Park, Maryland 20742, USA}
\affiliation{Department of Computer Science, University of Maryland, College Park, USA}

\author{Haowei Deng} 
\affiliation{Joint Center for Quantum Information and Computer Science, University of Maryland, College Park, Maryland 20742, USA}
\affiliation{Department of Computer Science, University of Maryland, College Park, USA}

\author{Zihan Xia} 
\affiliation{Center for Quantum Information Science \& Technology, University of Southern California, Los Angeles, CA 90089, USA}
\affiliation{Department of Electrical \& Computer Engineering, University of Southern California, Los Angeles, CA 90089, USA}

\author{Xiaodi Wu} 
\affiliation{Joint Center for Quantum Information and Computer Science, University of Maryland, College Park, Maryland 20742, USA}
\affiliation{Department of Computer Science, University of Maryland, College Park, USA}

\author{Yu-Xin Wang (王语馨)}
\email{yxwang.physics@outlook.com}
\affiliation{Joint Center for Quantum Information and Computer Science, University of Maryland, College Park, Maryland 20742, USA}

\date{\today}

\begin{abstract}
Achieving noise resilience is an outstanding challenge in Hamiltonian-based quantum computation. To this end, energy-gap protection provides a promising approach, where the desired quantum dynamics are encoded into the ground space of a penalty Hamiltonian that suppresses unwanted noise processes. However, existing approaches either explicitly require high-weight penalty terms that are not directly accessible in current hardware, or utilize non-commuting $2$-local Hamiltonians, which typically leads to an exponentially small energy gap. In this work, we provide a general recipe for designing error-resilient Hamiltonian simulations, making use of an excited encoding subspace stabilized by solely $2$-local commuting Hamiltonians. Our results thus overcome a no-go theorem previously derived for ground-space encoding that prevents noise suppression schemes with such Hamiltonians. Importantly, our method is scalable as it only requires penalty terms that scale polynomially with system size. To illustrate the utility of our approach, we further apply this method to a variety of $1$- and $2$-dimensional many-body spin models, potentially extending the duration of high-fidelity simulation by orders of magnitude in current hardware.

\end{abstract}

\maketitle
\end{CJK}

\section{\label{sec:intro}Introduction}

Hamiltonian simulation is an essential application of quantum computers due to its broad relevance to addressing problems in quantum many-body physics~\cite{bernien2017probing-many-body-51-atom,browaeys2020many-body-phys-individually-Rydberg,bluvstein2021controlling-many-body-Rydberg}, quantum chemistry~\cite{arguello2019analogue-quantum-chemistry-sim,clinton2024towards-near-term-qsim-material}, critical phenomena~\cite{keesling2019quantum-Kibble-Zurek,andersen2024thermalization}, and optimization~\cite{Farhi_2001,Crosson_2021,ebadi2022quantum-optimization-mis,leng2023QHD}.
Analog quantum simulation, which directly emulates the target Hamiltonian by a controllable simulator Hamiltonian, stands out to be a promising approach in the noisy intermediate-scale quantum (NISQ) era~\cite{cirac2012goals-and-opportunities-qsim,gross2017qsim-ultracold-atoms-optical-lattice,cubitt2018universal,zhou2021strongly,altman2021quantum,daley2022practical,trivedi2024quantum-adv-stab-to-error-analog-qsim}. However, one of the main challenges in realizing these applications is the extreme sensitivity of quantum information to perturbations, making it fragile against hardware noise. This fragility arises because isolating a quantum computer from external influences and controlling it to perform desired computations are inherently conflicting tasks, making noise inevitable.
In the context of digital quantum computing, quantum error correction (QEC) schemes have been established to achieve fault tolerance~\cite{gottesman2010introduction,kitaev2003fault,dennis2002topological,fowler2009high,takita2016demonstration,marques2022logical,krinner2022realizing,zhao2022realization,harvard-qec-Bluvstein_2023,quantinuum-qec1-da2024demonstration,quantinuum-qec2-ryan2024high,google-qec-2023,eth-qec-Self_2024}. However, those techniques suffer from two restrictions that make them incompatible with current analog quantum hardware: (1) high-weight quantum operations are required in a typical QEC scheme, while analog simulators naturally support 2-local operations; (2) standard QEC requires an active measurement-and-correction procedure, which is noisy and time-consuming on the existing hardware. 

Alternatively, energy-gap protection provides a powerful and versatile approach to suppressing errors in analog simulators~\cite{bacon-shor-code,JFS06-PRA,kevin-young2013error,Pudenz_2014,Marvian-Lidar-2014-PRL-2local-is-too-local,bookatz2015error,Matsuura_2016,Vinci_2016,marvian2017error,marvian2017-PRL-error-supp-subsystem-codes,marvian2019-arxiv-robust,Pearson_2019,Xia_2024}.
In those schemes, error resilience is achieved by a large penalty Hamiltonian that pins the system into its ground subspace, which forms a quantum error-detecting code (QEDC) by design.
The type of noise operators that the scheme protects from is determined by the code properties, while the level of noise suppression depends on the excitation gap of the penalty Hamiltonian.
The desired computation is then realized through an encoding of the target Hamiltonian into this code space. However, existing methods for implementing energy-gap-protected universal Hamiltonians either explicitly require high-weight stabilizers as the penalty terms~\cite{JFS06-PRA,kevin-young2013error,bookatz2015error,marvian2017error}, or involve an exponentially large penalty Hamiltonian~\cite{marvian2019-arxiv-robust}.
This raises the question of whether one can devise energy-gap-protected universal Hamiltonian simulations using only $2$-local physical interactions and a penalty Hamiltonian whose interaction strength scales polynomially with system size.
Towards this end, Ref.~\cite{Marvian-Lidar-2014-PRL-2local-is-too-local} proved an important no-go theorem that precludes any error-resilient constructions using the ground space of commuting $2$-local Hamiltonians.

In this work, we propose a general recipe for designing universal Hamiltonian simulations that are robust against $1$-local errors.
Our approach only utilizes $2$-local commuting penalty Hamiltonians that scale at most polynomially in system size.
Intriguingly, our result circumvents the no-go theorem in~\cite{Marvian-Lidar-2014-PRL-2local-is-too-local}, making use of an excited subspace of the penalty Hamiltonian.
More specifically, we identify a $2$-local commuting Hamiltonian that stabilizes a $[[4,2,2]]$ QEDC as an energy eigenspace.
We further generalize it to a $[[4n,2n,2]]$ QEDC via a tensor product of individual code blocks, which is naturally protected by the sum over the penalty terms on each block.
This code directly enables 1-local, as well as inner-block 2-local logical operations.
Moreover, using perturbative gadgets, we can also implement cross-block 2-local logical operations. Importantly, all of these encodings only involve 2-local physical terms.

We present rigorous bounds on error suppression properties of our approach, as well as extensive numerical evidence supporting our analytical constructions. 
To demonstrate the power of our scheme, we provide fully 2-local, error-resilient implementations of $1$D XY, $1$D and $2$D transverse-field Ising (TFI) model, and $2$D compass model.
With minimal requirements and universal applicability, our scheme opens a new pathway towards hardware-efficient robust analog Hamiltonian simulation at scale.

The rest of this paper is structured as follows. Sec.~\ref{sec:general framework} introduces the general framework of our scheme including the error suppression conditions.
Sec.~\ref{sec:code} introduces a $[[4,2,2]]$ code with a $2$-local penalty Hamiltonian; Sec.~\ref{sec:code-block composition} then discusses the construction of $2$-local commuting penalty Hamiltonians and inner-block logical operations for a $[[4n,2n,2]]$ code.
In Sec.~\ref{sec:cross block}, we describe how perturbative gadgets enable $2$-local cross-block interactions. 
Then, in Sec.~\ref{sec:application}, we utilize all aforementioned building blocks to achieve robust Hamiltonian simulations of various spin models.
We conclude in Sec.~\ref{sec:discussion} with a discussion on physical implementation, as well as open questions for further investigation.

\section{\label{sec:general framework} General encoding framework }
\label{sec:esc.gen}
We first discuss the general framework for designing the encodings that enable robust analog quantum simulation (see Fig.~\ref{fig:general_framework}).
The overarching goal is to simulate a target Hamiltonian $\Htar$ via a machine-native simulator Hamiltonian $\Hsim$ with built-in noise resilience. 
We specifically refer to $2$-local interactions as machine-native terms, as these are the interactions that are naturally supported by most current quantum hardware \cite{oliver-2024probing-2D-hubbard,immanuel-2024emergence,bernien2017probing-many-body-51-atom,monroe-2023continuous}.
The desired computation is achieved by encoding the target quantum system into a subspace $\Senc$ of a larger Hilbert space, such that the restriction of an encoded Hamiltonian $\Henc$ (acting on the larger space) to $\Senc$ recovers the target Hamiltonian.
More precisely, denoting the projector onto $\Senc$ as $\Penc$, we require the following \textit{encoding condition}~\footnote{Here we use $\Htar$ 
to denote the target Hamiltonian after performing certain logical encoding. Such Hamiltonian operator can thus be related to the original Hamiltonian (i.e., without encoding) via an isometry defined by the logical codewords.}
\begin{equation}\label{eq:enc-cond}
    \Penc \Henc \Penc = \Htar.
\end{equation}
We choose the encoding subspace $\Senc$ to be an eigenspace of a penalty Hamiltonian $\Hpen$, which we assume to have zero eigenvalue without loss of generality.
The penalty Hamiltonian $\Hpen$ is thus used to provide an energy gap between $\Senc$ and its orthogonal complement $\Senc^{\perp}$, which prevents the encoded states from leaking out of the zero-energy subspace.
The energy gap can always be amplified by applying a large penalty coefficient $\lambda$ to the penalty Hamiltonian. The overall simulator Hamiltonian is then given by the sum of the encoding and the penalty Hamiltonians, i.e.
\begin{equation} \label{eq:Hsim}
    \Hsim := \Henc +\lambda \Hpen.
\end{equation}

In a realistic quantum simulation, the simulator Hamiltonian $\Hsim$ is inevitably affected by unwanted noises in the physical system.
Throughout this work, we focus on noises that consist of 
$1$-local coherent perturbations to the simulator Hamiltonian, which can be written in the form $V = \sum_i \epsilon_i V_i$, where $V_i$'s are single-qubit Pauli operators and $\epsilon_i$'s denote the strength of each noise component.
Such an error model describes the dominant noise source in various state-of-the-art analog quantum simulator systems \cite{manuel-endres-2024benchmarking}. 
Specifically, miscalibration and slow drifts in classical controls (laser pulses, microwave drives, etc.) can lead to such $1$-local coherent errors, which is a major error source in current AMO-based quantum hardware. The total, noisy simulator Hamiltonian now becomes 
\begin{equation}
\label{eq:Htot.def}
\Htot := \Hsim + \sum_i \epsilon_i 
V_i
. 
\end{equation}

Error resilience boils down to the requirement that, for initial states inside the encoding subspace, the evolution generated by the total Hamiltonian agrees with the target Hamiltonian up to a deviation that can be arbitrarily reduced by increasing the penalty coefficient $\lambda$.
As discussed e.g.~in Refs.~\cite{Marvian-Lidar-2014-PRL-2local-is-too-local,marvian2017error,marvian2017-PRL-error-supp-subsystem-codes,marvian2019-arxiv-robust}, a necessary and sufficient condition for such error suppression property can be written as follows:
\begin{theorem}
Given an encoding subspace $\Senc$ with projector $\Penc$ and a set of error terms $\{V_i\}$, $\Senc$ is protected against $\{V_i\}$
if
\begin{align}
\label{eqn:error_suppresion_cond}
\Penc V_i\Penc= c_i\Penc,  \quad \forall i,
\end{align}
where $c_i$'s are real coefficients.
\end{theorem}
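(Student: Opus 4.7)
The plan is to invoke Schrieffer--Wolff (SW) effective Hamiltonian theory, which is the standard tool for analyzing dynamics in a gapped subspace under a perturbation and which matches the structure suggested by the macros $\SW$ and $\Heff$ defined in the paper's preamble. Treating $\lambda \Hpen$ as the unperturbed Hamiltonian and $\Hpert := \Henc + \sum_i \epsilon_i V_i$ as a perturbation, the subspace $\Senc$ is by assumption the zero-eigenvalue eigenspace of $\lambda \Hpen$, separated from the rest of the spectrum by a gap $\lambda \Deltapen$. For sufficiently large $\lambda$ one can construct a unitary $\SW = \id + O(1/\lambda)$ that block-diagonalizes $\Htot$ with respect to $\Penc$ and $\id - \Penc$, and the code-space block $\Heff$ admits a convergent expansion in powers of $1/\lambda$.

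The key calculation is then to read off the leading term: $\Heff = \Penc \Hpert \Penc + O(1/\lambda)$. By the encoding condition~\eqref{eq:enc-cond}, $\Penc \Henc \Penc = \Htar$. By the hypothesis~\eqref{eqn:error_suppresion_cond}, $\Penc V_i \Penc = c_i \Penc$, so the noise contribution at this order collapses to $(\sum_i \epsilon_i c_i)\Penc$, a trivial energy offset that does not affect the logical dynamics. Therefore $\Heff$ agrees with $\Htar$ up to a global constant and an $O(1/\lambda)$ correction, while $\SW$ maps codewords into true low-energy eigenstates of $\Htot$ with error $O(1/\lambda)$. Comparing the evolution generated by $\Htot$ on initial states in $\Senc$ with that generated by $\Htar$ on the logical Hilbert space therefore yields a deviation that vanishes as $\lambda \to \infty$, which is precisely the error-resilience property.

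The main obstacle is making these $O(1/\lambda)$ bounds quantitative rather than merely asymptotic: one needs to control the higher-order SW terms in operator norm uniformly in $\lambda$. This is handled by the standard Bravyi--DiVincenzo--Loss estimates, which provide an explicit threshold $\lambda > \lambda_{*}$ in terms of $\|\Hpert\|$ and $\Deltapen$ above which the SW series converges, with the remainder controlled by $\|\Hpert\|^2 / (\lambda \Deltapen)$. The hypothesis $\Penc V_i \Penc = c_i \Penc$ is doing the essential work here: without it, the first-order term would contain a nontrivial logical operator $\Penc V_i \Penc$ on $\Senc$ that does not vanish with $\lambda$ and would therefore corrupt the simulated dynamics at leading order, defeating any gap-based suppression.
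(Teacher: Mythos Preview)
Your Schrieffer--Wolff argument is correct and complete: treating $\lambda\Hpen$ as the unperturbed Hamiltonian, block-diagonalizing, and observing that the hypothesis collapses the first-order noise contribution to a scalar is exactly the right mechanism, and the Bravyi--DiVincenzo--Loss estimates you cite make the $O(1/\lambda)$ remainders rigorous. The paper itself does not prove this particular theorem, deferring instead to the cited references; however, its Appendix~B proves the more general Theorem~\ref{thm:main-theorem} (the gadget case), whose method would specialize to cover this statement too, and that method is \emph{not} Schrieffer--Wolff. Instead the paper works directly with the propagator difference $e^{-iHt}P_0 - e^{-i\Heff^{(\le 2)}t}P_0$, writes it as a time integral via the fundamental theorem of calculus, and then repeatedly applies an integration-by-parts identity (Lemma~\ref{lem:int-by-part-trick}, adapted from Burgarth et al.) to generate a convergent series of bounded terms. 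This yields the explicit constant-free bound $5\kappa + 6\kappa^2\|Q_0WP_0\|t$ without ever constructing a block-diagonalizing unitary or invoking convergence of the SW series. Your SW route is arguably cleaner conceptually---it separates ``where is the dressed subspace'' from ``what is the effective Hamiltonian on it''---while the paper's direct time-domain approach is tailored to produce sharp explicit constants and to handle the $\sqrt{\lambda}$-scaled gadget terms uniformly, which matters for the paper's main result but not for the first-order statement here. (Incidentally, the macro $\SW$ you noticed in the preamble is never actually used in the paper; it appears to be vestigial.)
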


We refer to Eq.~\eqref{eqn:error_suppresion_cond} as the \textit{error suppression condition}.
In a different context, Eq.~\eqref{eqn:error_suppresion_cond} also defines a quantum error-detecting code.
As shown e.g.~in Refs.~\cite{zanardi2014coherent,Marvian-Lidar-2014-PRL-2local-is-too-local}, for encodings satisfying Eq.~\eqref{eqn:error_suppresion_cond}, one can rigorously bound the deviation between the time evolution under the total Hamiltonian and the target dynamics. More concretely, given a target time evolution $\ket{\psi_{\mathrm{tar}}(t)} := e^{-i \Htar t} \ket{\psi_0}$ where $\ket{\psi_0} \in \Senc$, the infidelity of the corresponding encoded time evolution $\ket{\psi(t)} := e^{-i \Htot t} \ket{\psi_0}$ can be upper bounded as
\begin{equation}\label{eq:infidelity-1st-order-bound}
    1 - \left| \langle \psi_{\mathrm{tar}}(t) | \psi(t) \rangle \right|^2 \le O(\lambda^{-2})
    .
\end{equation}

\begin{figure*}[t]
\includegraphics[width=0.8\linewidth]{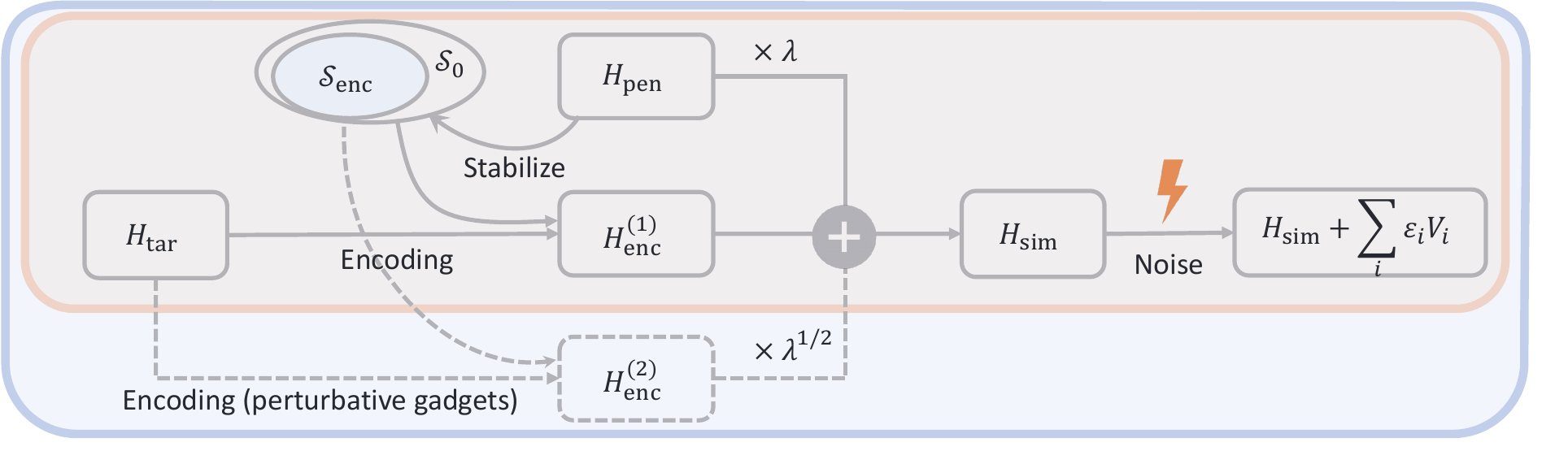}
\centering
\caption{Illustration of the general framework for robust Hamiltonian simulation scheme. The orange block depicts the standard strategy of designing an error-resilient encoding (see Sec.~\ref{sec:general framework}). 
The blue block describes our extension of the framework, featuring perturbative gadgets that preserve the error resilience and enable the implementation of universal logical interactions (see Sec.~\ref{sec:cross block}).}
\label{fig:general_framework}
\end{figure*}

\begin{table}[h!]
    \centering
\begin{tabular}{|c|c||c|c|}
    \hline
        \text{Logical term} & \text{Physical term} & \text{Logical term} & \text{Physical term} \\ 
        \hline
        $\overline{Z_1}$ & $Z _{1} Z _{2} $  &  $\overline{X_1} $ &   $-X _{1} X _{3}  $\\ 
         $\overline{Z_2} $ &  $Z _{1} Z _{3} $ &  $\overline{X_2} $ &  $X _{1} X _{2} $ \\ 
        $ \overline{Z_1 Z_2} $ &  $Z _{2} Z _{3} $ &  $ \overline{X_1 X_2} $&  $- X _{2} X _{3} $\\ 
        \hline
    \end{tabular}
     \caption{Logical operators for the $[[4,2,2]]$ code}
     \label{table:4_2_2_logical_mapping}
\end{table}

\section{\label{sec:results} Results }

While the construction in Sec.~\ref{sec:esc.gen} is generally applicable to suppressing noise in Hamiltonian simulations, in this section, we focus on encodings that can be realized with purely $2$-local Hamiltonians, i.e., that are naturally compatible with current experimental platforms. 
We start by introducing the basic building block in Sec.~\ref{sec:code}, which is a $[[4,2,2]]$ QEDC that can be stabilized by only $2$-local penalty Hamiltonians.
We then generalize this to a scalable code family in Sec.~\ref{sec:code-block composition}.
Finally, in Sec.~\ref{sec:cross block}, we discuss the implementation of logical interactions in this code space via fully $2$-local physical terms, which allows the encoding of computationally universal target Hamiltonians.

\subsection{\label{sec:code}$[[4,2,2]]$ Hamiltonian code}

We consider the following 4-qubit $2$-local commuting penalty Hamiltonian
\begin{align}
\label{neq:H.pen2.422}
H _{\text{pen}} (\mathbf{g} ) 
=  g _{x} X_{1} X_{2} + g _{z} Z _{1} Z_{2} + g _{x}X_{3} X_{4}+ g _{z}Z_{3} Z_{4}
,  
\end{align}
where $\mathbf{g} = (g_x,g_z) \in \mathbb{R}^2$ denotes a pair of nonzero parameters that satisfy the constraint $|g_x/g_z| \ne 1$. It is more informative to group the four qubits into two decoupled pairs; that is, we write $\Hpen(\mathbf{g}) = H_{1,2}(\mathbf{g}) + H_{3,4}(\mathbf{g})$, where $H_{2k-1,2k} = g_x X_{2k-1} X_{2k} + g_z Z_{2k-1} Z_{2k}$ describes the interactions within the $k$th pair, for $k=1,2$.
The four eigenstates of $H_{2k-1,2k}$ coincide with the four Bell states:
\begin{align}
\label{neq:2qb.xz.eigsys}
H _{2k-1,2k} |\Phi _{\mu\nu}\rangle = [(-1)^{\mu}g_x+(-1)^{\nu}g_z] |\Phi _{\mu\nu}\rangle , 
\end{align}
where $|\Phi_{\mu\nu}\rangle \equiv (|0\nu\rangle + (-1)^{\mu} |1\bar{\nu} \rangle) /{\sqrt{2}}$ denotes one of the four Bell pairs that are simultaneous eigenstates of the $2$-local operators: $X_{2k-1} X_{2k} |\Phi_{\mu\nu}\rangle  = (-1)^{\mu} |\Phi_{\mu\nu}\rangle $, $ Z_{2k-1} Z_{2k}|\Phi_{\mu\nu}\rangle =(-1)^{\nu} |\Phi_{\mu\nu}\rangle $, $\mu, \nu\in \{0,1\}$, and $\bar{\mu} \equiv 1- \mu$. Under the constraint $|g_x/g_z| \ne 1$, the four eigenvalues above are distinct from each other.
It follows that the zero-energy eigenspace $ \Senc $ of $\Hpen(\mathbf{g})$ is spanned by the following four states, which we identify as the logical basis states of two logical qubits:
\begin{subequations}
\label{neq:hcode.422.basis}
\begin{align}
& | \overline{0+} \rangle := |\Phi_{00}\rangle \otimes |\Phi_{11}\rangle 
, \quad  | \overline{0-} \rangle := |\Phi_{10}\rangle \otimes |\Phi_{01}\rangle 
, \\
& | \overline{1+} \rangle := |\Phi_{01}\rangle \otimes |\Phi_{10}\rangle 
, \quad | \overline{1-} \rangle := |\Phi_{11}\rangle \otimes |\Phi_{00}\rangle 
.
\end{align}
\end{subequations}
It is straightforward to check that $\Senc  $ satisfies the error suppression condition in Eq.~\eqref{eqn:error_suppresion_cond} against 1-local errors.
In other words, Eq.~\eqref{neq:hcode.422.basis} forms a QEDC with $n=4$ physical qubits, $k=2$ logical qubits, and distance $d=2$.
We refer to this code as the $[[4,2,2]]$ \textit{Hamiltonian code}.
It is interesting to note that this code is equivalent to the standard CSS $[[4,2,2]]$ stabilizer code~\cite{eczoo_stab_4_2_2,quantinuum-icebergcode-Self_2024} up to a single-qubit unitary transformation (see Appendix~\ref{subsec: code equivalence}).

Note that at this point, the assignment of different logical basis states is arbitrary; we choose the convention in Eq.~\eqref{neq:hcode.422.basis} out of convenience, so that the logical $\overline{Z_1 Z_2} $ and $\overline{X_1 X_2} $ operators can be realized with two-local physical operators.
See Table~\ref{table:4_2_2_logical_mapping} for a list of logical operators on the Hamiltonian code that are two-local on physical qubits.

\subsection{\label{sec:code-block composition}Generalization to $[[4n,2n,2]]$ code family by block composition}

To maintain the $2$-local feature of the penalty Hamiltonian of the above $[[4,2,2]]$ Hamiltonian code, we straightforwardly extend the penalty Hamiltonian in Eq.~\eqref{neq:H.pen2.422} to more pairs, i.e.,
\begin{align}\label{eqn:general_penalty_hamiltonian}
\Hpen(\mathbf{g}) = \sum_{i = 1}^n & \left( g_{x} X_{4i-3} X_{4i-2} + g_{z} Z_{4i-3} Z_{4i-2} \right. \nonumber \\
&\left. + g_{x} X_{4i-1} X_{4i} + g_{z} Z_{4i-1} Z_{4i} \right).
\end{align}
All eigenstates of $\Hpen(\mathbf{g})$ can be written as tensor products of $2n$ Bell pairs shown in Eq.~\eqref{neq:2qb.xz.eigsys}. 
The spectrum of the full penalty Hamiltonian is thus given by 
\begin{align}
\label{neq:Hnqb.XZ.pair.spec}
\mathcal{E}_{\boldsymbol{\alpha}} = \sum_{\ell = 1}^{2n} \nu_{\alpha_{\ell}}
, 
\end{align}
where $\boldsymbol{\alpha} = (\alpha_1, \dots, \alpha_{2n})$ is a $2n$-component array with each component taking values from the set $\{ 00,01,10,11\}$, and $\nu _{\alpha _{\ell}}$ corresponds to the eigenvalue of the $\ell$th Bell pair ($\ell=1,2,\dots,2n$). We can thus identify the zero-energy eigenspace $\Szero$ of $\Hpen(\mathbf{g})$ by enumerating all states in each pair such that their corresponding eigenvalues add up to zero, i.e., finding $\boldsymbol{\alpha}$ such that $\mathcal{E} _{\boldsymbol{\alpha}}  =0$ in Eq.~\eqref{neq:Hnqb.XZ.pair.spec}. 
We stress that the commuting nature of Eq.~\eqref{eqn:general_penalty_hamiltonian} guarantees that $\Hpen(\mathbf{g})$ possesses a constant energy gap between $\Szero$ and $\Szero^{\perp}$, which is crucial to the scalable implementation of our error suppression scheme.

Intriguingly, we note that the zero-energy eigenspace $\Szero$ above always forms a QEDC with distance 2.
This occurs because any $1$-local error operator would change the eigenvalue of one of the qubit pairs while leaving the other pairs unchanged.
Consequently, the eigenvalues of the penalty terms would no longer sum to zero, leading to a leakage from $\Szero$.

The degeneracy of the zero-energy eigenspace $\Szero$ can be computed by counting all zero-energy configurations of Eq.~\eqref{neq:Hnqb.XZ.pair.spec}. 
For system sizes greater than $4$, the degeneracy is in general not a power of $2$. Therefore, to encode qubit systems, we choose the encoding subspace $\Senc$ to be a subspace of $\Szero$ that has the proper dimension.
Specifically, we generalize the $[[4,2,2]]$ Hamiltonian code via \textit{code-block composition}, so that $\Senc$ is formed by the tensor product of $n$ copies of the $[[4,2,2]]$ Hamiltonian code. In this case, $\Senc$ corresponds to a $[[4n,2n,2]]$ code, which is a QEDC with a constant code rate.
As $\Senc$ is a subspace of $\Szero$, the error suppression condition in Eq.~\eqref{eqn:error_suppresion_cond} needs to be modified to as follows:
\begin{align}
\label{eqn:esc_subspace}
\Pzero V_i\Pzero= c_i\Pzero,  \quad \forall i,
\end{align}
where $V_i$ denote $1$-local error operators, and $\Pzero $ is the projector onto the zero-energy subspace $\Szero$. This modification is because 
the penalty Hamiltonian only suppresses state transitions between the zero-energy eigenspace $\Szero$ and other energy levels, but not between the encoding subspace $\Senc$ and its orthogonal complement within $\Szero$.
For the same reason, the encoding condition in Eq.~\eqref{eq:enc-cond} should also be augmented with an additional requirement that $\Henc$ induces no transitions between $\Senc$ and $\Senc^{\perp} \cap \Szero$, i.e.,
\begin{equation}\label{eq:enc-cond-aug}
(\Pzero - \Penc) \Henc \Penc = 0.
\end{equation}

\begin{table}[t]
    \centering
    \begin{tabular}{|c|c||c|c|}
    \hline
        \text{Logical term} & \text{Physical term} & \text{Logical term} & \text{Physical term} \\ 
        \hline
        $\overline{Z}_{2i-1}$ & $Z_{4i-3} Z_{4i-2} $  &  $\overline{X}_{2i-1} $ &   $- X_{4i-3} X_{4i-1}$\\ 
         $\overline{Z}_{2i}$ &  $Z_{4i-3} Z_{4i-1} $ &  $\overline{X}_{2i}$ &  $X_{4i-3} X_{4i-2}$ \\ 
        $ \overline{Z}_{2i-1} \overline{Z}_{2i}$ &  $Z_{4i-2} Z_{4i-1} $ &  $\overline{X}_{2i-1} \overline{X}_{2i}$&  $- X_{4i-2} X_{4i-1}$\\ 
        \hline
    \end{tabular} 
    \caption{Inner-block logical operators of the $[[4n,2n,2]]$ Hamiltonian code.}
    \label{table:4n_2n_2_logical_mapping}
\end{table}

The block-based structure of the $[[4n,2n,2]]$ encoding ensures that any $1$- and $2$-local logical operator within each block can be directly implemented via $2$-local physical terms as in Table~\ref{table:4_2_2_logical_mapping}. More specifically, for the $i$th block, the \textit{inner-block logical operators} are given in Table~\ref{table:4n_2n_2_logical_mapping}. In contrast, the cross-block logical operators cannot be realized in this way. 
This is because any $2$-local physical operator that acts on two different blocks necessarily causes leakage out of the code space as a result of the error suppression condition, Eq.~\eqref{eqn:esc_subspace}. Other techniques are thus required to achieve 2-local physical implementation of cross-block logical operators. 
In the next section (Sec.~\ref{sec:cross block}), we utilize the technique of \textit{perturbative gadgets} to address this problem.

\subsection{\label{sec:cross block}Perturbative gadget}
In this section, we introduce perturbative gadgets that allow $2$-local implementation of \textit{cross-block logical operators}.
The basic idea is to make use of $2$-local cross-block physical operators that induce transitions between the code space and states outside of the zero-energy eigenspace. 
When scaling those perturbative terms relative to the penalty Hamiltonian appropriately, we can generate an effective cross-block logical interaction as a leading-order perturbative correction. See Fig.~\ref{fig:general_framework} for a schematic illustration of our general framework.

More concretely, we write the simulator Hamiltonian as 
\begin{equation}\label{eq:Hsim-2nd}
    \Hsim := \lambda\Hpen+\Henc^{(1)}+\sqrt{\lambda}\Henc^{(2)}.
\end{equation}
Here, for convenience, we have divided the encoding Hamiltonian into two parts: $\Henc^{(1)}$ denoting terms that only encode $1$- and $2$-local inner-block logical operators, and $\Henc^{(2)}$ representing terms used in the perturbative gadgets.
We again stress that throughout our discussion, the encoding Hamiltonian only contains $2$-local terms.

In the large penalty coefficient limit $\lambda \gg 1$, we can use perturbation theory to derive the leading order effects generated by $\Henc^{(1)}$ and $\Henc^{(2)}$. 
In this regime, the dominant contributions are given by first-order perturbative term in $\Henc^{(1)}$ and up to second-order perturbative effects of $\Henc^{(2)}$. Instead of Eqs.~\eqref{eq:enc-cond} and \eqref{eq:enc-cond-aug}, we now obtain the following condition for the encoded dynamics to match the desired target Hamiltonian 
\begin{subequations}\label{eq:enc-cond-new}
\begin{align}
& P_0 \Hencsecond P_0 = 0, \\
& \Penc (\Hencfirst + H_{\text{eff}}^{(2)}) \Penc = \Htar, \\
& (\Pzero - \Penc) \Hencfirst \Penc = 0, \\
& (\Pzero - \Penc) \Heff^{(2)} \Penc = 0, \label{eq:enc-cond-new(d)}
\end{align}
\end{subequations}
where $H_{\text{eff}}^{(2)}$ describes the $2$nd-order perturbative effect of $\Henc^{(2)}$ in the zero-energy eigenspace of $\Hpen$, and $\Hpen^{-1}$ denotes the pseudoinverse. Specifically, we have 
\begin{equation}\label{eq:2nd_perturbation}
    H_{\text{eff}}^{(2)} := - \Pzero \Hencsecond
    \Hpen^{-1}
    \Hencsecond \Pzero
    .
\end{equation}
Given the desired cross-block logical interaction as $H_{\text{eff}}^{(2)}$, we can reverse design the corresponding $2$-local physical terms according to Eq.~\eqref{eq:2nd_perturbation}. As concrete examples, Table~\ref{table:gadget-example} provide perturbative gadgets for the cross-block Ising-type and $XY$-type interactions; the gadget for the logical Ising term is also illustrated in Fig.~\ref{fig:gadget-example}.
See Appendix \ref{appxsec:gadgets} for details on a general recipe for constructing 2-local physical terms that systematically create various cross-block logical interactions.

\begin{table}[t]
    \centering
    \begin{tabular}{|c|c|}
    \hline
        $\Hencsecond$ & $H_{\text{eff}}^{(2)}$ \\ 
        \hline
        $\sqrt{\frac{8}{3}} \left( Z_2^{(i)} X_3^{(j)} \pm Z_4^{(i)} X_3^{(j)} \right)$ & $\pm \overline{Z}_2^{(i)} \overline{Z}_1^{(j)}$ \\ 
        $\sqrt{8} \left( Z_2^{(i)} X_1^{(j)} \pm Z_2^{(i)} X_3^{(j)} \right)$ & $\pm \overline{X}_2^{(i)} \overline{X}_1^{(j)}$ \\
        $\sqrt{\frac{8}{3}} \left( Z_2^{(i)} X_3^{(j)} + Z_4^{(i)} X_3^{(j)} \pm 3 Z_2^{(i)} X_1^{(j)} \right)$ & $\overline{Z}_2^{(i)} \overline{Z}_1^{(j)} \pm \overline{X}_2^{(i)} \overline{X}_1^{(j)}$ \\
        \hline
    \end{tabular} 
    \caption{Examples of perturbative gadgets.
    The physical interactions in the left column generate the logical interactions in the right column (cf.~Eq.~\eqref{eq:2nd_perturbation}), up to an inner-block residual logical term which can be canceled out by adjusting $\Hencfirst$ (see Appendix \ref{appxsec:gadgets} for details).
    Here $Z_k^{(i)}$ denotes the Pauli $Z$ operator acting on the $k$-th physical qubit ($k \in \{1,2,3,4\}$) of the $i$-th block, and $\overline{Z}_k^{(i)}$ denotes the Pauli $Z$ operator acting on the $k$-th logical qubit ($k \in \{1,2\}$) of the $i$-th block.
    The parameters of the penalty Hamiltonian are chosen to be $(g_x, g_z) = (1,3)$.}
    \label{table:gadget-example}
\end{table}

\begin{figure}[t]
\includegraphics[width=\linewidth]{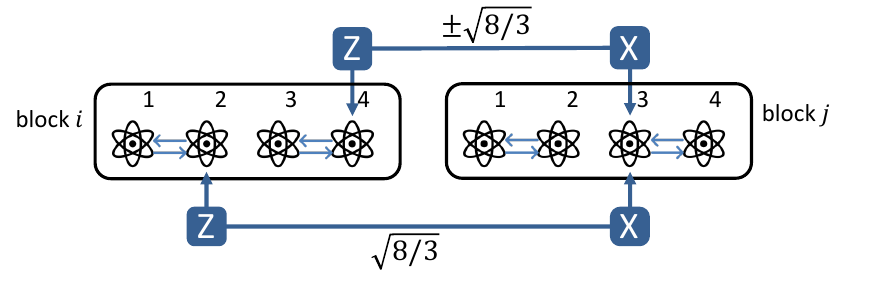}
\centering
\caption{Schematic illustration of the perturbative gadget for generating the 
Ising interaction $\pm \overline{Z}_2^{(i)} \overline{Z}_1^{(j)}$, as shown in the first row in Table~\ref{table:gadget-example}, between the second logical qubit of the $i$-th block and the first logical qubit of the $j$-th block. The sign of this coupling is set by the relative sign between the $Z_2^{(i)} X_3^{(j)}$ and $Z_4^{(i)} X_3^{(j)}$ terms.}
\label{fig:gadget-example}
\end{figure}

The error suppression properties of our perturbative gadget are summarized in the following theorem.
\begin{theorem}\label{thm:main-theorem}
    Consider a target Hamiltonian $\Htar$ acting on the code space $\Penc$ that can be encoded as a simulator Hamiltonian $\Hsim$ of the form in Eq.~\eqref{eq:Hsim-2nd}, such that the encoding conditions Eq.~\eqref{eq:enc-cond-new} and the error suppression condition Eq.~\eqref{eqn:esc_subspace} hold for the errors $V$ introduced in Eq.~\eqref{eq:Htot.def}. Suppose the penalty Hamiltonian $\Hpen$ has a zero-energy eigenspace $\Szero$ with a spectral gap of at least 1~\footnote{Here we refer to the constraint such that all the nonzero eigenvalues of $\Hpen$ are outside the interval $(-1,1)$.}.
    Let $M := \max \{\|\Hencsecond\|^2, \|\Hencfirst + V\|\}$ and suppose $\lambda \ge 25 M$.
    For an arbitrary initial state $\ket{\psi_0} \in \Senc$, let $\ket{\psi_{\mathrm{tar}}(t)} := e^{-i \Htar t} \ket{\psi_0}$ be the state evolved under the target Hamiltonian for time $t>0$, and let $\ket{\psi(t)} := e^{-i (\Hsim+V) t} \ket{\psi_0}$ be the state evolved under the noisy simulator Hamiltonian for the same amount of time. Then the infidelity of the analog quantum simulation is bounded by
    \begin{equation}
        1 - \left| \langle \psi_{\mathrm{tar}}(t) | \psi(t) \rangle \right|^2 \le \frac{M}{\lambda} \left( 6 + 13Mt \right)^2.
    \end{equation}
\end{theorem}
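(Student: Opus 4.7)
My plan is to apply a Schrieffer--Wolff (SW) transformation to block-diagonalize $\Htot$ with respect to the zero-energy projector $\Pzero$ of $\Hpen$, analyze the resulting effective Hamiltonian $\Heff := \Pzero U^{\dagger}\Htot U\Pzero$ on the code subspace using the encoding and error-suppression conditions, and then convert these static estimates into the claimed fidelity bound via Duhamel.

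First I would set up the SW transformation. Write $\Htot = \lambda\Hpen + \tilde V$ with $\tilde V := \Hencfirst + \sqrt\lambda\,\Hencsecond + V$. The triangle inequality and the definition of $M$ give $\|\tilde V\|\le\sqrt{\lambda M}+M$, and the hypothesis $\lambda\ge 25M$ implies $\sqrt{M/\lambda}\le 1/5$, so $\|\tilde V\|/\lambda\le 6/25<1/2$. Together with the assumed unit spectral gap of $\Hpen$, this places us within the convergence domain of standard SW, so there exists a unitary $U$ with $\|U-\mathbbm 1\|=O(\sqrt{M/\lambda})$ block-diagonalizing $\Htot$ with respect to $\Pzero$, and $\Heff$ admits a convergent series in $\sqrt{M/\lambda}$.

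Next I would compute the leading orders of $\Heff$. Order zero vanishes since $\Pzero\Hpen\Pzero=0$, and order one $\Pzero\tilde V\Pzero$ collapses to $\Pzero\Hencfirst\Pzero + c\Pzero$ by the first line of Eq.~\eqref{eq:enc-cond-new} (which kills the $\sqrt\lambda\,\Hencsecond$ piece) and Eq.~\eqref{eqn:esc_subspace} (which forces $\Pzero V\Pzero=c\Pzero$ for some scalar $c$). The second-order contraction $-\lambda^{-1}\Pzero\tilde V\,\Hpen^{-1}(\mathbbm 1-\Pzero)\tilde V\Pzero$ reproduces exactly the $H_{\text{eff}}^{(2)}$ of Eq.~\eqref{eq:2nd_perturbation} from the $\sqrt\lambda\Hencsecond\otimes\sqrt\lambda\Hencsecond$ channel, together with cross and pure-$(\Hencfirst+V)$ contributions of combined size $O(M^{3/2}/\sqrt\lambda)$. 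Higher SW orders are suppressed by further powers of $\sqrt{M/\lambda}$. Using the remaining lines of Eq.~\eqref{eq:enc-cond-new}, I obtain $\Penc\Heff\Penc = \Htar+c\Penc+\delta_1$ and $(\Pzero-\Penc)\Heff\Penc = \delta_2$ with $\|\delta_i\|=O(M\sqrt{M/\lambda})$.

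For the fidelity bound, block-diagonality gives $e^{-i\Htot t}\ket{\psi_0} = U e^{-i\Heff t}U^{\dagger}\ket{\psi_0}$ since $\ket{\psi_0}\in\Senc\subseteq\Szero$. Splitting $U^{\dagger}\ket{\psi_0}=\ket{\psi_0}+\ket{\eta_1}$ with $\|\ket{\eta_1}\|=O(\sqrt{M/\lambda})$ (and analogously for the outer $U$), then comparing $e^{-i\Heff t}\ket{\psi_0}$ to the trivially lifted target evolution $e^{-i(\Htar+c\Penc)t}\ket{\psi_0}$ by Duhamel, yields a single linear-in-$t$ integral controlled by $\|\delta_1\|+\|\delta_2\|$ and hence a state-level deviation $O(\sqrt{M/\lambda}\,(1+Mt))$; the global phase $e^{-ict}$ drops out of the fidelity. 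Squaring reproduces $\frac{M}{\lambda}(6+13Mt)^2$ once constants are pinned down. The main obstacle is bookkeeping around the anomalous $\sqrt\lambda$ prefactor on $\Hencsecond$: naively $\|\tilde V\|\sim\sqrt{\lambda M}\gg M$, yet the physically meaningful correction on the code space reduces back to $O(M)$ through the first-order cancellation forced by Eq.~\eqref{eq:enc-cond-new}, so one must carefully track which SW orders and contraction patterns yield $O(1)$ versus $O(\sqrt{M/\lambda})$ effects on both the diagonal ($\delta_1$) and the within-$\Szero$ leakage ($\delta_2$, controlled by Eq.~\eqref{eq:enc-cond-new(d)}) in order to recover the explicit constants $6$, $13$, and $25$.
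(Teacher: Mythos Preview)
Your strategy is sound and would yield the claimed bound, but the paper takes a genuinely different route. Rather than block-diagonalizing via a Schrieffer--Wolff rotation $U$, the paper works entirely in the time domain: it proves a general lemma (adapted from Burgarth et al.) that for $H=H_0+W$ with $\kappa=\|W\|/\Delta\le 1/4$, one has $\|e^{-iHt}P_0 - e^{-iH_{\mathrm{eff}}^{(\le 2)}t}P_0\|\le 5\kappa + 6\kappa^2\|Q_0WP_0\|t$, where $H_{\mathrm{eff}}^{(\le 2)}=P_0WP_0-P_0WR_0WP_0$. The proof of this lemma is an integration-by-parts identity applied to the Duhamel integrand and then iterated to all orders, the residual series being summed geometrically. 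This is then specialized with $H_0=\lambda\Hpen$, $W=\sqrt\lambda\,\Hencsecond+\Hencfirst+V$, and a second, straightforward Duhamel step compares $e^{-iH_{\mathrm{eff}}^{(\le 2)}t}\Penc$ with $e^{-i(\Htar+\alpha\Penc)t}\Penc$.

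The trade-off: your SW picture is conceptually transparent about which contractions of $\tilde V$ survive on $\Senc$, but the explicit constants $5$, $6$ (and hence $6$, $13$, $25$) emerge only after summing the full SW series and tracking the rotation on both the initial and final states---precisely the bookkeeping you flag as the main obstacle. The paper's iterated integration-by-parts sidesteps all of that: there is no rotation, no splitting $U^\dagger\ket{\psi_0}=\ket{\psi_0}+\ket{\eta_1}$, and the geometric sum of the iterated residuals produces the constants directly. One minor imprecision in your sketch: the identity $e^{-i\Htot t}\ket{\psi_0}=Ue^{-i\Heff t}U^\dagger\ket{\psi_0}$ is not literally correct since $\Heff$ lives only on $\Szero$; you need to split $U^\dagger\ket{\psi_0}$ into its $\Pzero$ and $Q_0$ components first and evolve them separately under the two diagonal blocks, which you effectively do in the next line but should state cleanly.
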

We provide the proof in Appendix \ref{appxsec:sim-err-bounds}. We remark that in contrast to the infidelity bound Eq.~\eqref{eq:infidelity-1st-order-bound} that applies to the case without any perturbative gadgets, here the infidelity scales as $O(1/\lambda)$.
This suggests that perturbative gadgets enable the error-resilient encoding of cross-block interactions at the cost of requiring a larger penalty coefficient to reach the same level of noise suppression. Regardless, we stress that the penalty coefficient required here to ensure a small simulation infidelity still scales polynomially in the system size. 

\section{\label{sec:application} Application}

To demonstrate the power of our robust Hamiltonian simulation scheme, we now discuss multiple examples applying those general techniques to the encoding of a variety of quantum many-body spin models. 

We start with a family of target Hamiltonians on a $2n$-site 1D chain with nearest-neighbor XY couplings:
\begin{equation}\label{eq:Htar-1d}
    \begin{aligned}
    \Htar^{\text{1d}} &= \sum_{k=1}^{2n-1} \left( J_k^{X} \overline{X}_k \overline{X}_{k+1} + J_k^{Z} \overline{Z}_k \overline{Z}_{k+1} \right) \\ &+ \sum_{k=1}^{2n} \left( h_k^{Z} \overline{Z}_k + h_k^{X} \overline{X}_k \right).
    \end{aligned}
\end{equation}
This family includes, for example, the 1D transverse-field Ising (TFI) and spin XY models.
To encode $\Htar^{\text{1d}}$, it suffices to use $n$ copies of the [[4,2,2]] code block, where one identifies sites $2i-1$ and $2i$ as the two logical qubits in the $i$th block.
With this arrangement, the single-site fields $\overline{Z}_k$ and $\overline{X}_k$ for all $k$, as well as the two-site interactions $\overline{Z}_k \overline{Z}_{k+1}$ and $\overline{X}_k \overline{X}_{k+1}$ for odd $k$, are all inner-block logical operators and hence can be directly encoded into $\Hencfirst$ using the prescriptions in Table~\ref{table:4n_2n_2_logical_mapping}.
The interactions $\overline{Z}_k \overline{Z}_{k+1}$ and $\overline{X}_k \overline{X}_{k+1}$ with even $k$ span two different blocks and need to be realized via perturbative gadgets (cf.~Eq.~\eqref{eq:2nd_perturbation}), contributing to terms in $\Hencsecond$; see Appendix~\ref{appxsec:scheme-1D-models} for detailed expressions of the two parts of the encoding Hamiltonian $\Hencfirst$ and $\Hencsecond$, as well as rigorous proof that they satisfy the conditions in Eq.~\eqref{eq:enc-cond-new}.

Further, our scheme allows error-resilient simulation of interacting many-body spin models on higher-dimensional lattices or more general interaction graphs.
However, in contrast to 1D models, if the $\mathbf{g}$ parameters of the penalty Hamiltonian were chosen to be uniform among all blocks, two perturbative gadgets that overlap at a common block can generate unwanted interference terms, causing leakage of quantum states from the encoding subspace $\Senc$ to its orthogonal complement within the zero-energy eigenspace $\Szero$.
Such leakage cannot be protected by the energy-gap protection mechanism and should be avoided (cf.~Eq.~\eqref{eq:enc-cond-new(d)}). As discussed in Appendix~\ref{appxsec:scheme-2D-models}, we have developed a systematic method for constructing perturbative gadgets without any leakage, by varying the $\mathbf{g}$ parameters of the neighboring blocks. Crucially, our method allows the leakage-free encoding of any finite-degree interaction graph using only finitely many different block parameters (i.e., independent of the graph size), which provides a systematic and scalable method for designing robust encoding Hamiltonians. As illustrative examples, we apply our general method to encode the 2D TFI~\cite{2d-TFIM-bernaschi2024quantum} model on a square lattice:
\begin{equation}\label{eq:Htar-2d-TFI-maintext}
    \begin{aligned}
    \Htar^{\text{2d-TFI}} &= \sum_{i,j} \left( J_{i,j}^{(1)} \overline{Z}_{i,j} \overline{Z}_{i+1,j} + J_{i,j}^{(2)} \overline{Z}_{i,j} \overline{Z}_{i,j+1} \right) \\
    &+ \sum_{i,j} \left( h_{i,j}^{Z} \overline{Z}_{i,j} + h_{i,j}^{X} \overline{X}_{i,j} \right).
    \end{aligned}
\end{equation}
See Appendix \ref{appxsec:scheme-2D-models} for detailed constructions, as well as encodings of other 2D models.

Using the Python package \textsf{dynamite} \cite{gregory_d_kahanamoku_meyer_2024_10906046}, we performed extensive numerical simulations of our error suppression scheme for the spin models mentioned above.
In FIG.~\ref{fig:2D-TFI}(a), we compare the encoded Hamiltonian simulation versus the unencoded case for the 2D TFI model.
More concretely, each run of the encoded dynamics consists of time-evolving the Hamiltonian $\Hsim + V$ for a duration of time $t$ starting from a Haar-random initial state inside $\Senc$, with the noise $V$ being a weighted sum over all 1-local Pauli $X$, $Y$ and $Z$ operators with coefficients drawn independently from the uniform distribution over the interval $[-0.1,0.1]$.
The average infidelities with respect to the target final state are plotted in FIG.~\ref{fig:2D-TFI}(a).
Compared with the error in the unencoded dynamics corresponding to time-evolving the Hamiltonian $\Htar + V$, simulation results using our encodings clearly show energy gap protection.
Our scheme notably extends the high-fidelity simulation time by orders of magnitude.
We also plot in FIG.~\ref{fig:2D-TFI}(b) the scaling of the infidelity of the encoded simulation with respect to the penalty coefficient at a fixed time ($t=1$, although the generic behavior does not rely on the choice of the time).
As shown in FIG.~\ref{fig:2D-TFI}(b), the infidelity of the noisy, encoded evolution scales inversely with the penalty coefficient, which is in excellent agreement with the analytical error bound in Theorem~\ref{thm:main-theorem}.
For numerical simulation results for other spin models, see Appendix \ref{appxsec:numerics}.

\begin{figure}[t]
\includegraphics[width=0.9\linewidth]{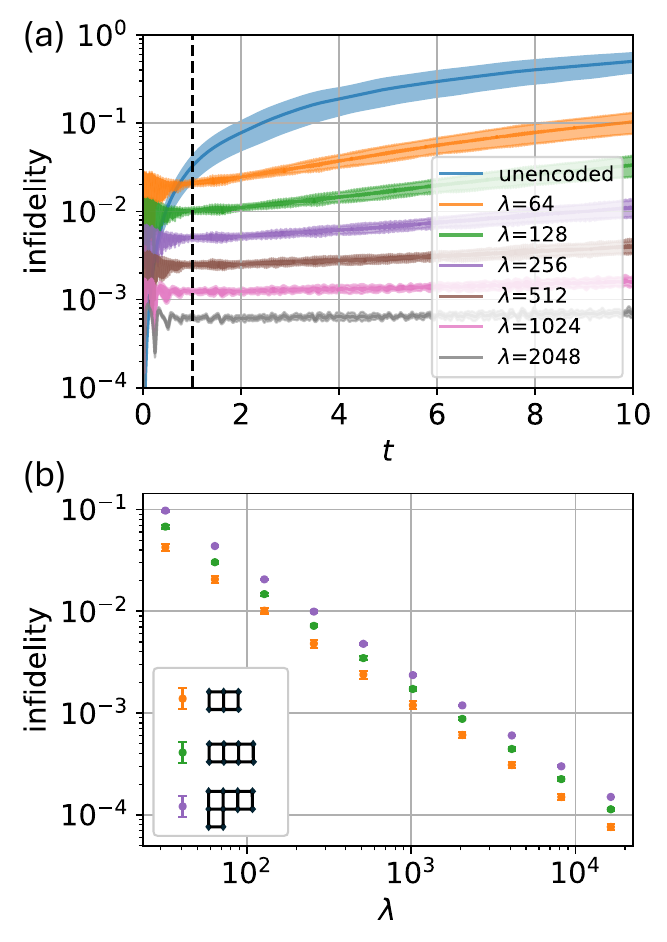}
\centering
\caption{Numerically simulated performance of the encoded 2D TFI model, as quantified by the average infidelity over randomly selected initial state within the encoding subspace $\Senc$.
The target Hamiltonian is given in Eq.~\eqref{eq:Htar-2d-TFI-maintext} with all the Ising interaction strength and local fields chosen to be $1$.
(a) Comparison of the noisy time evolution of the unencoded target Hamiltonian versus the energy-gap-protected encoded Hamiltonian at various penalty coefficients, for a $2\times3$ square lattice.
Each curve is obtained by averaging over 10 random samples of initial states, with the shaded area showing the standard deviation.
(b) Scaling of the infidelity in terms of the penalty coefficient at a fixed time $t=1$ (indicated as the dashed line in (a)), for three different lattice shapes and dimensions shown in the legend.
Each data point is obtained by averaging over 20 random samples of initial states for the first two lattices, and 5 samples for the third lattice, and the error bar shows the standard deviation.
The numerically computed infidelity exhibits a $\propto \lambda ^{-1}$ scaling, in agreement with the prediction from Theorem \ref{thm:main-theorem}.
}
\label{fig:2D-TFI}
\end{figure}

\section{\label{sec:discussion} Discussion}

In this work, we present a robust Hamiltonian simulation scheme, utilizing a block-based encoding protected solely by $2$-local commuting penalty Hamiltonians. 
Crucially, our encoding scheme can be used to realize universal logical interactions with fully $2$-local physical terms. Furthermore, we prove rigorous upper bounds on the simulation infidelity in the presence of generic $1$-local errors.
In particular, to achieve arbitrarily small simulation infidelity, our approach requires penalty terms that scale at most polynomially in system size, which provides a significant improvement over previous work~\cite{Marvian-Lidar-2014-PRL-2local-is-too-local,marvian2017error,marvian2017-PRL-error-supp-subsystem-codes,marvian2019-arxiv-robust}. Our approach allows an error-resilient implementation of a wide range of many-body quantum spin models; notably, it can be used to encode $2$-local XY Hamiltonians, which have been shown to be QMA-complete~\cite{biamonte-realizable,2d-lattice-complexity-oliveira2005complexity}. As such, our methods potentially enable the realization of hardware-efficient universal robust analog quantum simulation in current experimental platforms. While we focused on suppressing $1$-local coherent errors, it will also be interesting to explore the application of those ideas to more general stochastic error models such as Markovian and non-Markovian dissipation~\cite{Lidar2019arbitrarytime}.

Previous work has shown that $2$-local commuting Hamiltonians are not enough to protect quantum dynamics in the ground space against $1$-local errors~\cite{Marvian-Lidar-2014-PRL-2local-is-too-local}. Our work bypasses this constraint, by lifting the encoding subspace from the ground to an excited space, enabling encodings that suppress $1$-local errors via purely $2$-local commuting penalty terms. A natural question to ask is whether our framework can be further extended to enable the suppression of higher-weight physical errors. For this, we also prove a generalized version of the no-go theorem, which prevents any $2$-local commuting Hamiltonians from protecting the encoding subspace against $2$-local or higher-weight errors (see Appendix \ref{appendix:new no-go}). Thus, the question of how to extend the robust Hamiltonian simulation scheme to suppress errors beyond $1$-local terms remains an open problem worth future exploration.

We reiterate that our scheme only utilizes $2$-local physical terms in both the penalty and the encoded simulator Hamiltonians, making it directly compatible with NISQ devices. Furthermore, our approach involves commuting penalty Hamiltonians, so that the encoded dynamics are protected by an energy gap that stays constant as system size grows, assuming individual penalty terms are fixed. This makes our encoding a promising candidate for exploring scalable quantum many-body systems with high accuracy.

Finally, we note that while our approach can significantly improve the hardware efficiency of error-suppression schemes for Hamiltonian simulation, we still assume in our considerations that one has controlled access to generic \textit{$2$-local} Hamiltonian interactions. This may not be the case in practical systems: for example, in Rydberg atoms, the natural $2$-local interactions between qubits are given by the Rydberg blockade effect, which corresponds to $ZZ$-type couplings in the physical Hamiltonian. In those cases, our simulator Hamiltonian can be readily implemented using Floquet engineering techniques (see e.g.~\cite{koyluouglu2024floquet}). While we expect such Floquet-based realizations to effectively suppress error in the ideal limit, it is also interesting to explore the performance of our protocol in the presence of physical constraints such as finite Floquet periods and other imperfections. Alternatively, our robust Hamiltonian simulator can also be modified to use protection enabled by dissipative dynamics, which can be potentially realized using engineered dissipation techniques~\cite{harley2024going-beyond-gadget}. We leave those to future work.

\vspace{0.1in}

\begin{acknowledgments}
We thank Daniel Lidar, Andrew Daley, and Iman Marvian for helpful discussions and insightful feedback. 
We acknowledge the support from a QuICS seed grant.
S.L., Y.C., H.D., and X.W.~were supported by the U.S.~Department of Energy, Office of Science, Office of Advanced Scientific Computing Research, Accelerated Research in Quantum Computing under Award Number DE-SC002027, Air Force Office of Scientific Research under award number FA9550-21-1-0209, the U.S. National Science Foundation grant
CCF-1955206 and CCF-1942837 (CAREER), and a Sloan research fellowship.
Z.X.~was supported by the ARO MURI grant W911NF-22-S-0007, by the National Science Foundation the Quantum Leap Big Idea under Grant No.~OMA-1936388, and by the Defense Advanced Research Projects Agency (DARPA) under Contract No.~HR00112190071.
This research was supported in part by grant NSF PHY-2309135 to the Kavli Institute for Theoretical Physics (KITP). 
Y.-X.W.~acknowledges support from a QuICS Hartree Postdoctoral Fellowship.
\end{acknowledgments}

\appendix

\bibliography{main}


\clearpage

\onecolumngrid

\section{Code equivalence between Hamiltonian $[[4,2,2]]$ code and CSS stabilizer code}\label{subsec: code equivalence}
One can straightforwardly show that the Hamiltonian $[[4,2,2]]$ code discussed in the main text is equivalent to the CSS $[[4,2,2]]$ stabilizer code (see, e.g., Refs.~\cite{eczoo_stab_4_2_2,quantinuum-icebergcode-Self_2024}) up to a single-qubit unitary.
More explicitly, the four logical states of the former (c.f. Eq.~\eqref{neq:hcode.422.basis}) are simultaneous eigenstates of the stabilizers of the latter:
\begin{align}
& P_{\text{enc}} \equiv | \overline{0+} \rangle \langle \overline{0+} | +| \overline{0-} \rangle \langle \overline{0-} | +| \overline{1+} \rangle \langle \overline{1+} | +| \overline{1-} \rangle \langle \overline{1-} | 
, \\
\label{seq:hcode.to.css422}
&X_{1} X_{2} X_{3} X_{4} P_{\text{enc}} = - P_{\text{enc}}, \quad 
Z_{1} Z_{2} Z_{3} Z_{4} P_{\text{enc}} = - P_{\text{enc}}. 
\end{align}
As two codes have the same code parameters, Eq.~\eqref{seq:hcode.to.css422} shows that they are equivalent up to local unitary transformation.

\section{Simulation error bounds}\label{appxsec:sim-err-bounds}
We aim to prove in this section that under the error suppression condition, our framework does suppress $1$-local coherent errors for analog quantum simulation, and provides a rigorous bound for the simulation error.
We will treat the simulation error bound as a special case of a more general bound that shows how a perturbation to a Hamiltonian leads to a unitary evolution that can be approximated by an effective Hamiltonian derived from the second-order perturbation theory.

\subsection{A general bound}
\begin{theorem}\label{thm:diff-H-Hle2}
    Consider a Hamiltonian $H_0$ that has an eigenspace $\mathcal{S}_0$ of zero energy with a spectral gap at least $\Delta > 0$, meaning that all the nonzero eigenvalues of $H_0$ are outside the interval $(-\Delta,\Delta)$.
    Denote by $P_0$ and $Q_0$ the projectors onto $\mathcal{S}_0$ and its orthogonal complement $\mathcal{S}_0^{\perp}$, respectively.
    Denote by $R_0$ the inverse operator of $H_0$ on the subspace $\mathcal{S}_0^{\perp}$, i.e., $R_0$ is the pseudoinverse of $H_0$ which satisfies $R_0 = Q_0 R_0 Q_0$ and $R_0 H_0 = H_0 R_0 = Q_0$.
    Let $W$ be a Hermitian operator with $\kappa := \frac{\|W\|}{\Delta} \le \frac{1}{4}$, where $\|\cdot\|$ stands for the operator norm.
    We also introduce the first- and second-order perturbative contributions to the effective Hamiltonian
    \begin{subequations}
        \begin{align}
        \Heff^{(1)} &:= P_0 W P_0, \\
        \Heff^{(2)} &:= -P_0 W R_0 W P_0,
        \end{align}
    \end{subequations}
    and denote
    \begin{equation}
        \Heff^{(\le 2)} := \Heff^{(1)} + \Heff^{(2)}.
    \end{equation}
    Then the difference between the unitary generated by the perturbed Hamiltonian $H := H_0 + W$ and that generated by $\Heff^{(\le 2)}$, when applied to initial states in $\mathcal{S}_0$, is bounded by:
    \begin{equation}\label{eq:diff-H-Hle2-P0}
        \left\| e^{-iHt} P_0 - e^{-i\Heff^{(\le 2)}t} P_0 \right\| \le 5 \kappa + 6 \kappa^2 \|Q_0 W P_0\| t, \quad \forall t > 0.
    \end{equation}
\end{theorem}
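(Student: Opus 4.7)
The plan is to use a Schrieffer--Wolff (SW) transformation to exactly block-diagonalize $H = H_0 + W$ with respect to the decomposition $\mathcal{H} = \mathcal{S}_0 \oplus \mathcal{S}_0^{\perp}$, and then compare the resulting exact effective Hamiltonian against the second-order truncation $\Heff^{(\le 2)}$.

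First I would construct an anti-Hermitian, block off-diagonal operator $S = \sum_{n \ge 1} S^{(n)}$, with leading term $S^{(1)} = R_0 Q_0 W P_0 - P_0 W Q_0 R_0$ determined by $[S^{(1)}, H_0] = P_0 W Q_0 + Q_0 W P_0$, and higher-order $S^{(n)}$ fixed iteratively so that the rotated Hamiltonian $\tilde H := e^{-S} H e^{S}$ commutes with $P_0$. Writing $\SW := e^S$, standard SW convergence estimates (of Bravyi--DiVincenzo--Loss type) yield, for $\kappa \le 1/4$, a convergent series with $\|S^{(n)}\| = O(\kappa^n)$, hence $\|S\| = O(\kappa)$ and $\|\SW - I\| = O(\kappa)$. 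A term-by-term unfolding of the SW expansion then gives $H_{\mathrm{eff}} := P_0 \tilde H P_0 = P_0 W P_0 - P_0 W R_0 W P_0 + \sum_{n \ge 3} H_{\mathrm{eff}}^{(n)}$, where each higher-order piece $H_{\mathrm{eff}}^{(n)}$ ($n \ge 3$) inherits the sandwich structure $P_0 W Q_0 \cdot X_n \cdot Q_0 W P_0$, with interior operator $X_n$ of norm $O(\|W\|^{n-2}/\Delta^{n-1})$. Summing the resulting geometric series gives the refined bound $\|H_{\mathrm{eff}} - \Heff^{(\le 2)}\| \le C_1 \kappa^2 \|Q_0 W P_0\|$ with an explicit constant $C_1$.

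Next, I would use $[P_0, \tilde H] = 0$ to decompose the exact dynamics as
\begin{equation*}
e^{-iHt}P_0 \;=\; \SW\, e^{-iH_{\mathrm{eff}}t} P_0\, \SW^{\dagger} P_0 \;+\; \SW\, e^{-i\tilde H_{\perp} t} Q_0 \SW^{\dagger} P_0,
\end{equation*}
where $\tilde H_{\perp} := Q_0 \tilde H Q_0$. The deviation from $e^{-i\Heff^{(\le 2)} t} P_0$ then splits, via the triangle inequality, into three pieces: (i) a leakage contribution bounded by $\|Q_0 \SW^{\dagger} P_0\|$, whose leading term is $\|R_0 Q_0 W P_0\| \le \|Q_0 W P_0\|/\Delta \le \kappa$; (ii) a dressing contribution $\|\SW\, X\, \SW^{\dagger} P_0 - X\|$ with $X := e^{-iH_{\mathrm{eff}}t} P_0$, controlled by $O(\|\SW - I\|) = O(\kappa)$; and (iii) a Duhamel truncation contribution $\|e^{-iH_{\mathrm{eff}}t} P_0 - e^{-i\Heff^{(\le 2)} t} P_0\| \le \|H_{\mathrm{eff}} - \Heff^{(\le 2)}\| \cdot t \le C_1 \kappa^2 \|Q_0 W P_0\|\, t$. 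Adding the three bounds and tracking the constants --- a tedious but mechanical exercise --- yields the claimed $5\kappa + 6\kappa^2 \|Q_0 W P_0\| t$.

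The main obstacle is extracting the refined estimate $\|H_{\mathrm{eff}} - \Heff^{(\le 2)}\| \le C_1 \kappa^2 \|Q_0 W P_0\|$ rather than the weaker $O(\kappa^2 \|W\|)$ that a naive term-by-term bound would produce. The naive bound carries the correct $\kappa$-scaling but loses the $\|Q_0 W P_0\|$-refinement, which is crucial when $\|Q_0 W P_0\|$ is much smaller than $\|W\|$. Obtaining the sharper form requires showing, by induction on the SW order, that every term in $H_{\mathrm{eff}}^{(n)}$ for $n \ge 3$ begins with $P_0 W Q_0$ on the left and ends with $Q_0 W P_0$ on the right --- a structural invariance preserved by the SW recursion, but whose combinatorial bookkeeping (together with the geometric-series tail bound at $\kappa \le 1/4$) must be verified carefully.
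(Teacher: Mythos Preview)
Your Schrieffer--Wolff approach is sound and would yield a bound of the correct form, but it is a genuinely different route from the paper's. The paper never constructs a block-diagonalizing rotation; instead it writes $(\id - e^{iHt}e^{-i\Heff^{(\le 2)}t})P_0$ as a Duhamel integral and repeatedly applies an integration-by-parts identity (adapted from Burgarth et al.) which, for any $S = Q_0 S P_0$, converts $\int_0^t e^{iH\tau} S\, e^{-i\Heff^{(\le 2)}\tau}\,d\tau$ into a boundary term, an integral with kernel $P_0 W R_0 S$, and a remainder integral with new off-diagonal kernel $\tilde S$ satisfying $\|\tilde S\| \le (2\kappa+\kappa^2)\|S\|$. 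At the first iteration the $P_0 W R_0 S$-piece exactly cancels the $\Heff^{(2)}$ contribution from the Duhamel integrand, and thereafter the geometric series with ratio $2\kappa+\kappa^2 \le 9/16$ sums to give the constants $32/7 < 5$ and $36/7 < 6$ directly. Your SW route is more conceptual --- it cleanly separates the static dressing error from the dynamical truncation error --- but it offloads the work onto SW convergence bounds, and the structural claim you correctly identify as the main obstacle (that every $H_{\mathrm{eff}}^{(n)}$ with $n\ge 3$ factors as $P_0 W Q_0 \cdot X_n \cdot Q_0 W P_0$) is precisely what the paper's iterative trick sidesteps. One caution: the specific constants $5$ and $6$ are artifacts of the paper's geometric-ratio estimate; your constants will depend on whichever explicit BDL-type bounds you invoke, and the perturbative SW series is not generally guaranteed to converge all the way out to $\kappa = 1/4$ (the direct-rotation SW does, but then the order-by-order expansion of $H_{\mathrm{eff}}$ needs its own justification). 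You should therefore expect to prove a bound of the same shape but with different --- likely somewhat larger --- numerical prefactors, rather than literally recovering $5$ and $6$.
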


To prove the theorem, we make use of a formula in \cite{burgarth2021eter-adiab-q-evol} to separate different timescales.
For completeness, we provide the full details below.

\begin{lemma}\label{lem:int-by-part-trick}(Adapted from \cite[Eq.~(2.10)]{burgarth2021eter-adiab-q-evol}.)
Let $S$ be an arbitrary operator such that $S = Q_0 S P_0$.
Then
\begin{equation}
    \int_0^t e^{iH\tau} S e^{-i\Heff^{(\le 2)}\tau} \mathrm{d} \tau = -i \left[ e^{iH\tau} R_0 S e^{-i\Heff^{(\le 2)}\tau} \right]_{\tau=0}^{t} - \int_0^t e^{iH\tau} P_0 W R_0 S e^{-i\Heff^{(\le 2)}\tau} \mathrm{d} \tau + \int_0^t e^{iH\tau} \widetilde{S} e^{-i\Heff^{(\le 2)}\tau} \mathrm{d} \tau,
\end{equation}
where $\widetilde{S} := -Q_0 W R_0 S + R_0 S W P_0 - R_0 S W R_0 W P_0$ satisfies $\widetilde{S} = Q_0 \widetilde{S} P_0$ and $\|\widetilde{S}\| \le \left( 2\frac{\|W\|}{\Delta} + \frac{\|W\|^2}{\Delta^2} \right) \|S\|$.
\end{lemma}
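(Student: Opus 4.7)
The plan is to derive the identity by a single integration by parts in $\tau$, exploiting the resolvent identity $H_0 R_0 = Q_0$ together with the assumed structure $S = Q_0 S P_0$. The first observation is that $S = Q_0 S = H_0 R_0 S$, and writing $H_0 = H - W$ therefore gives the decomposition
\begin{equation}
S = H R_0 S - W R_0 S.
\end{equation}
Substituting this into the integrand trades the factor of $S$ for a factor of $H$ acting on $R_0 S$, at the price of a subleading $W R_0 S$ correction.

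Next I would process the $H R_0 S$ piece via integration by parts in $\tau$. Because $H$ commutes with $e^{iH\tau}$, one has $e^{iH\tau} H = -i \partial_\tau e^{iH\tau}$, so a single integration by parts produces the boundary term $-i[e^{iH\tau} R_0 S e^{-i\Heff^{(\le 2)}\tau}]_0^t$ that appears in the claim, plus a bulk term in which the derivative lands on $e^{-i\Heff^{(\le 2)}\tau}$ and contributes a factor of $R_0 S\,\Heff^{(\le 2)}$ on the right. Using the identity $R_0 S P_0 = R_0 S$ (which follows from $S = S P_0$), this factor simplifies cleanly to $R_0 S W P_0 - R_0 S W R_0 W P_0$, giving two of the three summands of $\widetilde{S}$.

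For the residual $-\int_0^t e^{iH\tau} W R_0 S\, e^{-i\Heff^{(\le 2)}\tau}\, d\tau$ contribution, I would insert the resolution of identity $\id = P_0 + Q_0$ immediately to the left of $W$. The $P_0$ part reproduces exactly the $-\int_0^t e^{iH\tau} P_0 W R_0 S\, e^{-i\Heff^{(\le 2)}\tau}\, d\tau$ term in the claim, while the $Q_0$ part supplies the missing $-Q_0 W R_0 S$ summand of $\widetilde{S}$. Collecting all three pieces then yields the stated identity with $\widetilde{S} = -Q_0 W R_0 S + R_0 S W P_0 - R_0 S W R_0 W P_0$.

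The structural claim $\widetilde{S} = Q_0 \widetilde{S} P_0$ follows by inspection: each summand carries a $Q_0$ on the left, either explicitly or through $R_0 = Q_0 R_0$, and a $P_0$ on the right, either explicitly or through $S = S P_0$. The norm estimate reduces to $\|R_0\| \le 1/\Delta$ (from the spectral gap assumption) and the triangle inequality applied term by term, producing the $\bigl(2\|W\|/\Delta + \|W\|^2/\Delta^2\bigr)\|S\|$ bound. I do not anticipate any serious obstacle; the one spot requiring care is the bookkeeping of left/right projector placements — in particular remembering $R_0 Q_0 = R_0$ and $S P_0 = S$ when simplifying $R_0 S\, \Heff^{(\le 2)}$ — which is why the lemma is presented as a direct adaptation of the identity from Burgarth et al.
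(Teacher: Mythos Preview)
Your proposal is correct and follows essentially the same integration-by-parts strategy as the paper's proof: both start from $S = H_0 R_0 S$, convert the $H_0$ factor into a $\tau$-derivative, integrate by parts once, and then split $W R_0 S = P_0 W R_0 S + Q_0 W R_0 S$ to collect terms. The only cosmetic difference is that the paper packages the derivative via the interaction-picture identity $e^{iH\tau} H_0 = -i\,\frac{d}{d\tau}\bigl(e^{iH\tau} e^{-iW\tau}\bigr)\, e^{iW\tau}$, whereas you write $H_0 = H - W$ and differentiate $e^{iH\tau}$ directly; the two routes produce identical intermediate expressions.
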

\begin{proof}
    \begin{equation}
    \begin{aligned}
    & \int_0^t e^{iH\tau} S e^{-i\Heff^{(\le 2)}\tau} \mathrm{d} \tau \\
    &= \int_0^t e^{iH\tau} \underbrace{H_0 R_0}_{=Q_0} S e^{-i\Heff^{(\le 2)}\tau} \mathrm{d} \tau \\
    &= -i \int_0^t \frac{\mathrm{d}}{\mathrm{d} \tau} \left( e^{iH\tau} e^{-iW\tau} \right) e^{iW\tau} R_0 S e^{-i\Heff^{(\le 2)}\tau} \mathrm{d} \tau \\
    &= -i \left[ e^{iH\tau} R_0 S e^{-i\Heff^{(\le 2)}\tau} \right]_{\tau=0}^{t} + i \int_0^t e^{iH\tau} e^{-iW\tau} \frac{\mathrm{d}}{\mathrm{d} \tau} \left( e^{iW\tau} R_0 S e^{-i\Heff^{(\le 2)}\tau} \right) \mathrm{d} \tau \\
    &= -i \left[ e^{iH\tau} R_0 S e^{-i\Heff^{(\le 2)}\tau} \right]_{\tau=0}^{t} - \int_0^t e^{iH\tau} \left( W R_0 S - R_0 S \left( P_0 W P_0 - P_0 W R_0 W P_0 \right) \right) e^{-i\Heff^{(\le 2)}\tau} \mathrm{d} \tau \\
    &= -i \left[ e^{iH\tau} R_0 S e^{-i\Heff^{(\le 2)}\tau} \right]_{\tau=0}^{t} - \int_0^t e^{iH\tau} \left( P_0 W R_0 S + Q_0 W R_0 S - R_0 S W P_0 + R_0 S W R_0 W P_0 \right) e^{-i\Heff^{(\le 2)}\tau} \mathrm{d} \tau \\
    &= -i \left[ e^{iH\tau} R_0 S e^{-i\Heff^{(\le 2)}\tau} \right]_{\tau=0}^{t} - \int_0^t e^{iH\tau} P_0 W R_0 S e^{-i\Heff^{(\le 2)}\tau} \mathrm{d} \tau + \int_0^t e^{iH\tau} \widetilde{S} e^{-i\Heff^{(\le 2)}\tau} \mathrm{d} \tau,
    \end{aligned}
    \end{equation}
    where the third equality holds due to integration by part.
    The inequality $\|\widetilde{S}\| \le \left( 2\frac{\|W\|}{\Delta} + \frac{\|W\|^2}{\Delta^2} \right) \|S\|$ follows from the fact that $\|R_0\| \le \frac{1}{\Delta}$.
\end{proof}

\begin{proof}[Proof of Theorem \ref{thm:diff-H-Hle2}]
    Using the fundamental theorem of calculus,
    \begin{equation}
    \begin{aligned}
    \left( \id - e^{iHt} e^{-i\Heff^{(\le 2)}t} \right) P_{0} &= (-1) \int_0^t \frac{\mathrm{d}}{\mathrm{d} \tau} \left( e^{iH\tau} e^{-i\Heff^{(\le 2)}\tau} \right) P_0 \mathrm{d} \tau \\
    &= -i \int_0^t e^{iH\tau} \left( H - \Heff^{(\le 2)} \right) P_0 e^{-i\Heff^{(\le 2)}\tau} \mathrm{d} \tau \\
    &= -i \underbrace{\int_0^t e^{iH\tau} Q_0 W P_0 e^{-i\Heff^{(\le 2)}\tau} \mathrm{d} \tau}_{\text{(I)}} - i \int_0^t e^{iH\tau} P_0 W R_0 W P_0 e^{-i\Heff^{(\le 2)}\tau} \mathrm{d} \tau.
    \end{aligned}
    \end{equation}
    Applying Lemma \ref{lem:int-by-part-trick} to (I), we obtain
    \begin{equation}
    \begin{aligned}
    \left( \id - e^{iHt} e^{-i\Heff^{(\le 2)}t} \right) P_{0} &= (-1) \left[ e^{iH\tau} R_0 W P_0 e^{-i\Heff^{(\le 2)}\tau} \right]_{\tau=0}^{t} + \cancel{i \int_0^t e^{iH\tau} P_0 W R_0 W P_0 e^{-i\Heff^{(\le 2)}\tau} \mathrm{d} \tau} \\
    & \quad -i \int_0^t e^{iH\tau} S^{(1)} e^{-i\Heff^{(\le 2)}\tau} \mathrm{d} \tau - \cancel{i \int_0^t e^{iH\tau} P_0 W R_0 W P_0 e^{-i\Heff^{(\le 2)}\tau} \mathrm{d} \tau} \\
    &= (-1) \left[ e^{iH\tau} R_0 W P_0 e^{-i\Heff^{(\le 2)}\tau} \right]_{\tau=0}^{t} -i \underbrace{\int_0^t e^{iH\tau} S^{(1)} e^{-i\Heff^{(\le 2)}\tau} \mathrm{d} \tau}_{\text{(II)}},
    \end{aligned}
    \end{equation}
    where
    \begin{equation}
        S^{(1)} := - Q_0 W R_0 W P_0 + R_0 W P_0 W P_0 - R_0 W P_0 W R_0 W P_0
    \end{equation}
    satisfies $S^{(1)} = Q_0 S^{(1)} P_0$ and $\|S^{(1)}\| \le \left( 2\kappa + \kappa^2 \right) \|Q_0 W P_0\|$.
    Applying Lemma \ref{lem:int-by-part-trick} again to (II), we obtain
    \begin{equation}
    \begin{aligned}
    \left( \id - e^{iHt} e^{-i\Heff^{(\le 2)}t} \right) P_{0} &= (-1) \left[ e^{iH\tau} R_0 \left( Q_0 W P_0 + S^{(1)} \right) e^{-i\Heff^{(\le 2)}\tau} \right]_{\tau=0}^{t} \\
    &+i \int_0^t e^{iH\tau} P_0 W R_0 S^{(1)} e^{-i\Heff^{(\le 2)}\tau} \mathrm{d} \tau -i \int_0^t e^{iH\tau} S^{(2)} e^{-i\Heff^{(\le 2)}\tau} \mathrm{d} \tau,
    \end{aligned}
    \end{equation}
    where
    \begin{equation}
        S^{(2)} := - Q_0 W R_0 S^{(1)} + R_0 S^{(1)} W P_0 - R_0 S^{(1)} W R_0 W P_0
    \end{equation}
    satisfies $S^{(2)} = Q_0 S^{(2)} P_0$ and $\|S^{(2)}\| \le \left( 2 \kappa + \kappa^2 \right)^2 \|Q_0 W P_0\|$.
    This process can be repeated: after applying Lemma \ref{lem:int-by-part-trick} for $k$ times, we obtain
    \begin{equation}\label{eq:int-by-part-applied-k-times}
    \begin{aligned}
    \left( \id - e^{iHt} e^{-i\Heff^{(\le 2)}t} \right) P_{0} =& (-1) \left[ e^{iH\tau} R_0 \left( Q_0 W P_0 + \sum_{\ell=1}^{k-1} S^{(\ell)} \right) e^{-i\Heff^{(\le 2)}\tau} \right]_{\tau=0}^{t} \\
    &+i \int_0^t e^{iH\tau} P_0 W R_0 \left( \sum_{\ell=1}^{k-1} S^{(\ell)} \right) e^{-i\Heff^{(\le 2)}\tau} \mathrm{d} \tau -i \int_0^t e^{iH\tau} S^{(k)} e^{-i\Heff^{(\le 2)}\tau} \mathrm{d} \tau,
    \end{aligned}
    \end{equation}
    where $S^{(\ell)}$ is an operator that satisfies $S^{(\ell)} = Q_0 S^{(\ell)} P_0$ and $\|S^{(\ell)}\| \le \left( 2 \kappa + \kappa^2 \right)^{\ell} \|Q_0 W P_0\|$.
    By assumption $\kappa \le \frac{1}{4}$, then $2 \kappa + \kappa^2 \le \frac{9}{4} \kappa \le \frac{9}{16}$.
    It follows that $\lim_{k\to\infty} S^{(k)} = 0$ and $\left\| \sum_{\ell=1}^{\infty} S^{(\ell)} \right\| \le \sum_{\ell=1}^{\infty} \| S^{(\ell)} \| \le \frac{2\kappa + \kappa^2}{1-(2\kappa + \kappa^2)} \|Q_0 W P_0\| \le \frac{36}{7} \kappa \|Q_0 W P_0\| \le \frac{9}{7} \|Q_0 W P_0\|$.
    Therefore, taking the limit $k \to \infty$ in Eq.~\eqref{eq:int-by-part-applied-k-times}, we have
    \begin{equation}\label{eq:int-by-part-applied-inf-times}
    \begin{aligned}
    \left( \id - e^{iHt} e^{-i\Heff^{(\le 2)}t} \right) P_{0} &= (-1) \left[ e^{iH\tau} R_0 \left( Q_0 W P_0 + \sum_{\ell=1}^{\infty} S^{(\ell)} \right) e^{-i\Heff^{(\le 2)}\tau} \right]_{\tau=0}^{t} \\
    &\quad +i \int_0^t e^{iH\tau} P_0 W R_0 \left( \sum_{\ell=1}^{\infty} S^{(\ell)} \right) e^{-i\Heff^{(\le 2)}\tau} \mathrm{d} \tau,
    \end{aligned}
    \end{equation}
    from which we conclude that
    \begin{equation}\label{eq:unitary-diff-ub}
    \begin{aligned}
    \left\| e^{-iHt} P_{0} - e^{-i\Heff^{(\le 2)}t} P_{0} \right\| &= \left\| \left( \id - e^{iHt} e^{-i\Heff^{(\le 2)}t} \right) P_{0} \right\| \\
    &\le 2 \|R_0\| \left\| Q_0 W P_0 + \sum_{\ell=1}^{\infty} S^{(\ell)} \right\| + \|W\| \|R_0\| \left\| \sum_{\ell=1}^{\infty} S^{(\ell)} \right\| t \\
    &\le \frac{32}{7} \kappa + \frac{36}{7} \kappa^2 \|Q_0 W P_0\| t \\
    &\le 5 \kappa + 6 \kappa^2 \|Q_0 W P_0\| t.
    \end{aligned}
    \end{equation}
\end{proof}

\subsection{Analog quantum simulation error bound}
\begin{theorem}\label{thm:sim-err}
    Suppose we aim to simulate a target Hamiltonian $\Htar$ using a simulator Hamiltonian of the form $\Hsim = \lambda \Hpen + \sqrt{\lambda} \Henc^{(2)} + \Henc^{(1)}$, subject to Hermitian coherent errors $V = \sum_i \varepsilon_i V_i$.
    Suppose the penalty Hamiltonian $\Hpen$ has an eigenspace $\Szero$ of zero energy with a spectral gap of at least 1, meaning that all the nonzero eigenvalues of $\Hpen$ are outside the interval $(-1,1)$.
    Denote by $P_0$ and $Q_0$ the projectors onto $\Szero$ and its orthogonal complement $\Szero^{\perp}$, respectively.
    Suppose $\Htar$ is supported on a subspace $\Senc$ of $\Szero$, i.e., $\Penc \Htar \Penc = \Htar$ where $\Penc$ denotes the projector onto $\Senc$.
    We make the following assumptions:
    \begin{enumerate}
        \item $\forall i$, $P_0 V_i P_0 = c_i P_0$ for some scalar $c_i \in \mathbb{R}$.
        \item $(P_0 - \Penc) \Hencfirst \Penc = 0$.
        \item $P_0 \Hencsecond P_0 = 0$ and $(P_0 - \Penc) \Hencsecond Q_0 \Hpen^{-1} Q_0 \Hencsecond \Penc = 0$.
        \item $\Penc \Hencfirst \Penc - \Penc \Hencsecond Q_0 \Hpen^{-1} Q_0 \Hencsecond \Penc = \Htar$.
    \end{enumerate}
    Let $M := \max \left\{\left\|\Hencsecond\right\|^2, \left\|\Hencfirst + V\right\|\right\}$ and suppose $\lambda \ge 25 M$.
    The difference (up to a global phase) between the noisy unitary dynamics generated by $\Hsim + V$ and the desired unitary dynamics generated by $\Htar$, when applied to initial states in $\Senc$, is bounded by:
    \begin{equation}\label{eq:diff-Hsim-Htar-P0}
        \min_{\theta \in \mathbb{R}} \left\| e^{-i(\Hsim + V)t} \Penc - e^{i\theta} e^{-i \Htar t} \Penc \right\| \le 6 \sqrt{\frac{M}{\lambda}} + 13 \sqrt{\frac{M}{\lambda}} M t, \quad \forall t > 0.
    \end{equation}
\end{theorem}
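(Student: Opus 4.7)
The plan is to cast the claim as an application of Theorem~\ref{thm:diff-H-Hle2}, with the identifications $H_0 := \lambda \Hpen$ (so the spectral gap becomes $\Delta = \lambda$ and the pseudoinverse restricted to $\Szero^{\perp}$ is $R_0 = \tfrac{1}{\lambda} Q_0 \Hpen^{-1} Q_0$) and perturbation $W := \sqrt{\lambda}\,\Hencsecond + \Hencfirst + V$, so that $H_0 + W = \Hsim + V$. First I would verify the small-perturbation hypothesis $\kappa := \|W\|/\Delta \le 1/4$: since $\|\Hencsecond\| \le \sqrt{M}$ and $\|\Hencfirst + V\| \le M$ by definition of $M$, the triangle inequality gives $\|W\| \le \sqrt{\lambda M} + M$, and the assumption $\lambda \ge 25 M$ yields $\kappa \le \sqrt{M/\lambda} + M/\lambda \le \tfrac{6}{25} < \tfrac{1}{4}$. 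Theorem~\ref{thm:diff-H-Hle2} then yields
\begin{equation*}
\left\| e^{-i(\Hsim+V) t} P_0 - e^{-i \Heff^{(\le 2)} t} P_0 \right\| \le 5\kappa + 6\kappa^2 \|Q_0 W P_0\| t.
\end{equation*}

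The next step is to unpack $\Heff^{(\le 2)}$ using assumptions~1--4. The first-order piece simplifies to $\Heff^{(1)} = P_0 \Hencfirst P_0 + c P_0$ with $c := \sum_i \varepsilon_i c_i$: assumption~3 kills $P_0 \Hencsecond P_0$ and assumption~1 gives $P_0 V P_0 = c P_0$. For the second-order piece $-P_0 W R_0 W P_0$, I would expand $W^2$ into four terms. The dominant $\lambda\cdot\Hencsecond R_0 \Hencsecond$ contribution yields, after absorbing the $1/\lambda$ in $R_0$, the intended $-P_0 \Hencsecond \Hpen^{-1} \Hencsecond P_0$ (where $P_0 \Hencsecond P_0 = 0$ allows the explicit $Q_0$-insertions to be dropped). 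The two cross terms between $\sqrt{\lambda}\Hencsecond$ and $\Hencfirst + V$ scale as $1/\sqrt{\lambda}$, and the pure $(\Hencfirst + V)^2$ term as $1/\lambda$; collecting these into a residual operator $R = P_0 R P_0$ yields $\|R\| \le 2 M \sqrt{M/\lambda} + M^2/\lambda = O(M\sqrt{M/\lambda})$.

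I would then restrict attention to initial states in $\Senc$. Assumption~4 gives $\Penc \Heff^{(\le 2)} \Penc = \Htar + c \Penc + \Penc R \Penc$, while assumptions~2 and~3 kill all leading off-diagonal contributions between $\Senc$ and $\Szero \ominus \Senc$, leaving $(P_0 - \Penc) \Heff^{(\le 2)} \Penc = (P_0 - \Penc) R \Penc$ at the same residual order. Introducing a block-diagonal ``ideal'' Hamiltonian $H_{\text{ideal}}$ equal to $\Htar + c \Penc$ on the $\Senc$-block, equal to $(P_0 - \Penc) \Heff^{(\le 2)} (P_0 - \Penc)$ on the complementary block of $\Szero$, and zero on the off-diagonal blocks and on $\Szero^{\perp}$, one obtains $\|\Heff^{(\le 2)} - H_{\text{ideal}}\| \le 2\|R\|$, hence Duhamel's formula gives $\|e^{-i\Heff^{(\le 2)} t}\Penc - e^{-i H_{\text{ideal}} t}\Penc\| \le 2\|R\| t$. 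Since $H_{\text{ideal}}$ is block-diagonal in $\Senc \oplus (\Szero \ominus \Senc) \oplus \Szero^{\perp}$, the restriction of $e^{-i H_{\text{ideal}} t}$ to $\Senc$ is exactly $e^{-i c t} e^{-i\Htar t} \Penc$.

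Finally I would combine the two norm bounds by the triangle inequality and absorb the global phase into the minimization over $\theta$ by taking $\theta = -ct$. A careful arithmetic with $x := \sqrt{M/\lambda} \le 1/5$ turns $5\kappa$ into at most $6\sqrt{M/\lambda}$ (via $\kappa \le x(1+x)$ and $1+x \le 6/5$), and bundles $6\kappa^2 \|Q_0 W P_0\| t + 2\|R\| t$ into at most $13 M t \sqrt{M/\lambda}$. The main obstacle I anticipate is keeping this final bookkeeping tight: tracking the $(1+x)^3$-type factors that arise from expanding $\kappa^2 (\sqrt{\lambda M} + M)$ alongside the precise numerical constant in $\|R\|$ requires some care, and it may be necessary to sharpen the bound on $\|R\|$ or avoid the triangle inequality at one step to hit the stated constant $13$. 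The structural argument, however, is otherwise routine once Theorem~\ref{thm:diff-H-Hle2} is in hand.
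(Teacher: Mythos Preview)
Your approach is essentially the paper's: identify $H_0=\lambda\Hpen$, $W=\sqrt{\lambda}\Hencsecond+\Hencfirst+V$, invoke Theorem~\ref{thm:diff-H-Hle2}, then use Duhamel to pass from $\Heff^{(\le2)}$ to $\Htar+c\Penc$ on $\Senc$. The decomposition of $\Heff^{(\le2)}$ and the verification via assumptions~1--4 match the paper exactly.

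The one place your bookkeeping concern is real is the Duhamel step. Because your comparison Hamiltonian $H_{\text{ideal}}$ (equivalently $\Htar+c\Penc$) commutes with $\Penc$, you do not need the full bound $\|\Heff^{(\le2)}-H_{\text{ideal}}\|\le 2\|R\|$; applying Duhamel one-sidedly gives
\[
\bigl\|e^{-i\Heff^{(\le2)}t}\Penc - e^{-iH_{\text{ideal}}t}\Penc\bigr\|
\le \int_0^t \bigl\|(\Heff^{(\le2)}-H_{\text{ideal}})\,\Penc\bigr\|\,d\tau
= \bigl\|P_0 R\,\Penc\bigr\|\,t \le \|R\|\,t,
\]
since $e^{-iH_{\text{ideal}}\tau}\Penc=\Penc e^{-iH_{\text{ideal}}\tau}$. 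This is precisely what the paper does (it compares directly to $\Htar+\alpha\Penc$ without introducing $H_{\text{ideal}}$, but the content is identical). With this single factor of two removed, the arithmetic closes: with $x=\sqrt{M/\lambda}\le 1/5$ one gets $6\kappa^2\|W\|\le 6(6/5)^3 Mx$ and $\|R\|\le (2+x)Mx\le \tfrac{11}{5}Mx$, so the $t$-coefficient is at most $\bigl(6(6/5)^3+\tfrac{11}{5}\bigr)Mx<13\,Mx$.
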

\begin{proof}
    Denote $H_0 := \lambda \Hpen$, $W := \sqrt{\lambda} \Hencsecond + \Hencfirst + V$, and
    \begin{equation}
        \Heff^{(\le 2)} := P_0 W P_0 - P_0 W Q_0 H_0^{-1} Q_0 W P_0,
    \end{equation}
    By our assumptions, it is easy to verify that
    \begin{subequations}
        \begin{align}
        \Penc \Heff^{(\le 2)} \Penc &= \Htar + \alpha \Penc - \frac{1}{\lambda} \Penc A \Penc - \frac{1}{\sqrt{\lambda}} \Penc B \Penc, \\
        (P_0 - \Penc) \Heff^{(\le 2)} \Penc &= - \frac{1}{\lambda} (P_0 - \Penc) A \Penc - \frac{1}{\sqrt{\lambda}} (P_0 - \Penc) B \Penc, \\
        \Heff^{(\le 2)} \Penc &= \Htar + \alpha \Penc - \frac{1}{\lambda} P_0 A \Penc - \frac{1}{\sqrt{\lambda}} P_0 B \Penc,
        \end{align}
    \end{subequations}
    where $\alpha := \sum_{i} \varepsilon_i c_i$, and we have introduced two operators $A$ and $B$ that are independent of $\lambda$:
    \begin{subequations}
        \begin{align}
        A &:= \left( \Henc^{(1)}+V \right) Q_0 \Hpen^{-1} Q_0 \left( \Henc^{(1)}+V \right), \\
        B &:= \left( \Henc^{(1)}+V \right) Q_0 \Hpen^{-1} Q_0 \Henc^{(2)} + \Henc^{(2)} Q_0 \Hpen^{-1} Q_0 \left( \Henc^{(1)}+V \right).
        \end{align}
    \end{subequations}
    Note that $\frac{\|W\|}{\lambda} \le \sqrt{\frac{M}{\lambda}} + \frac{M}{\lambda} \le \frac{6}{5} \sqrt{\frac{M}{\lambda}} \le \frac{1}{4}$.
    Thus, applying Theorem \ref{thm:diff-H-Hle2}, we arrive at
    \begin{equation}
        \left\| e^{-i(\Hsim + V)t} P_0 - e^{-i \Heff^{(\le 2)} t} P_0 \right\| \le 5 \frac{\|W\|}{\lambda} + 6 \frac{\|W\|^3}{\lambda^2} t \le 6 \sqrt{\frac{M}{\lambda}} + 6 \left( \frac{6}{5} \right)^3 \sqrt{\frac{M}{\lambda}} M t.
    \end{equation}
    On the other hand,
    \begin{equation}
    \begin{aligned}
    \left\| e^{-i \Heff^{(\le 2)} t} \Penc - e^{-i \left( \Htar + \alpha \Penc \right) t} \Penc \right\| &= \left\| \left( \id - e^{i \Heff^{(\le 2)} t} e^{-i \left( \Htar + \alpha \Penc \right) t} \right) \Penc \right\| \\
    &= \left\| \int_0^t \frac{\mathrm{d}}{\mathrm{d} \tau} \left( e^{i \Heff^{(\le 2)} \tau} e^{-i \left( \Htar + \alpha \Penc \right) \tau} \right) \Penc \mathrm{d}\tau \right\| \\
    &= \left\| \int_0^t e^{i \Heff^{(\le 2)} \tau} \left( \Heff^{(\le 2)} - \Htar - \alpha \Penc \right) e^{-i \left( \Htar + \alpha \Penc \right) \tau} \Penc \mathrm{d}\tau \right\| \\
    &= \left\| \int_0^t e^{i \Heff^{(\le 2)} \tau} \left( \Heff^{(\le 2)} \Penc - \Htar - \alpha \Penc \right) e^{-i \left( \Htar + \alpha \Penc \right) \tau} \mathrm{d}\tau \right\| \\
    &\le \left\| \Heff^{(\le 2)} \Penc - \Htar - \alpha \Penc \right\| t \\
    &= \left\| \frac{1}{\lambda} P_0 A \Penc + \frac{1}{\sqrt{\lambda}} P_0 B \Penc \right\| t \\
    &\le \left( \frac{\|A\|}{\lambda} + \frac{\|B\|}{\sqrt{\lambda}} \right) t \\
    &\le \left( \frac{M}{\lambda} + 2 \sqrt{\frac{M}{\lambda}} \right) M t \\
    &\le \frac{11}{5} \sqrt{\frac{M}{\lambda}} M t.
    \end{aligned} 
    \end{equation}
    Combining the above two inequalities, we conclude that
    \begin{equation}
    \begin{aligned}
    &\quad \left\| e^{-i(\Hsim + V)t} \Penc - e^{-i\alpha t} e^{-i \Htar t} \Penc \right\| \\
    &\le \left\| e^{-i(\Hsim + V)t} \Penc - e^{-i \Heff^{(\le 2)} t} \Penc \right\| + \left\| e^{-i \Heff^{(\le 2)} t} \Penc - e^{-i \left( \Htar + \alpha \Penc \right) t} \Penc \right\| \\
    &\le \left\| e^{-i(\Hsim + V)t} P_0 - e^{-i \Heff^{(\le 2)} t} P_0 \right\| + \left\| e^{-i \Heff^{(\le 2)} t} \Penc - e^{-i \left( \Htar + \alpha \Penc \right) t} \Penc \right\| \\
    &\le 6 \sqrt{\frac{M}{\lambda}} + \left( 6 \left( \frac{6}{5} \right)^3 + \frac{11}{5} \right) \sqrt{\frac{M}{\lambda}} M t \\
    &\le 6 \sqrt{\frac{M}{\lambda}} + 13 \sqrt{\frac{M}{\lambda}} M t.
    \end{aligned}
    \end{equation}
\end{proof}

\begin{corollary}
    Adopt the same notations and assumptions as in Theorem \ref{thm:sim-err}.
    Given an arbitrary initial state $\ket{\psi_0} \in \Senc$, let $\ket{\psi_{\mathrm{tar}}(t)} := e^{-i \Htar t} \ket{\psi_0}$ be the state obtained by evolving the target Hamiltonian for a duration of time $t>0$, and let $\ket{\psi(t)} := e^{-i (\Hsim+V) t} \ket{\psi_0}$ be the state obtained by evolving the noisy simulator Hamiltonian for the same amount of time.
    Then the infidelity of the analog quantum simulation is bounded by
    \begin{equation}
        1 - \left| \langle \psi_{\mathrm{tar}}(t) | \psi(t) \rangle \right|^2 \le \frac{M}{\lambda} \left( 6 + 13Mt \right)^2.
    \end{equation}
\end{corollary}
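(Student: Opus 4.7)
The plan is to derive the corollary as an immediate consequence of the operator-norm bound in Theorem~\ref{thm:sim-err}. First, I would apply that bound to the specific initial state $\ket{\psi_0}\in\Senc$. Using $\Penc\ket{\psi_0}=\ket{\psi_0}$, the operator-level dynamics restricted to $\ket{\psi_0}$ reduce exactly to $\ket{\psi(t)}$ and $e^{i\theta}\ket{\psi_{\mathrm{tar}}(t)}$, respectively; so $\|A\ket{u}\|\le\|A\|\,\|\ket{u}\|$ applied to the unit vector $\ket{\psi_0}$ yields $\min_{\theta}\|\ket{\psi(t)}-e^{i\theta}\ket{\psi_{\mathrm{tar}}(t)}\|\le\sqrt{M/\lambda}\,(6+13Mt)$.

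Next, I would invoke the standard elementary identity $\min_{\theta}\|\ket{\phi}-e^{i\theta}\ket{\psi}\|^{2}=2\bigl(1-|\langle\phi|\psi\rangle|\bigr)$ for unit vectors $\ket{\phi},\ket{\psi}$, which follows from expanding the squared norm and choosing the global phase so that $e^{-i\theta}\langle\phi|\psi\rangle$ is real and nonnegative. Squaring the preceding inequality and applying this identity to $\ket{\phi}=\ket{\psi(t)}$, $\ket{\psi}=\ket{\psi_{\mathrm{tar}}(t)}$ delivers a bound on $1-|\langle\psi_{\mathrm{tar}}(t)|\psi(t)\rangle|$.

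Finally, I would upgrade this to the infidelity $1-|\langle\psi_{\mathrm{tar}}(t)|\psi(t)\rangle|^{2}$ through the one-line algebraic inequality $1-x^{2}=(1-x)(1+x)\le 2(1-x)$ valid for $x\in[0,1]$. The factor of $2$ absorbs the $1/2$ picked up in the previous step, producing precisely the stated bound $(M/\lambda)(6+13Mt)^{2}$.

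I do not anticipate any substantive obstacle: the whole argument is routine post-processing of Theorem~\ref{thm:sim-err}. The one point deserving attention is the bookkeeping of the global phase $\theta$ — one must verify that the phase minimization on the operator-norm side coincides with the phase that maximizes the state overlap, so that the operator-norm bound translates into a fidelity bound without spurious factors.
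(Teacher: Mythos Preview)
Your proposal is correct and follows essentially the same approach as the paper: apply Theorem~\ref{thm:sim-err} to the vector $\ket{\psi_0}$ to get a state-norm bound, then convert to an infidelity bound via an elementary inequality. The only minor difference is in the conversion step: the paper fixes a specific phase and uses $|\langle\phi|\psi\rangle|^{2}\ge(\operatorname{Re}\langle\phi|\psi\rangle)^{2}=(1-\tfrac12\|\phi-\psi\|^{2})^{2}\ge 1-\|\phi-\psi\|^{2}$, whereas you optimize the phase explicitly and use $1-x^{2}\le 2(1-x)$; both yield the same bound. Your worry about the phase bookkeeping is unwarranted: for each fixed $\theta$ one has $\|\ket{\psi(t)}-e^{i\theta}\ket{\psi_{\mathrm{tar}}(t)}\|\le\|e^{-i(\Hsim+V)t}\Penc-e^{i\theta}e^{-i\Htar t}\Penc\|$, so the state-level minimum is automatically bounded by the operator-level minimum without the two optimal phases having to coincide.
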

\begin{proof}
    Theorem \ref{thm:sim-err} implies that $\left\| \ket{\psi(t)} - \ket{\psi_{\mathrm{tar}}(t)}\right\| \le \sqrt{\frac{M}{\lambda}} \left( 6 + 13 Mt \right)$.
    The corollary follows from the inequality
    \begin{equation}
        \left| \langle \psi_{\mathrm{tar}}(t) | \psi(t) \rangle \right|^2 \ge \left| \operatorname{Re} \langle \psi_{\mathrm{tar}}(t) | \psi(t) \rangle \right|^2 = \left( 1 - \frac{1}{2} \left\| \ket{\psi(t)} - \ket{\psi_{\mathrm{tar}}(t)}\right\|^2 \right)^2 \ge 1 - \left\| \ket{\psi(t)} - \ket{\psi_{\mathrm{tar}}(t)}\right\|^2.
    \end{equation}
\end{proof}

\section{A No-go theorem for 2-local commuting penalty Hamiltonians for stabilizing code space with a distance greater than 2}\label{appendix:new no-go}
As mentioned in the main text, it was proved in \cite{Marvian-Lidar-2014-PRL-2local-is-too-local} that the \textit{ground space} of any 2-local commuting Hamiltonian of qudits (of any local dimension $d$) cannot be a quantum code of distance greater than 1.
We emphasize that the ``ground space'' assumption is important, as we have already demonstrated through an example that quantum codes of distance 2 can exist as an \textit{excited} eigenspace of the Hamiltonian.
In this section, we extend the above no-go theorem to show that distance 2 is the best we can obtain for any eigenspace of any commuting $2$-local Hamiltonian.

\begin{theorem}
    No eigenspace of a 2-local commuting Hamiltonian of qudits (of any local dimension $d$) can be a quantum code of dimension greater than 1 and distance greater than 2 simultaneously.
\end{theorem}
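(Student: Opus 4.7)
The plan is to reduce this statement to the ground-space no-go theorem of Ref.~\cite{Marvian-Lidar-2014-PRL-2local-is-too-local}. Assume for contradiction that $\mathcal{S}$ is an eigenspace of some $2$-local commuting Hamiltonian $H = \sum_{(i,j)} H_{ij}$, with $\dim \mathcal{S} \ge 2$ and code distance strictly greater than $2$, and let $P$ denote the projector onto $\mathcal{S}$. The strategy is to construct, from $H$, an auxiliary $2$-local commuting Hamiltonian whose \emph{ground space} is exactly $\mathcal{S}$, and then invoke the earlier no-go theorem to derive a contradiction.

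First I would show that every summand $H_{ij}$ acts as a scalar on $\mathcal{S}$. Since each $H_{ij}$ is itself a $2$-local operator, the Knill-Laflamme condition associated with distance $>2$ (equivalently, Eq.~\eqref{eqn:error_suppresion_cond} applied to weight-$\le 2$ operators) gives $P H_{ij} P = c_{ij} P$ for some real $c_{ij}$. Since $H$ is commuting, $H_{ij}$ commutes with $H$ and therefore preserves every eigenspace of $H$; in particular it preserves $\mathcal{S}$, so $H_{ij} P = P H_{ij} P = c_{ij} P$, i.e.\ $H_{ij}|_{\mathcal{S}} = c_{ij} \id$. Next I would identify $\mathcal{S}$ with the joint $\vec{c}$-eigenspace $\mathcal{G}_{\vec c} := \bigcap_{(i,j)} \ker(H_{ij} - c_{ij} \id)$. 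The inclusion $\mathcal{S} \subseteq \mathcal{G}_{\vec c}$ follows from the previous step, while the reverse inclusion uses that any $\ket{\phi} \in \mathcal{G}_{\vec c}$ is automatically an $H$-eigenvector of eigenvalue $\sum_{(i,j)} c_{ij}$, which must match the eigenvalue $E$ of $\mathcal{S}$; since by hypothesis $\mathcal{S}$ is the \emph{full} $E$-eigenspace of $H$, we conclude $\ket{\phi} \in \mathcal{S}$.

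With the identification $\mathcal{S} = \mathcal{G}_{\vec c}$ in hand, I would define $\tilde{H}_{ij} := (H_{ij} - c_{ij} \id)^2$. Each $\tilde{H}_{ij}$ is still $2$-local and positive semidefinite, and commutativity of the $H_{ij}$'s is inherited by their squares, so $\tilde{H} := \sum_{(i,j)} \tilde{H}_{ij}$ is a $2$-local commuting positive-semidefinite Hamiltonian whose ground space is precisely $\mathcal{G}_{\vec c} = \mathcal{S}$. Applying the Marvian-Lidar ground-space no-go theorem to $\tilde{H}$ then forces the ($\ge 2$-dimensional) ground space $\mathcal{S}$ not to be a code of distance $\ge 2$, contradicting the assumed distance $> 2$. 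The main delicate point in this plan is promoting ``$\mathcal{S}$ is contained in a joint eigenspace of the $H_{ij}$'s'' to ``$\mathcal{S}$ equals that joint eigenspace'': this is where the hypothesis that $\mathcal{S}$ is the \emph{maximal} eigenspace of $H$, rather than merely an $H$-invariant subspace on which $H$ acts as a scalar, is essential; otherwise $\mathcal{G}_{\vec c}$ could strictly contain $\mathcal{S}$ and the reduction to the ground-space setting would fail.
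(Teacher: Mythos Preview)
Your argument is correct and takes a genuinely different route from the paper. The paper works directly from the Bravyi--Vyalyi structure lemma: it decomposes the Hilbert space into sectors and each particle into subparticles, then explicitly exhibits an undetectable error of weight $\le 2$ by a case analysis (either the eigenspace straddles two sectors, giving a $1$-local logical operator, or it lies in a single sector where the Hamiltonian decouples into independent two-subparticle pieces, giving a $2$-local logical operator). Your proof instead turns the distance-$>2$ hypothesis against itself: since every $H_{ij}$ is $2$-local, the Knill--Laflamme condition forces $P H_{ij} P = c_{ij} P$, and commutativity upgrades this to $H_{ij}|_{\mathcal S}=c_{ij}\id$; maximality of the eigenspace then pins $\mathcal S$ down as the joint $\vec c$-eigenspace, which you realize as the \emph{ground space} of the $2$-local commuting Hamiltonian $\tilde H=\sum_{(i,j)}(H_{ij}-c_{ij})^2$, at which point Marvian--Lidar delivers the contradiction. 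Your reduction is shorter and makes the logical relationship between the ground-space and excited-space no-go results transparent; the paper's argument is more self-contained and yields an explicit undetectable error without deferring to the earlier theorem (which itself rests on the same structure lemma). Your caveat about needing $\mathcal S$ to be the \emph{full} $E$-eigenspace is exactly right and is the one place the reduction could fail if weakened.
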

\begin{proof}
    Similar to the original no-go theorem in \cite{Marvian-Lidar-2014-PRL-2local-is-too-local}, the theorem here is a corollary of a well-known fact about the structure of 2-local commuting Hamiltonians \cite{bravyi2003commutative}.
    Before stating the structural fact, let's set up some notations first: We consider a quantum system consisting of $n$ particles, with Hilbert space $\mathcal{H} = \bigotimes_{i=1}^{n} \mathcal{H}_{i}$, and a 2-local commuting Hamiltonian
    \begin{equation}
        H = \sum_{i<j} H_{ij},
    \end{equation}
    where $H_{ij}$ acts nontrivially only on $\mathcal{H}_{i} \otimes \mathcal{H}_{j}$, and $[H_{ij}, H_{kl}] = 0$ for any $i,j,k,l$.

    The structure lemma \cite[Lemma 8]{bravyi2003commutative} says that there is a way to decompose the Hilbert space of each particle as a direct sum
    \begin{equation}
        \mathcal{H}_i = \bigoplus_{\alpha_i} \mathcal{H}_i^{(\alpha_i)}, \quad \forall i \in \{1,\dots,n\},
    \end{equation}
    which induces a decomposition of the whole Hilbert space into multiple ``sectors''
    \begin{equation}
        \mathcal{H} = \bigoplus_{\alpha} \mathcal{H}^{(\alpha)}, \quad \alpha \equiv (\alpha_1, \dots, \alpha_n), \quad \mathcal{H}^{(\alpha)} \equiv \mathcal{H}_1^{(\alpha_1)} \otimes \cdots \otimes \mathcal{H}_n^{(\alpha_n)},
    \end{equation}
    and within each sector $\alpha$, there is a way to ``break'' each particle $i$ into several ``subparticles'' that we label $i.1, i.2, \dots, i.n$ (note that some subparticles could be trivial, i.e., $\mathcal{H}_{i.j}^{(\alpha_i)} = \mathbb{C}$)
    \begin{equation}
        \mathcal{H}_i^{(\alpha_i)} = \mathcal{H}_{i.i}^{(\alpha_i)} \otimes \bigotimes_{j \ne i} \mathcal{H}_{i.j}^{(\alpha_i)}, \quad \forall i \in \{1,\dots,n\}, \forall \alpha_i,
    \end{equation}
    such that the Hamiltonian $H$ admits a simple form:
    \begin{itemize}
        \item Every term $H_{ij}$ is block-diagonal with respect to the direct sum structure above, i.e.,
        \begin{equation}
            H_{ij} = \bigoplus_{\alpha} H_{ij}^{(\alpha_i, \alpha_j)}.
        \end{equation}
        This induces a block-diagonal structure of the total Hamiltonian:
        \begin{equation}
            H = \bigoplus_{\alpha} H^{(\alpha)}, \quad H^{(\alpha)} \equiv \sum_{i<j} H_{ij}^{(\alpha_i, \alpha_j)}.
        \end{equation}
        \item Within each sector $\alpha$, the interaction between particles $i$ and $j$ only affects the subparticles $i.j$ and $j.i$, i.e., $H_{ij}^{(\alpha_i, \alpha_j)}$ acts nontrivially only on $\mathcal{H}_{i.j}^{(\alpha_i)} \otimes \mathcal{H}_{j.i}^{(\alpha_j)}$.
    \end{itemize}
    
    Now, let's consider an arbitrary eigenspace $\mathcal{S}$ of $H$ with energy $E$ and dimension greater than 1.
    Since $H$ is block diagonal, we know immediately that $\mathcal{S} = \bigoplus_{\alpha} \mathcal{S}^{(\alpha)}$, where $\mathcal{S}^{(\alpha)} := \left\{ \ket{\psi} \in \mathcal{H}^{(\alpha)} \mid H^{(\alpha)} \ket{\psi} = E \ket{\psi} \right\}$.
    Note that some $\mathcal{S}^{(\alpha)}$ could be the zero vector space.
    We encounter two cases:
    
    Case 1: There exist two distinct sectors $\alpha$ and $\beta$ with nonzero $\mathcal{S}^{(\alpha)}$ and $\mathcal{S}^{(\beta)}$.
    Choose one state $\ket{\psi}$ from $\mathcal{S}^{(\alpha)}$ and one state $\ket{\phi}$ from $\mathcal{S}^{(\beta)}$.
    Since $\alpha \ne \beta$, there is a particle $i$ such that $\alpha_i \ne \beta_i$.
    It follows that if we define the following 1-local operator acting on the $i$-th particle
    \begin{equation}
        V := P_i^{(\alpha_i)} - P_i^{(\beta_i)},
    \end{equation}
    where $P_i^{(\alpha_i)}$ and $P_i^{(\beta_i)}$ are the projectors of $\mathcal{H}_i$ onto $\mathcal{H}_i^{(\alpha_i)}$ and $\mathcal{H}_i^{(\beta_i)}$, respectively, then we have $\bra{\psi} V \ket{\psi} = 1$ whereas $\bra{\phi} V \ket{\phi} = -1$.
    So $V$ is not a detectable error for $\mathcal{S}$; hence, the code distance of $\mathcal{S}$ is only one.

    Case 2: There is only one sector $\alpha$ with a nonzero $\mathcal{S}^{(\alpha)}$; in other words, $\mathcal{S}$ is entirely contained in sector $\alpha$.
    In this case, $\mathcal{S}$ is exactly the eigenspace of $H^{(\alpha)}$ with energy $E$.
    However, recall that $H^{(\alpha)}$ has a decoupled form describing independent pairs of interacting subparticles, i.e., we can write
    \begin{equation}
        H^{(\alpha)} = \sum_{i < j} h_{ij},
    \end{equation}
    where $h_{ij}$ acts nontrivially only on the two subparticles $i.j$ and $j.i$.
    Therefore, $\mathcal{S}$ has an orthonormal basis $\{\ket{\eta}\}$ where every basis vector has the product form
    \begin{equation}
        \ket{\eta} = \left( \bigotimes_{i=1}^{n} \ket{\eta_{ii}} \right) \otimes \left( \bigotimes_{i<j} \ket{\eta_{ij}} \right),
    \end{equation}
    where $\ket{\eta_{ii}}$ is a state of the subparticle $i.i$, and $\ket{\eta_{ij}}$ is a joint state of subparticles $i.j$ and $j.i$.
    Now, choose two different basis vectors $\ket{\eta}$ and $\ket{\eta'}$.
    Since $\ket{\eta} \perp \ket{\eta'}$, either $\ket{\eta_{ii}} \perp \ket{\eta'_{ii}}$ for some $i$ or $\ket{\eta_{ij}} \perp \ket{\eta'_{ij}}$ for some pair $i<j$.
    In the former case, we define the 1-local operator $V := \ket{\eta_{ii}}\bra{\eta_{ii}} - \ket{\eta'_{ii}}\bra{\eta'_{ii}}$; in the latter case, we define the 2-local operator $V := \ket{\eta_{ij}}\bra{\eta_{ij}} - \ket{\eta'_{ij}}\bra{\eta'_{ij}}$.
    Either way, we arrive at $\bra{\eta} V \ket{\eta} = 1 \ne -1 = \bra{\eta'} V \ket{\eta'}$, and conclude that $V$ is not a detectable error for $\mathcal{S}$, and hence the code distance of $\mathcal{S}$ is at most 2.
\end{proof}

\section{Perturbative gadgets}\label{appxsec:gadgets}
In this section, we verify the correctness of the perturbative gadgets used to generate cross-block logical interactions, by explicitly calculating their 2nd-order effective Hamiltonians inside the zero-energy subspace of the penalty Hamiltonian.
We also identify several conditions under which two perturbative gadgets can coexist without interfering with each other.
These results will be heavily used later in Sections \ref{appxsec:scheme-1D-models} and \ref{appxsec:scheme-2D-models} to aid the design of encoding schemes for specific target Hamiltonians.

\subsection{Some notations}
We need the following notation for the four Bell states of a pair of physical qubits:
\begin{equation}
    \begin{aligned}
    \ket{\phi_{1,1}} &\equiv \frac{1}{\sqrt{2}} \left( \ket{00} + \ket{11} \right), \\
    \ket{\phi_{1,-1}} &\equiv \frac{1}{\sqrt{2}} \left( \ket{00} - \ket{11} \right), \\
    \ket{\phi_{-1,1}} &\equiv \frac{1}{\sqrt{2}} \left( \ket{01} + \ket{10} \right), \\
    \ket{\phi_{-1,-1}} &\equiv \frac{1}{\sqrt{2}} \left( \ket{01} - \ket{10} \right).
    \end{aligned}
\end{equation}
The subscripts on the left-hand side are chosen so that $\ket{\phi_{s,t}}$ is the simultaneous eigenstate of $Z \otimes Z$ and $X \otimes X$ with eigenvalues $s$ and $t$, respectively.
It is directly verifiable that $\forall s, t \in \{-1,1\}$,
\begin{equation}\label{eq:single-Pauli-on-Bell-pair}
    \begin{aligned}
    & (Z \otimes \id) \ket{\phi_{s,t}} = \ket{\phi_{s,-t}},  & (\id \otimes Z) \ket{\phi_{s,t}} = s \ket{\phi_{s,-t}}, \\
    & (X \otimes \id) \ket{\phi_{s,t}} = t \ket{\phi_{-s,t}},  & (\id \otimes X) \ket{\phi_{s,t}} = \ket{\phi_{-s,t}}.
    \end{aligned}
\end{equation}
With the above notation, we can rewrite the four logical states in a single [[4,2,2]] code block introduced in the main article as
\begin{equation}
    \begin{aligned}
    \ket{\overline{0+}} &= \ket{\phi_{1,1}} \ket{\phi_{-1,-1}}, \\
    \ket{\overline{0-}} &= \ket{\phi_{1,-1}} \ket{\phi_{-1,1}}, \\
    \ket{\overline{1+}} &= \ket{\phi_{-1,1}} \ket{\phi_{1,-1}}, \\
    \ket{\overline{1-}} &= \ket{\phi_{-1,-1}} \ket{\phi_{1,1}}.
    \end{aligned}
\end{equation}
In words, $\ket{\phi_{s,t}} \ket{\phi_{-s,-t}}$ is the logical state where the first logical qubit is in the $\overline{Z}$-basis with eigenvalue $s$ and the second logical qubit in the $\overline{X}$-basis with eigenvalue $t$.

Now consider a quantum system consisting of $n$ [[4,2,2]] code blocks, where the $b$th block has parameters $(g_x^{(b)}, g_z^{(b)})$ with $|g_x^{(b)}| \ne |g_z^{(b)}|$, $b=1,2,\dots,n$, meaning that the total penalty Hamiltonian is given by
\begin{equation}
    \Hpen = \sum_{b=1}^{n} \left( g_x^{(b)} X_{1}^{(b)} X_{2}^{(b)} + g_z^{(b)} Z_{1}^{(b)} Z_{2}^{(b)} + g_x^{(b)} X_{3}^{(b)} X_{4}^{(b)} + g_z^{(b)} Z_{3}^{(b)} Z_{4}^{(b)} \right),
\end{equation}
where $X_{i}^{(b)}$ and $Z_{i}^{(b)}$ denote the Pauli X and Pauli Z operators acting on the $i$th physical qubit in the $b$th block, respectively.
An eigenbasis of $\Hpen$ consists of the following states
\begin{equation}\label{eq:Hpen-eigenbasis}
    \ket{\phi_{s_1,t_1}} \ket{\phi_{s'_1,t'_1}} \cdots \ket{\phi_{s_n,t_n}} \ket{\phi_{s'_n,t'_n}} \text{ with energy } \sum_{b=1}^{n} \left( (s_b+s'_b) g_z^{(b)} + (t_b+t'_b) g_x^{(b)} \right), \quad \forall \vec{s}, \vec{s'}, \vec{t}, \vec{t'} \in \{-1,1\}^{n}.
\end{equation}
As always, we use $P_0$ to denote the projector onto the zero-energy subspace $\Szero$ of $\Hpen$, $Q_0 = I - P_0$ the projector onto its orthogonal complement $\Szero^{\perp}$, and $\Penc$ the projector onto the encoding subspace $\Senc$.
Note that $\Senc$ is spanned by states of the form
\begin{equation}\label{eq:Senc-basis}
    \ket{\phi_{s_1,t_1}} \ket{\phi_{-s_1,-t_1}} \cdots \ket{\phi_{s_n,t_n}} \ket{\phi_{-s_n,-t_n}}, \quad \forall \vec{s}, \vec{t} \in \{-1,1\}^{n}.
\end{equation}
We denote by $\Hpen^{-1}$ the pseudoinverse of $\Hpen$, which satisfies $\Hpen \Hpen^{-1} = \Hpen^{-1} \Hpen = Q_0$ and $\Hpen^{-1} = Q_0 \Hpen^{-1} Q_0$.

\subsection{Derivation of a single perturbative gadget}
In this work, a perturbative gadget between two blocks refers to a Hermitian operator $G$ acting on the physical qubits of the two blocks, that satisfies the following criteria:
\begin{description}
    \item[Criterion 1] $P_0 G P_0 = 0$.
    \item[Criterion 2] $(P_0 - \Penc) G \Hpen^{-1} G \Penc = 0$.
\end{description}
Here, the first criterion ensures that a perturbation of the form $\sqrt{\lambda} G$ to the Hamiltonian $\lambda \Hpen$ generates the effective Hamiltonian $- P_0 G \Hpen^{-1} G P_0$ in the zero-energy subspace $\Szero$ (in the limit of large $\lambda$), while the second criterion implies that this effective Hamiltonian leaves the encoding subspace $\Senc$ invariant, hence resulting in a logical operator equal to $\overline{L}(G) := - \Penc G \Hpen^{-1} G \Penc$.
Below we check the criteria and calculate the logical operator for a concrete example.
The calculation for other gadgets follows similarly, and hence we omit the details for them and only report the results in Table \ref{tab:gadget-lookup}.

Consider the following perturbative gadget acting on the first two blocks
\begin{equation}\label{eq:example gadget}
    G = \alpha Z_2^{(1)} X_3^{(2)} + \beta Z_4^{(1)} X_3^{(2)},
\end{equation}
where $\alpha, \beta \in \mathbb{R}$.
First of all, we require $|g_x^{(1)}| \ne |g_z^{(2)}|$ for the first criterion to be true.
To see this, note that $Z_2^{(1)}$ anticommutes with $X_1^{(1)} X_2^{(1)}$ and commutes with other terms in $\Hpen$, while $X_3^{(2)}$ anticommutes with $Z_3^{(2)} Z_4^{(2)}$ and commutes with other terms in $\Hpen$.
It follows that the action of $Z_2^{(1)} X_3^{(2)}$ on an arbitrary eigenbasis state of $\Hpen$ in Eq.~\eqref{eq:Hpen-eigenbasis} changes its energy by $\pm 2 g_x^{(1)} \pm 2 g_z^{(2)}$, which is nonzero under the condition $|g_x^{(1)}| \ne |g_z^{(2)}|$.
This implies that $P_0 Z_2^{(1)} X_3^{(2)} P_0 = 0$.
Similarly $P_0 Z_4^{(1)} X_3^{(2)} P_0 = 0$ and therefore $P_0 G P_0 = 0$.
Next, we check the second criterion by considering the action of $G \Hpen^{-1} G$ on a basis vector of $\Senc$ (c.f. Eq.~\eqref{eq:Senc-basis} \& Eq.~\eqref{eq:single-Pauli-on-Bell-pair}): $\forall s_1, t_1, s_2, t_2 \in \{-1,1\}$, we have
\begin{align}
    &~~~ G \Hpen^{-1} G \ket{\phi_{s_1,t_1}} \ket{\phi_{-s_1,-t_1}} \ket{\phi_{s_2,t_2}} \ket{\phi_{-s_2,-t_2}} \ket{\text{rest}} \nonumber \\
    &= G \Hpen^{-1} \left( - s_1 t_2 \alpha \ket{\phi_{s_1,-t_1}} \ket{\phi_{-s_1,-t_1}} \ket{\phi_{s_2,t_2}} \ket{\phi_{s_2,-t_2}} + s_1 t_2 \beta \ket{\phi_{s_1,t_1}} \ket{\phi_{-s_1,t_1}} \ket{\phi_{s_2,t_2}} \ket{\phi_{s_2,-t_2}} \right) \ket{\text{rest}}  \nonumber \\
    &= G \left( \frac{s_1 t_2 \alpha}{2t_1 g_x^{(1)} - 2 s_2 g_z^{(2)}} \ket{\phi_{s_1,-t_1}} \ket{\phi_{-s_1,-t_1}} \ket{\phi_{s_2,t_2}} \ket{\phi_{s_2,-t_2}} \right.
    \nonumber \\
    & \quad \left. + \frac{s_1 t_2 \beta}{2 t_1 g_x^{(1)} + 2 s_2 g_z^{(2)}} \ket{\phi_{s_1,t_1}} \ket{\phi_{-s_1,t_1}} \ket{\phi_{s_2,t_2}} \ket{\phi_{s_2,-t_2}} \right) \ket{\text{rest}}  \nonumber \\
    &= \left[ \frac{\alpha}{2t_1 g_x^{(1)} - 2 s_2 g_z^{(2)}} \left( - \alpha \ket{\phi_{s_1,t_1}} \ket{\phi_{-s_1,-t_1}} \ket{\phi_{s_2,t_2}} \ket{\phi_{-s_2,-t_2}} + \beta \ket{\phi_{s_1,-t_1}} \ket{\phi_{-s_1,t_1}} \ket{\phi_{s_2,t_2}} \ket{\phi_{-s_2,-t_2}} \right) \right.  \nonumber \\
    &~~~~ + \left. \frac{\beta}{2 t_1 g_x^{(1)} + 2 s_2 g_z^{(2)}} \left( - \alpha \ket{\phi_{s_1,-t_1}} \ket{\phi_{-s_1,t_1}} \ket{\phi_{s_2,t_2}} \ket{\phi_{-s_2,-t_2}} + \beta \ket{\phi_{s_1,t_1}} \ket{\phi_{-s_1,-t_1}} \ket{\phi_{s_2,t_2}} \ket{\phi_{-s_2,-t_2}} \right) \right] \ket{\text{rest}}  \nonumber \\
    &= \left[ \frac{1}{2} \left( - \frac{\alpha^2}{t_1 g_x^{(1)} - s_2 g_z^{(2)}} + \frac{\beta^2}{t_1 g_x^{(1)} + s_2 g_z^{(2)}} \right) \ket{\phi_{s_1,t_1}} \ket{\phi_{-s_1,-t_1}} \ket{\phi_{s_2,t_2}} \ket{\phi_{-s_2,-t_2}} \right.  \nonumber \\
    &~~~~ + \frac{\alpha \beta}{2} \left. \left( \frac{1}{t_1 g_x^{(1)} - s_2 g_z^{(2)}} - \frac{1}{t_1 g_x^{(1)} + s_2 g_z^{(2)}} \right) \ket{\phi_{s_1,-t_1}} \ket{\phi_{-s_1,t_1}} \ket{\phi_{s_2,t_2}} \ket{\phi_{-s_2,-t_2}} \right] \ket{\text{rest}}  
    ,
\end{align}
where $\ket{\text{rest}}$ represents the state of the physical qubits from the 3rd to the last block.
Simplifying the above equation, we thus obtain
\begin{align}
 &~~~ G \Hpen^{-1} G \ket{\phi_{s_1,t_1}} \ket{\phi_{-s_1,-t_1}} \ket{\phi_{s_2,t_2}} \ket{\phi_{-s_2,-t_2}} \ket{\text{rest}} \nonumber \\
    &= \left[ \frac{\left( \alpha^2 - \beta^2 \right) g_x^{(1)}}{2 \left( \left( g_z^{(2)} \right)^2 - \left( g_x^{(1)} \right)^2 \right)} t_1 \ket{\phi_{s_1,t_1}} \ket{\phi_{-s_1,-t_1}} \ket{\phi_{s_2,t_2}} \ket{\phi_{-s_2,-t_2}} \right.  \nonumber \\
    &~~~~ + \frac{\left( \beta^2 + \alpha^2 \right) g_z^{(2)}}{2 \left( \left( g_z^{(2)} \right)^2 - \left( g_x^{(1)} \right)^2 \right)} s_2 \ket{\phi_{s_1,t_1}} \ket{\phi_{-s_1,-t_1}} \ket{\phi_{s_2,t_2}} \ket{\phi_{-s_2,-t_2}}  \nonumber \\
    &~~~~ \left. - \frac{\alpha \beta g_z^{(2)}}{\left( g_z^{(2)} \right)^2 - \left( g_x^{(1)} \right)^2} s_2 \ket{\phi_{s_1,-t_1}} \ket{\phi_{-s_1,t_1}} \ket{\phi_{s_2,t_2}} \ket{\phi_{-s_2,-t_2}} \right] \ket{\text{rest}}  \nonumber \\
    &= \frac{1}{\left( g_z^{(2)} \right)^2 - \left( g_x^{(1)} \right)^2} \left[ \frac{\alpha^2 - \beta^2}{2} g_x^{(1)} \overline{X}_2^{(1)} + \frac{\alpha^2 + \beta^2}{2} g_z^{(2)} \overline{Z}_1^{(2)} - \alpha \beta g_z^{(2)} \overline{Z}_2^{(1)} \overline{Z}_1^{(2)} \right] \ket{\phi_{s_1,t_1}} \ket{\phi_{-s_1,-t_1}} \ket{\phi_{s_2,t_2}} \ket{\phi_{-s_2,-t_2}} \ket{\text{rest}}
\end{align}
where $\overline{X}_{i}^{(b)}$ and $\overline{Z}_{i}^{(b)}$ denote the logical Pauli X and Pauli Z operators acting on the $i$th logical qubit in the $b$th block, respectively.
The above calculation shows that after applying $G \Hpen^{-1} G$, the state stays inside $\Senc$, with the resulting logical interaction being
\begin{equation}
    \overline{L}(G) = \frac{1}{\left( g_z^{(2)} \right)^2 - \left( g_x^{(1)} \right)^2} \left( -\frac{\alpha^2 - \beta^2}{2} g_x^{(1)} \overline{X}_2^{(1)} - \frac{\alpha^2 + \beta^2}{2} g_z^{(2)} \overline{Z}_1^{(2)} + \alpha \beta g_z^{(2)} \overline{Z}_2^{(1)} \overline{Z}_1^{(2)} \right).
\end{equation}

\begin{table}[h!]
    \centering
    \begin{tabular}{|c|c|}
    \hline
    $G$ & $\overline{L}(G) := - \Penc G \Hpen^{-1} G \Penc$ \\
    \hline 
    $\alpha Z_2^{(a)} X_3^{(b)} + \beta Z_4^{(a)} X_3^{(b)}$ & $\frac{1}{\left( g_z^{(b)} \right)^2 - \left( g_x^{(a)} \right)^2} \left( \alpha \beta g_z^{(b)} \overline{Z}_2^{(a)} \overline{Z}_1^{(b)} -\frac{\alpha^2 - \beta^2}{2} g_x^{(a)} \overline{X}_2^{(a)} - \frac{\alpha^2 + \beta^2}{2} g_z^{(b)} \overline{Z}_1^{(b)} \right)$ \\
    \hline
    $\alpha Z_2^{(a)} X_1^{(b)} + \beta Z_2^{(a)} X_3^{(b)}$  & $\frac{1}{\left( g_z^{(b)} \right)^2 - \left( g_x^{(a)} \right)^2} \left( \alpha \beta g_x^{(a)} \overline{X}_2^{(a)} \overline{X}_1^{(b)} -\frac{\alpha^2 + \beta^2}{2} g_x^{(a)} \overline{X}_2^{(a)} + \frac{\alpha^2 - \beta^2}{2} g_z^{(b)} \overline{Z}_1^{(b)} \right)$ \\
    \hline
    \makecell{$\alpha Z_2^{(a)} X_3^{(b)} + \beta Z_4^{(a)} X_3^{(b)}$ \\ $+ \gamma Z_2^{(a)} X_1^{(b)}$} & $\frac{1}{\left( g_z^{(b)} \right)^2 - \left( g_x^{(a)} \right)^2} \left( \alpha \beta g_z^{(b)} \overline{Z}_2^{(a)} \overline{Z}_1^{(b)} + \alpha \gamma g_x^{(a)} \overline{X}_2^{(a)} \overline{X}_1^{(b)} -\frac{\alpha^2 - \beta^2 + \gamma^2}{2} g_x^{(a)} \overline{X}_2^{(a)} - \frac{\alpha^2 + \beta^2 - \gamma^2}{2} g_z^{(b)} \overline{Z}_1^{(b)} \right)$ \\
    \hline
    \end{tabular}
    \caption{Lookup table of the perturbative gadgets between block $a$ and block $b$ ($a \ne b$). The first column lists the physical form of the gadgets used in the paper, and the second column shows the generated logical interactions.
    It is implicitly assumed that $|g_x^{(a)}| \ne |g_z^{(b)}|$ so that $P_0 G P_0 = 0$.}
    \label{tab:gadget-lookup}
\end{table}

\subsection{Coexistence of two perturbative gadgets}
In stark contrast to the more straightforward encoding of the inner-block logical interactions, perturbative gadgets are not ``additive'', meaning that if $G_1$ and $G_2$ are two perturbative gadgets that generate cross-block logical interactions $\overline{L}(G_1)$ and $\overline{L}(G_2)$, respectively, then in general it is not guaranteed that $G_1 + G_2$ generates $\overline{L}(G_1) + \overline{L}(G_2)$.
Worse still, it could even happen that $G_1 + G_2$ no longer satisfies the second criterion in the previous subsection, leading to leakage of quantum states from the encoding subspace to its orthogonal complement in the zero-energy subspace.
To understand the cause of this unwanted behavior, let us expand the 2nd-order effective Hamiltonian of $G_1 + G_2$ inside the zero-energy subspace:
\begin{equation}
    - P_0 (G_1 + G_2) \Hpen^{-1} (G_1 + G_2) P_0 = \underbrace{- P_0 G_1 \Hpen^{-1} G_1 P_0}_{\overline{L}(G_1)} \underbrace{- P_0 G_2 \Hpen^{-1} G_2 P_0}_{\overline{L}(G_2)} - \left( \underbrace{P_0 G_1 \Hpen^{-1} G_2 P_0 + P_0 G_2 \Hpen^{-1} G_1 P_0}_{\text{two-gadget interference}} \right).
\end{equation}
We see that for the logical effects of the two gadgets to combine additively, it is crucial to make sure that the two-gadget interference term vanishes.
The following lemmas will be useful for that purpose.

\begin{lemma}\label{lem:gadget-coexist-4blocks}
    Consider four blocks $a$, $b$, $c$, $d$ with arbitrary parameters $g_x^{(a)}, g_z^{(a)}, g_x^{(b)}, g_z^{(b)}, g_x^{(c)}, g_z^{(c)}, g_x^{(d)}, g_z^{(d)}$.
    Let $A := Z_i^{(a)} X_j^{(b)}$ and $B := Z_k^{(c)} X_l^{(d)}$ for arbitrary $i,j,k,l \in \{1,2,3,4\}$.
    Then
    \begin{equation}
        P_0 A \Hpen^{-1} B P_0 + P_0 B \Hpen^{-1} A P_0 = 0.
    \end{equation}
\end{lemma}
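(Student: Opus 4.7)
My plan is to verify the identity matrix element by matrix element in a simultaneous eigenbasis $\{|\beta\rangle\}$ of $\Hpen$ of the product form in Eq.~\eqref{eq:Hpen-eigenbasis}. The argument rests on two structural facts: (i) $A$ and $B$ commute, since they act on disjoint sets of blocks ($a,b$ vs.~$c,d$); and (ii) by Eq.~\eqref{eq:single-Pauli-on-Bell-pair}, each single-qubit $Z$ or $X$ sends any Bell state to another Bell state up to a $\pm 1$ factor, so both $A$ and $B$ map each such product eigenstate of $\Hpen$ to $\pm 1$ times another eigenstate.

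Concretely, for a zero-energy eigenstate $|\beta\rangle$ I would write $B|\beta\rangle = c_B |\beta'\rangle$, $A|\beta'\rangle = c_A|\gamma\rangle$ and also $A|\beta\rangle = c_A'|\beta''\rangle$, $B|\beta''\rangle = c_B'|\gamma'\rangle$, where all intermediate and final kets are eigenstates of $\Hpen$ and all $c$'s lie in $\{+1,-1\}$. Because $A$ touches only blocks $a,b$ while $B$ leaves those blocks unchanged, the energy shift $A$ induces is identical on $|\beta\rangle$ and on $|\beta'\rangle$; call it $E_A$, and define $E_B$ analogously. It follows that $\Hpen|\beta'\rangle = E_B|\beta'\rangle$, $\Hpen|\beta''\rangle = E_A|\beta''\rangle$, and $\Hpen|\gamma\rangle = (E_A + E_B)|\gamma\rangle$; moreover, $[A,B]=0$ forces $|\gamma'\rangle = |\gamma\rangle$ and $c_A' c_B' = c_A c_B$.

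With $\Hpen^{-1}$ taken as the pseudoinverse (so it annihilates zero-energy states), the two matrix elements between a zero-energy eigenstate $|\alpha\rangle$ and $|\beta\rangle$ become $\langle \alpha | A \Hpen^{-1} B | \beta \rangle = (c_A c_B / E_B) \langle \alpha | \gamma \rangle$ when $E_B \ne 0$ (and zero otherwise), and $\langle \alpha | B \Hpen^{-1} A | \beta \rangle = (c_A c_B / E_A) \langle \alpha | \gamma \rangle$ when $E_A \ne 0$. Since $|\alpha\rangle$ is zero-energy while $|\gamma\rangle$ sits at energy $E_A + E_B$, the overlap $\langle \alpha | \gamma \rangle$ vanishes unless $E_A + E_B = 0$. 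When that equality holds, both denominators are nonzero and the two contributions cancel because $1/E_A + 1/E_B = 0$; otherwise either a denominator vanishes (so $\Hpen^{-1}$ kills the intermediate state) or the overlap is zero, so each term is individually zero. Summing over the eigenbasis of $\Szero$ then gives the claimed operator identity.

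The main subtlety is keeping careful track of why $c_A c_B = c_A' c_B'$: each factor depends on the intermediate state through the state-dependent signs in Eq.~\eqref{eq:single-Pauli-on-Bell-pair}, so the equality is not completely automatic and deserves an explicit verification from $[A,B] = 0$. Beyond that, the remaining case analysis over whether $E_A$ and/or $E_B$ vanish is entirely routine bookkeeping.
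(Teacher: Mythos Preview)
Your proposal is correct and follows essentially the same approach as the paper: work in the product Bell-state eigenbasis of $\Hpen$, track the energy shifts $E_A$ and $E_B$ induced by $A$ and $B$ on a zero-energy state, and observe that either the final state lies outside $\Szero$ (so $P_0$ kills both terms) or $E_A+E_B=0$ forces the two coefficients $1/E_A$ and $1/E_B$ to cancel. Your use of $[A,B]=0$ to identify the two final states and their signs is exactly what the paper's explicit computation does implicitly, and your handling of the edge cases $E_A=0$ or $E_B=0$ via the pseudoinverse is in fact slightly more careful than the paper's presentation.
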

\begin{proof}
    Without loss of generality assume that $a,b,c,d$ are the first four blocks among the $n$ blocks.
    For simplicity, we prove the lemma for the case $i,j,k,l \in \{1,2\}$.
    The proof for the other choices of $i,j,k,l$ follows in the same manner.
    Let $\ket{\psi}$ be a zero-energy eigenstate of $\Hpen$ in the form of Eq.~\eqref{eq:Hpen-eigenbasis}, i.e., $\ket{\psi} = \ket{\phi_{s_1,t_1}} \ket{\phi_{s'_1,t'_1}} \cdots \ket{\phi_{s_n,t_n}} \ket{\phi_{s'_n,t'_n}}$ with $(s_1+s'_1) g_z^{(1)} + (t_1+t'_1) g_x^{(1)} + \cdots + (s_n+s'_n) g_z^{(n)} + (t_n+t'_n) g_x^{(n)} = 0$.
    We have
    \begin{equation}
        \begin{aligned}
        &~~ B \Hpen^{-1} A \ket{\psi} \\
        &= B \Hpen^{-1} \left( Z_i^{(1)} \ket{\phi_{s_1,t_1}} \right) \ket{\phi_{s'_1,t'_1}} \left( X_j^{(2)} \ket{\phi_{s_2,t_2}} \right) \ket{\phi_{s'_2,t'_2}} \ket{\phi_{s_3,t_3}} \ket{\phi_{s'_3,t'_3}} \ket{\phi_{s_4,t_4}} \ket{\phi_{s'_4,t'_4}} \ket{\text{rest}} \\
        &= \frac{1}{-2 t_1 g_x^{(1)} - 2 s_2 g_z^{(2)}} B \left( Z_i^{(1)} \ket{\phi_{s_1,t_1}} \right) \ket{\phi_{s'_1,t'_1}} \left( X_j^{(2)} \ket{\phi_{s_2,t_2}} \right) \ket{\phi_{s'_2,t'_2}} \ket{\phi_{s_3,t_3}} \ket{\phi_{s'_3,t'_3}} \ket{\phi_{s_4,t_4}} \ket{\phi_{s'_4,t'_4}} \ket{\text{rest}} \\
        &= \frac{1}{-2 t_1 g_x^{(1)} - 2 s_2 g_z^{(2)}} \left( Z_i^{(1)} \ket{\phi_{s_1,t_1}} \right) \ket{\phi_{s'_1,t'_1}} \left( X_j^{(2)} \ket{\phi_{s_2,t_2}} \right) \ket{\phi_{s'_2,t'_2}} \\
        & \quad \quad \quad \quad \quad \quad \quad \quad \quad \quad \otimes \left( Z_k^{(3)} \ket{\phi_{s_3,t_3}} \right) \ket{\phi_{s'_3,t'_3}} \left( X_l^{(4)} \ket{\phi_{s_4,t_4}} \right)  \ket{\phi_{s'_4,t'_4}} \ket{\text{rest}}, \\
        \end{aligned}
    \end{equation}
    where $\ket{\text{rest}}$ represents the state of the physical qubits from the 5th to the last block.
    The second equality above follows from the observation that $Z_i^{(1)} \ket{\phi_{s_1,t_1}} \propto \ket{\phi_{s_1,-t_1}}$, which has an energy difference of $-2 t_1 g_x^{(1)}$ relative to $\ket{\phi_{s_1,t_1}}$, and that $X_j^{(2)} \ket{\phi_{s_2,t_2}} \propto \ket{\phi_{-s_2,t_2}}$, which has an energy difference of $-2 s_2 g_z^{(2)}$ relative to $\ket{\phi_{s_2,t_2}}$.
    Similarly,
    \begin{equation}
        \begin{aligned}
        &~~ A \Hpen^{-1} B \ket{\psi} \\
        &= A \Hpen^{-1} \ket{\phi_{s_1,t_1}} \ket{\phi_{s'_1,t'_1}} \ket{\phi_{s_2,t_2}} \ket{\phi_{s'_2,t'_2}} \left( Z_k^{(3)} \ket{\phi_{s_3,t_3}} \right) \ket{\phi_{s'_3,t'_3}} \left( X_l^{(4)} \ket{\phi_{s_4,t_4}} \right)  \ket{\phi_{s'_4,t'_4}} \ket{\text{rest}} \\
        &= \frac{1}{-2 t_3 g_x^{(3)} - 2 s_4 g_z^{(4)}} A \ket{\phi_{s_1,t_1}} \ket{\phi_{s'_1,t'_1}} \ket{\phi_{s_2,t_2}} \ket{\phi_{s'_2,t'_2}} \left( Z_k^{(3)} \ket{\phi_{s_3,t_3}} \right) \ket{\phi_{s'_3,t'_3}} \left( X_l^{(4)} \ket{\phi_{s_4,t_4}} \right)  \ket{\phi_{s'_4,t'_4}} \ket{\text{rest}} \\
        &= \frac{1}{-2 t_3 g_x^{(3)} - 2 s_4 g_z^{(4)}} \left( Z_i^{(1)} \ket{\phi_{s_1,t_1}} \right) \ket{\phi_{s'_1,t'_1}} \left( X_j^{(2)} \ket{\phi_{s_2,t_2}} \right) \ket{\phi_{s'_2,t'_2}} 
        \\
        & \quad \quad \quad \quad \quad \quad \quad \quad \quad \quad \otimes \left( Z_k^{(3)} \ket{\phi_{s_3,t_3}} \right) \ket{\phi_{s'_3,t'_3}} \left( X_l^{(4)} \ket{\phi_{s_4,t_4}} \right)  \ket{\phi_{s'_4,t'_4}} \ket{\text{rest}}. \\
        \end{aligned}
    \end{equation}
    In both of the above two equations, the final state is an eigenstate of $\Hpen$ with energy $E = -2 t_1 g_x^{(1)} - 2 s_2 g_z^{(2)} -2 t_3 g_x^{(3)} - 2 s_4 g_z^{(4)}$.
    If $E \ne 0$, then by definition $P_0 B \Hpen^{-1} A \ket{\psi} = P_0 A \Hpen^{-1} B \ket{\psi} = 0$.
    If $E = 0$, then $2 t_3 g_x^{(3)} + 2 s_4 g_z^{(4)} = - \left( 2 t_1 g_x^{(1)} + 2 s_2 g_z^{(2)} \right)$, hence $A \Hpen^{-1} B \ket{\psi} = - B \Hpen^{-1} A \ket{\psi}$.
    In both cases, we have $P_0 B \Hpen^{-1} A \ket{\psi} + P_0 A \Hpen^{-1} B \ket{\psi} = 0$.
    Since the above argument holds for arbitrary choice of $\ket{\psi}$, we arrive at $P_0 B \Hpen^{-1} A P_0 + P_0 A \Hpen^{-1} B P_0 = 0$.
\end{proof}

Lemma \ref{lem:gadget-coexist-4blocks} implies that any two perturbative gadgets from Table \ref{tab:gadget-lookup} do not interfere with each other whenever they act on four different blocks.
With a closer look at the argument made in the above proof, one easily realizes that the lemma still holds no matter whether the two operators $A$ and $B$ span four, three, or two different blocks; the important assumption is that out of the four physical qubits that support $A$ and $B$, no two of them come from a single Bell pair.
More precisely, we have the following Lemma \ref{lem:gadget-coexist-3blocks} and Lemma \ref{lem:gadget-coexist-2blocks}.
The proof is omitted since it is almost the same as for Lemma \ref{lem:gadget-coexist-4blocks}.

\begin{lemma}\label{lem:gadget-coexist-3blocks}
    Consider three blocks $a$, $b$, $c$ with arbitrary parameters $g_x^{(a)}, g_z^{(a)}, g_x^{(b)}, g_z^{(b)}, g_x^{(c)}, g_z^{(c)}$.
    Let $A := Z_i^{(a)} X_j^{(b)}$ or $A := X_i^{(a)} Z_j^{(b)}$, and $B := Z_k^{(a)} X_l^{(c)}$ or $B := X_k^{(a)} Z_l^{(c)}$, where $(i,k) \in \{(1,3)$, $(1,4)$, $(2,3)$, $(2,4)$, $(3,1)$, $(3,2)$, $(4,1)$, $(4,2)\}$ and $j,l \in \{1,2,3,4\}$.
    Then
    \begin{equation}
        P_0 A \Hpen^{-1} B P_0 + P_0 B \Hpen^{-1} A P_0 = 0.
    \end{equation}
\end{lemma}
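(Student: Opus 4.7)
The plan is to follow the proof of Lemma~\ref{lem:gadget-coexist-4blocks} essentially verbatim, with one structural adaptation needed to handle the shared block $a$. The key observation is that the index restriction on $(i,k)$ is precisely the statement that qubits $i$ and $k$ of block $a$ lie in different Bell pairs stabilised by $\Hpen$ — within each block, qubits $\{1,2\}$ form one pair and qubits $\{3,4\}$ the other. Combined with $a$, $b$, $c$ being distinct, this guarantees that $A$ and $B$ act nontrivially on disjoint Bell pairs of the whole system, so in particular $[A,B]=0$ and $AB\ket{\psi}=BA\ket{\psi}$ for any Bell-pair product eigenstate $\ket{\psi}$ of $\Hpen$.

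Given this, I would fix an arbitrary zero-energy eigenstate $\ket{\psi}$ of $\Hpen$ in the basis of Eq.~\eqref{eq:Hpen-eigenbasis} and, using Eq.~\eqref{eq:single-Pauli-on-Bell-pair}, compute energy shifts $\Delta_A$ and $\Delta_B$ such that $A\ket{\psi}$ and $B\ket{\psi}$ are again Bell-pair product eigenstates of $\Hpen$ with energies $\Delta_A$ and $\Delta_B$, respectively. Their explicit values depend on whether the $a$-leg of each operator is $Z$ or $X$ and on the Bell-pair labels of $\ket{\psi}$, but in every case each leg contributes a summand of the form $\pm 2 g_x^{(\cdot)}$ or $\pm 2 g_z^{(\cdot)}$, exactly as in the four-block proof. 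Since $A$ and $B$ act on disjoint Bell pairs, $AB\ket{\psi}$ is an eigenstate of $\Hpen$ with energy $\Delta_A+\Delta_B$.

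Next, whenever $\Delta_A\neq 0$ and $\Delta_B\neq 0$, the pseudoinverse gives
\[
B\,\Hpen^{-1}\,A\ket{\psi}=\frac{1}{\Delta_A}\,AB\ket{\psi},\qquad A\,\Hpen^{-1}\,B\ket{\psi}=\frac{1}{\Delta_B}\,AB\ket{\psi}.
\]
Projecting onto $\Szero$ via $P_0$ forces $\Delta_A+\Delta_B=0$, otherwise both terms vanish individually; when this condition does hold, $\tfrac{1}{\Delta_A}+\tfrac{1}{\Delta_B}=0$, and the sum of the two contributions cancels identically. The degenerate cases in which $\Delta_A=0$ or $\Delta_B=0$ are handled exactly as in the four-block proof: the pseudoinverse annihilates the zero-energy component, while the other term, after left-projection by $P_0$, requires $AB\ket{\psi}$ to lie in $\Szero$, which forces the other shift to vanish as well. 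Summing over a basis of $\Szero$ yields the claimed operator identity $P_0 A\,\Hpen^{-1}\,B\,P_0+P_0 B\,\Hpen^{-1}\,A\,P_0=0$.

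The only real obstacle is routine bookkeeping: with four flavour choices ($Z$/$X$ on the $a$-leg of each of $A$ and $B$) and the eight allowed $(i,k)$ configurations, one should verify in each case that the resulting operators act on disjoint Bell pairs of block $a$ and therefore commute. This is immediate from the partition $\{1,2,3,4\}=\{1,2\}\sqcup\{3,4\}$ and the constraint on $(i,k)$, so no new conceptual ingredient beyond that present in Lemma~\ref{lem:gadget-coexist-4blocks} is required.
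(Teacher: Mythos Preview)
Your proposal is correct and matches the paper's approach exactly: the paper itself omits the proof of this lemma, stating that it ``is almost the same as for Lemma~\ref{lem:gadget-coexist-4blocks}'', and you have correctly identified that the constraint on $(i,k)$ ensures $A$ and $B$ act on disjoint Bell pairs, so the four-block argument carries over verbatim. Your handling of the degenerate cases $\Delta_A=0$ or $\Delta_B=0$ is in fact more explicit than the paper's treatment of Lemma~\ref{lem:gadget-coexist-4blocks}.
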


\begin{lemma}\label{lem:gadget-coexist-2blocks}
    Consider two blocks $a$ and $b$ with arbitrary parameters $g_x^{(a)}, g_z^{(a)}, g_x^{(b)}, g_z^{(b)}$.
    Let $A := Z_i^{(a)} X_j^{(b)}$ or $A := X_i^{(a)} Z_j^{(b)}$, and $B := Z_k^{(a)} X_l^{(b)}$ or $B := X_k^{(a)} Z_l^{(b)}$, where $(i,k), (j,l) \in \{(1,3)$, $(1,4)$, $(2,3)$, $(2,4)$, $(3,1)$, $(3,2)$, 
    $(4,1)$, $(4,2)\}$.
    Then
    \begin{equation}
        P_0 A \Hpen^{-1} B P_0 + P_0 B \Hpen^{-1} A P_0 = 0.
    \end{equation}
\end{lemma}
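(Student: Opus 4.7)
The plan is to mimic the argument used for Lemma~\ref{lem:gadget-coexist-4blocks}, adapted to the case where $A$ and $B$ are both supported on the same two blocks $a$ and $b$. The crucial structural observation is that the allowed index pairs $(i,k), (j,l) \in \{(1,3),(1,4),(2,3),(2,4),(3,1),(3,2),(4,1),(4,2)\}$ force $i$ and $k$ to live in different Bell pairs of block $a$ (qubits $\{1,2\}$ form one pair and $\{3,4\}$ form the other), and similarly force $j$ and $l$ into different Bell pairs of block $b$. Consequently the four physical qubits on which $A$ and $B$ act belong to four pairwise distinct Bell pairs, exactly the hypothesis that drove the earlier proofs.

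Given this, I would pick an arbitrary zero-energy eigenbasis state $\ket{\psi}$ of $\Hpen$ of the form in Eq.~\eqref{eq:Hpen-eigenbasis}, and use the flip rules in Eq.~\eqref{eq:single-Pauli-on-Bell-pair} to compute $A\ket{\psi}$ and $B\ket{\psi}$. Each single-qubit Pauli transforms $\ket{\phi_{s,t}}$ into a multiple of a different Bell state in the same pair, so $A\ket{\psi}$ is proportional to an eigenstate of $\Hpen$ whose energy differs from $0$ by some shift $\Delta_A = \pm 2 g_{x\text{ or }z}^{(a)} \pm 2 g_{x\text{ or }z}^{(b)}$ determined by the Pauli types and by $\ket{\psi}$, and similarly $B\ket{\psi}$ has shift $\Delta_B$. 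Because the Bell pairs touched by $A$ are disjoint from those touched by $B$, the operators commute as tensor products on these pairs, so $AB\ket{\psi}=BA\ket{\psi}$ as one definite eigenstate of $\Hpen$ with energy $\Delta_A+\Delta_B$.

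Plugging into the pseudoinverse then gives
\begin{equation}
B\,\Hpen^{-1}A\ket{\psi}=\frac{1}{\Delta_A}\,BA\ket{\psi},\qquad A\,\Hpen^{-1}B\ket{\psi}=\frac{1}{\Delta_B}\,AB\ket{\psi},
\end{equation}
with the convention that a term is zero if its denominator would be zero (since $\Hpen^{-1}$ is the pseudoinverse and kills the zero-energy subspace). Applying $P_0$ at the end retains only the component of total energy $\Delta_A+\Delta_B=0$. If $\Delta_A+\Delta_B\neq 0$ both $P_0$-projections vanish and the identity is trivial; if $\Delta_A+\Delta_B=0$ then $\Delta_B=-\Delta_A$, so the two contributions above are exact negatives of each other, and $P_0(A\Hpen^{-1}B+B\Hpen^{-1}A)\ket{\psi}=0$. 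Since $\ket{\psi}$ ranges over a basis of $\mathcal{S}_0$, this yields $P_0 A\Hpen^{-1}B P_0+P_0 B\Hpen^{-1}A P_0=0$.

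The main (very mild) obstacle is purely bookkeeping: one must confirm, case by case through the eight allowed $(i,k)$ patterns (and likewise for $(j,l)$, and across the $ZX$/$XZ$ choices for both $A$ and $B$), that the affected Bell pairs are genuinely disjoint, so that the commuting tensor-factor argument and the energy-additivity $\Delta_A+\Delta_B$ hold uniformly. Once this is verified no new ideas beyond Lemma~\ref{lem:gadget-coexist-4blocks} are required, and in fact the proof is simply a specialization of the same cancellation mechanism to fewer blocks.
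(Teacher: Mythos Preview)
Your proposal is correct and follows essentially the same approach as the paper, which explicitly states that the proof is omitted because it is almost identical to that of Lemma~\ref{lem:gadget-coexist-4blocks}. Your identification of the key structural fact---that the index restrictions force $A$ and $B$ to touch four pairwise distinct Bell pairs---is exactly the observation that reduces this lemma to the same cancellation mechanism.
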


Even if the support of the operators $A$ and $B$ involves a common Bell pair, as long as they act on different qubits of the pair, one via the Pauli Z operator and the other via the Pauli X operator, the net interference effect is still zero.
More precisely, we have the following Lemma \ref{lem:gadget-coexist-3blocks-ZX} and Lemma \ref{lem:gadget-coexist-2blocks-ZX}.
We only provide the proof for the latter as the proof for the former is almost identical.
\begin{lemma}\label{lem:gadget-coexist-3blocks-ZX}
    Consider three blocks $a$, $b$, $c$ with arbitrary parameters $g_x^{(a)}, g_z^{(a)}, g_x^{(b)}, g_z^{(b)}, g_x^{(c)}, g_z^{(c)}$.
    Let
    \begin{equation}
        A := Z_i^{(a)} X_j^{(b)}, \quad B := X_k^{(a)} Z_l^{(c)},
    \end{equation}
    or
    \begin{equation}
        A := X_i^{(a)} Z_j^{(b)}, \quad B := Z_k^{(a)} X_l^{(c)},
    \end{equation}
    where $(i,k) \in \{(1,2), (2,1), (3,4), (4,3)\}$ and $j,l \in \{1,2,3,4\}$.
    Then
    \begin{equation}
        P_0 A \Hpen^{-1} B P_0 + P_0 B \Hpen^{-1} A P_0 = 0.
    \end{equation}
\end{lemma}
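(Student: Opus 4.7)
The plan is to mirror the argument in the proof of Lemma~\ref{lem:gadget-coexist-4blocks}, adapted to the fact that here $A$ and $B$ both act on block $a$, specifically on two distinct qubits of the same Bell pair. I focus on the case $A = Z_i^{(a)} X_j^{(b)}$, $B = X_k^{(a)} Z_l^{(c)}$; the other case follows by exactly the same argument with the roles of $A$ and $B$ swapped.

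The first step is to record two algebraic facts. Since $(i,k)\in\{(1,2),(2,1),(3,4),(4,3)\}$ we have $i\neq k$, so $Z_i^{(a)}$ and $X_k^{(a)}$ commute; combined with $b\neq c$, this yields $[A,B]=0$, hence $AB\ket{\psi}=BA\ket{\psi}$ for every state $\ket{\psi}$. Second, on the shared Bell pair of block $a$, the factor $Z_i^{(a)}$ in $A$ anticommutes with the local $XX$ penalty term and commutes with the local $ZZ$ one, so it flips only the $XX$-eigenvalue of that pair; symmetrically, $X_k^{(a)}$ in $B$ flips only the $ZZ$-eigenvalue. Consequently, for any product-form zero-energy eigenstate $\ket{\psi}$ of $\Hpen$ written as in Eq.~\eqref{eq:Hpen-eigenbasis}, both $A\ket{\psi}$ and $B\ket{\psi}$ remain energy eigenstates of $\Hpen$, with shifts
\[
\Delta E_A = -2 t_a\, g_x^{(a)} - 2 s^{(b)}\, g_z^{(b)},\qquad \Delta E_B = -2 s_a\, g_z^{(a)} - 2 t^{(c)}\, g_x^{(c)},
\]
where $s_a,t_a$ are the $ZZ,XX$ eigenvalues of the shared pair in block $a$, and $s^{(b)},t^{(c)}$ are the analogous eigenvalues for the relevant pairs in blocks $b,c$. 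Crucially, the two shifts depend on independent spectral data of $\Hpen$ (one involves $g_x^{(a)}$, the other $g_z^{(a)}$).

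Next, I would invoke $\Hpen^{-1}$ exactly as in the proof of Lemma~\ref{lem:gadget-coexist-4blocks}: one finds $B \Hpen^{-1} A\ket{\psi} = (1/\Delta E_A)\,BA\ket{\psi}$ whenever $\Delta E_A\neq 0$ (and zero otherwise), and $A \Hpen^{-1} B\ket{\psi} = (1/\Delta E_B)\,AB\ket{\psi}$ whenever $\Delta E_B\neq 0$. Because $AB=BA$, the final states on the two sides coincide. A case split on the total energy $\Delta E_A+\Delta E_B$ of that common final state then concludes the argument: if it is nonzero, $\Pzero$ annihilates both contributions; if it vanishes, then $\Delta E_B=-\Delta E_A$, so the two terms are exact negatives of each other and cancel upon summation. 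Sign factors generated by Eq.~\eqref{eq:single-Pauli-on-Bell-pair} when pushing $Z_i^{(a)}$ and $X_k^{(a)}$ through the Bell basis appear identically in $BA\ket{\psi}$ and $AB\ket{\psi}$ and therefore drop out of the comparison.

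I do not foresee any real obstacle; the only conceptual subtlety worth emphasizing is the reason the shared-pair case still works. Even though $A$ and $B$ overlap on the same Bell pair, because one acts on it via a $Z$-type operator and the other via an $X$-type operator on a different qubit of the pair, their energy shifts sit in orthogonal spectral sectors ($XX$-flip versus $ZZ$-flip). This is precisely the ingredient the 4-block proof relied on (independent, additive energy shifts plus commuting $A,B$), so the same cancellation mechanism carries through essentially verbatim. The bookkeeping of signs and of which of $\Delta E_A,\Delta E_B$ may vanish individually is the only mildly tedious part, but neither alters the final cancellation.
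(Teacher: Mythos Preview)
Your proposal is correct and follows essentially the same approach as the paper. The paper omits an explicit proof of this lemma, deferring to the nearly identical two-block case (Lemma~\ref{lem:gadget-coexist-2blocks-ZX}), whose proof proceeds by the same mechanism you outline: compute the energy shifts $\Delta E_A$ and $\Delta E_B$, observe that the final states $AB\ket{\psi}$ and $BA\ket{\psi}$ coincide, and do a case split on whether $\Delta E_A+\Delta E_B$ vanishes. Your observation that $[A,B]=0$ (because $i\neq k$ and $b\neq c$) is a slightly cleaner way to see the coincidence of final states than the paper's explicit state-by-state computation, and your emphasis that $A$ flips only the $XX$-eigenvalue while $B$ flips only the $ZZ$-eigenvalue of the shared pair correctly identifies why the energy shifts remain additive despite the overlap.
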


\begin{lemma}\label{lem:gadget-coexist-2blocks-ZX}
    Consider two blocks $a$ and $b$ with arbitrary parameters $g_x^{(a)}, g_z^{(a)}, g_x^{(b)}, g_z^{(b)}$.
    Let
    \begin{equation}
        A := Z_i^{(a)} X_j^{(b)}, \quad B := X_k^{(a)} Z_l^{(b)},
    \end{equation}
    or
    \begin{equation}
        A := X_i^{(a)} Z_j^{(b)}, \quad B := Z_k^{(a)} X_l^{(b)},
    \end{equation}
    where $(i,k), (j,l) \in \{(1,2), (2,1), (3,4), (4,3)\}$.
    Then
    \begin{equation}
        P_0 A \Hpen^{-1} B P_0 + P_0 B \Hpen^{-1} A P_0 = 0.
    \end{equation}
\end{lemma}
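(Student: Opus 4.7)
The plan is to mirror the proof of Lemma~\ref{lem:gadget-coexist-4blocks}: evaluate $P_0 A \Hpen^{-1} B$ and $P_0 B \Hpen^{-1} A$ on a generic Bell-pair basis vector $\ket{\psi}\in\Szero$ from Eq.~\eqref{eq:Hpen-eigenbasis}, track how $A$ and $B$ shift the $\Hpen$-eigenvalue, and show that the two contributions cancel. The genuinely new situation compared with Lemmas~\ref{lem:gadget-coexist-4blocks}, \ref{lem:gadget-coexist-3blocks}, and \ref{lem:gadget-coexist-2blocks} is that here $A$ and $B$ may act on the very same Bell pair within block $a$ (and/or within block $b$); the constraint $(i,k),(j,l)\in\{(1,2),(2,1),(3,4),(4,3)\}$ is exactly what keeps the argument intact.

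First I would verify that $[A,B]=0$. The hypotheses $i\neq k$ and $j\neq l$ place the four single-qubit Pauli factors of $A$ and $B$ on four distinct physical qubits, so $A$ and $B$ commute trivially. This already distinguishes the present setting from a hypothetical $Z$-versus-$Z$ or $X$-versus-$X$ configuration on a shared Bell pair, which would not commute.

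Next, using Eq.~\eqref{eq:single-Pauli-on-Bell-pair} I would argue that $A\ket{\psi}$ and $B\ket{\psi}$ are again Bell-pair basis vectors (up to sign), with well-defined energy shifts $\Delta_A$ and $\Delta_B$. The key structural point is that when $A$ and $B$ share a Bell pair, $A$ applies $Z$ on one qubit of that pair while $B$ applies $X$ on its partner: the $Z$ toggles only the $XX$-eigenvalue, contributing $\pm 2g_x$ to the energy, while the $X$ toggles only the $ZZ$-eigenvalue, contributing $\pm 2g_z$. These shifts therefore add independently, so $AB\ket{\psi}=BA\ket{\psi}$ is an eigenstate of $\Hpen$ of total energy $\Delta_A+\Delta_B$.

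Finally I would close by case analysis on $(\Delta_A,\Delta_B)$. When both are nonzero, $A\Hpen^{-1}B\ket{\psi}=\Delta_B^{-1}AB\ket{\psi}$ and $B\Hpen^{-1}A\ket{\psi}=\Delta_A^{-1}AB\ket{\psi}$, and $P_0 AB\ket{\psi}$ survives only when $\Delta_A+\Delta_B=0$, in which case $\Delta_A^{-1}+\Delta_B^{-1}=0$ and the two contributions cancel exactly. If either shift vanishes, one term is killed by the pseudoinverse convention and the other by $P_0$, since $AB\ket{\psi}$ then has nonzero energy or is itself zero. Summing over a full Bell-pair basis of $\Szero$ yields the claimed identity. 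The main obstacle I anticipate is the careful phase bookkeeping arising from $Z_2\ket{\phi_{s,t}}=s\ket{\phi_{s,-t}}$ and $X_1\ket{\phi_{s,t}}=t\ket{\phi_{-s,t}}$ on a shared Bell pair, needed to confirm that $AB\ket{\psi}$ and $BA\ket{\psi}$ truly agree as vectors (not merely up to a sign); this is precisely where the asymmetric $Z$-versus-$X$ structure between $A$ and $B$ becomes indispensable.
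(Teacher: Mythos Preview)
Your proposal is correct and follows essentially the same approach as the paper's own proof: act on a Bell-pair eigenvector $\ket{\psi}\in\Szero$, identify the energy shifts $\Delta_A,\Delta_B$ induced by $A$ and $B$, observe that the common final state $AB\ket{\psi}$ has energy $\Delta_A+\Delta_B$, and conclude by the dichotomy $\Delta_A+\Delta_B\neq 0$ (killed by $P_0$) versus $\Delta_A+\Delta_B=0$ (the prefactors $\Delta_A^{-1}$ and $\Delta_B^{-1}$ cancel). Your treatment is in fact slightly more careful than the paper's in that you explicitly discuss the degenerate case $\Delta_A=0$ or $\Delta_B=0$ via the pseudoinverse convention, and your worry about phase bookkeeping is already resolved by your own observation that $[A,B]=0$ as operators on four distinct physical qubits.
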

\begin{proof}
    Without loss of generality assume that $a$ and $b$ are the first two blocks among the $n$ blocks.
    For simplicity, we prove the lemma for the case $A = Z_1^{(1)} X_1^{(2)}, B = X_2^{(1)} Z_2^{(2)}$.
    The proof for the other choices of $i,j,k,l$ follows in the same manner.
    Let $\ket{\psi}$ be a zero-energy eigenstate of $\Hpen$ in the form of Eq.~\eqref{eq:Hpen-eigenbasis}, i.e., $\ket{\psi} = \ket{\phi_{s_1,t_1}} \ket{\phi_{s'_1,t'_1}} \cdots \ket{\phi_{s_n,t_n}} \ket{\phi_{s'_n,t'_n}}$ with $(s_1+s'_1) g_z^{(1)} + (t_1+t'_1) g_x^{(1)} + \cdots + (s_n+s'_n) g_z^{(n)} + (t_n+t'_n) g_x^{(n)} = 0$.
    We have
    \begin{equation}
        \begin{aligned}
        B \Hpen^{-1} A \ket{\psi} &= B \Hpen^{-1} \left( (Z \otimes \id) \ket{\phi_{s_1,t_1}} \right) \ket{\phi_{s'_1,t'_1}} \left( (X \otimes \id) \ket{\phi_{s_2,t_2}} \right) \ket{\phi_{s'_2,t'_2}} \ket{\text{rest}} \\
        &= \frac{1}{-2 t_1 g_x^{(1)} - 2 s_2 g_z^{(2)}} B \left( (Z \otimes \id) \ket{\phi_{s_1,t_1}} \right) \ket{\phi_{s'_1,t'_1}} \left( (X \otimes \id) \ket{\phi_{s_2,t_2}} \right) \ket{\phi_{s'_2,t'_2}} \ket{\text{rest}} \\
        &= \frac{1}{-2 t_1 g_x^{(1)} - 2 s_2 g_z^{(2)}} \left( (Z \otimes X) \ket{\phi_{s_1,t_1}} \right) \ket{\phi_{s'_1,t'_1}} \left( (X \otimes Z) \ket{\phi_{s_2,t_2}} \right) \ket{\phi_{s'_2,t'_2}} \ket{\text{rest}}, \\
        \end{aligned}
    \end{equation}
    where $\ket{\text{rest}}$ represents the state of the physical qubits from the 3rd to the last block.
    Similarly,
    \begin{equation}
        \begin{aligned}
        A \Hpen^{-1} B \ket{\psi} &= A \Hpen^{-1} \left( (\id \otimes X) \ket{\phi_{s_1,t_1}} \right) \ket{\phi_{s'_1,t'_1}} \left( (\id \otimes Z) \ket{\phi_{s_2,t_2}} \right) \ket{\phi_{s'_2,t'_2}} \ket{\text{rest}} \\
        &= \frac{1}{-2 s_1 g_z^{(1)} - 2 t_2 g_x^{(2)}} A \left( (\id \otimes X) \ket{\phi_{s_1,t_1}} \right) \ket{\phi_{s'_1,t'_1}} \left( (\id \otimes Z) \ket{\phi_{s_2,t_2}} \right) \ket{\phi_{s'_2,t'_2}} \ket{\text{rest}} \\
        &= \frac{1}{-2 s_1 g_z^{(1)} - 2 t_2 g_x^{(2)}} \left( (Z \otimes X) \ket{\phi_{s_1,t_1}} \right) \ket{\phi_{s'_1,t'_1}} \left( (X \otimes Z) \ket{\phi_{s_2,t_2}} \right) \ket{\phi_{s'_2,t'_2}} \ket{\text{rest}}. \\
        \end{aligned}
    \end{equation}
    Note that $(Z \otimes X) \ket{\phi_{s_1,t_1}} = \ket{\phi_{-s_1,-t_1}}$ has an energy difference of $-2s_1 g_z^{(1)}-2t_1 g_x^{(1)}$ relative to $\ket{\phi_{s_1,t_1}}$, and $(X \otimes Z) \ket{\phi_{s_2,t_2}} = -s_2 t_2 \ket{\phi_{-s_2,-t_2}}$ has an energy difference of $-2s_2 g_z^{(2)}-2t_2 g_x^{(2)}$ relative to $\ket{\phi_{s_2,t_2}}$.
    Thus in both of the above two equations, the final state is an eigenstate of $\Hpen$ with energy $E = -2 s_1 g_z^{(1)} - 2 t_1 g_x^{(1)} -2 s_2 g_z^{(2)} - 2 t_2 g_x^{(2)}$.
    If $E \ne 0$, then by definition $P_0 B \Hpen^{-1} A \ket{\psi} = P_0 A \Hpen^{-1} B \ket{\psi} = 0$.
    If $E = 0$, then $2 s_1 g_z^{(1)} + 2 t_2 g_x^{(2)} = - \left( 2 t_1 g_x^{(1)} + 2 s_2 g_z^{(2)} \right)$, hence $A \Hpen^{-1} B \ket{\psi} = - B \Hpen^{-1} A \ket{\psi}$.
    In both cases, we have $P_0 B \Hpen^{-1} A \ket{\psi} + P_0 A \Hpen^{-1} B \ket{\psi} = 0$.
    Since the above argument holds for arbitrary choice of $\ket{\psi}$, we arrive at $P_0 B \Hpen^{-1} A P_0 + P_0 A \Hpen^{-1} B P_0 = 0$.
\end{proof}

So far, our lemmas have not covered the scenario where the support of the operators $A$ and $B$ share a physical qubit in common, or involve two physical qubits from a single Bell pair with the action being Pauli Z simultaneously or Pauli X simultaneously.
In this scenario, $P_0 A \Hpen^{-1} B P_0 + P_0 B \Hpen^{-1} A P_0$ is in general nonzero unless we impose some extra constraints on the parameters of the blocks.
In particular, later in Section \ref{appxsec:scheme-2D-models} we will need the following lemmas.

\begin{lemma}\label{lem:gadget-coexist-3blocks-whatsoever}
    Consider three blocks $a$, $b$, $c$ with parameters $g_x^{(a)}, g_z^{(a)}, g_x^{(b)}, g_z^{(b)}, g_x^{(c)}, g_z^{(c)}$ satisfying
    \begin{equation}
        \sigma_1 g_{w_1}^{(a)} + \sigma_2 g_{w_2}^{(a)} + \sigma_3 g_{w_3}^{(b)} + \sigma_4 g_{w_4}^{(c)} \ne 0
    \end{equation}
    for all choices of $\sigma_1, \sigma_2, \sigma_3, \sigma_4 \in \{-1,1\}$ and $w_1, w_2, w_3, w_4 \in \{x,z\}$.
    Let $A := Z_i^{(a)} Z_j^{(b)}$, $X_i^{(a)} X_j^{(b)}$, $Z_i^{(a)} X_j^{(b)}$ or $X_i^{(a)} Z_j^{(b)}$.
    Let $B := Z_k^{(a)} Z_l^{(c)}$, $X_k^{(a)} X_l^{(c)}$, $Z_k^{(a)} X_l^{(c)}$ or $X_k^{(a)} Z_l^{(c)}$.
    Then $P_0 A \Hpen^{-1} B P_0 = 0$ and $P_0 B \Hpen^{-1} A P_0 = 0$.
\end{lemma}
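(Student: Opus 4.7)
The plan is to work in the product Bell-pair eigenbasis of $\Hpen$ from Eq.~\eqref{eq:Hpen-eigenbasis} and show that, for every zero-energy basis vector $\ket{\psi}$, the vector $A\Hpen^{-1}B\ket{\psi}$ lies in a single \emph{nonzero}-energy eigenspace of $\Hpen$. Applying $P_0$ then annihilates it, giving $P_0 A \Hpen^{-1} B P_0 = 0$; the companion identity $P_0 B \Hpen^{-1} A P_0 = 0$ follows at once by Hermitian conjugation, since $A$, $B$, $\Hpen^{-1}$, and $P_0$ are all Hermitian.

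The core calculation is a direct energy-shift analysis. Each of $A$ and $B$ is a tensor product of two single-qubit Paulis from $\{X,Z\}$ acting on two distinct Bell pairs of $\Hpen$, so by Eq.~\eqref{eq:single-Pauli-on-Bell-pair} each of them sends a product Bell-pair basis vector to a (possibly phased) product Bell-pair basis vector. Specifically, applying $Z$ to one qubit of a pair flips its $XX$-eigenvalue $t$ and hence shifts that pair's energy by $-2 t g_x^{(\mathrm{block})}$, while applying $X$ flips the $ZZ$-eigenvalue $s$ and shifts the energy by $-2 s g_z^{(\mathrm{block})}$. Hence $B\ket{\psi}$ is, up to a nonzero scalar, another eigenbasis vector whose energy differs from $0$ by
\begin{equation}
\Delta_B = \sigma_B^{(a)}\, 2 g_{w_B^{(a)}}^{(a)} + \sigma_B^{(c)}\, 2 g_{w_B^{(c)}}^{(c)},
\end{equation}
where the signs $\sigma_B^{(\cdot)}\in\{-1,+1\}$ are read off from the Bell labels in $\ket{\psi}$ and $w_B^{(a)}, w_B^{(c)}\in\{x,z\}$ are determined by the Pauli type of $B$ on that qubit ($x$ if $Z$, $z$ if $X$). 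Acting with $\Hpen^{-1}$ simply rescales this state by $1/\Delta_B$, or annihilates it if $\Delta_B=0$ (in which case we are already done). A subsequent application of $A$ produces a further shift
\begin{equation}
\Delta_A = \sigma_A^{(a)}\, 2 g_{w_A^{(a)}}^{(a)} + \sigma_A^{(b)}\, 2 g_{w_A^{(b)}}^{(b)}
\end{equation}
of the same form, so $A\Hpen^{-1}B\ket{\psi}$ is a scalar multiple of a single $\Hpen$-eigenstate with total energy $\Delta_A + \Delta_B$.

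Now $\Delta_A + \Delta_B$ is exactly $2$ times a signed sum of the form $\sigma_1 g_{w_1}^{(a)} + \sigma_2 g_{w_2}^{(a)} + \sigma_3 g_{w_3}^{(b)} + \sigma_4 g_{w_4}^{(c)}$, which the hypothesis guarantees is nonzero for every sign pattern and every choice of $w_i\in\{x,z\}$; hence $A\Hpen^{-1}B\ket{\psi}\in\Szero^{\perp}$ and $P_0$ kills it. Summing over a basis of $\Szero$ yields the first identity, and Hermitian conjugation yields the second. The main (though mechanical) obstacle is the book-keeping in the case that $A$ and $B$ both touch block $a$ --- possibly on the same Bell pair, the same physical qubit, or with differing Pauli types --- so that the two block-$a$ contributions may partially or even fully cancel; this is precisely why the hypothesis permits \emph{independent} signs \emph{and} independent choices $w_1,w_2\in\{x,z\}$ for the block-$a$ terms. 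Beyond this case analysis, the argument is a direct specialization of the eigenstate-following computation already used for Lemmas~\ref{lem:gadget-coexist-4blocks}--\ref{lem:gadget-coexist-2blocks-ZX}.
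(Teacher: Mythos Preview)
Your proof is correct and follows essentially the same energy-shift argument as the paper's own proof: both track how $A$ and $B$ each shift the $\Hpen$-eigenvalue of a Bell-pair basis state by $\pm 2g_{w}^{(\cdot)}$, yielding a net shift of the form $2(\sigma_1 g_{w_1}^{(a)} + \sigma_2 g_{w_2}^{(a)} + \sigma_3 g_{w_3}^{(b)} + \sigma_4 g_{w_4}^{(c)})$, which is nonzero by hypothesis. Your use of Hermitian conjugation to obtain the second identity from the first is a small economy over the paper, which simply asserts both identities together from the same energy argument.
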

\begin{proof}
    For concreteness, we prove the lemma for the case $A = Z_1^{(a)} X_1^{(b)}, B = Z_2^{(a)} X_2^{(c)}$.
    Since $Z_1^{(a)}$ anticommutes with $X_1^{(a)} X_2^{(a)}$ and commutes with other terms in $\Hpen$, while $X_1^{(b)}$ anticommutes with $Z_1^{(b)} Z_2^{(b)}$ and commutes with other terms in $\Hpen$, it follows that the action of $A$ on an arbitrary eigenbasis state of $\Hpen$ in Eq.~\eqref{eq:Hpen-eigenbasis} changes its energy by $\pm 2 g_x^{(a)} \pm 2 g_z^{(b)}$.
    Similarly, the action of $B$ on an arbitrary eigenbasis state of $\Hpen$ in Eq.~\eqref{eq:Hpen-eigenbasis} changes its energy by $\pm 2 g_x^{(a)} \pm 2 g_z^{(c)}$.
    The net effect of $B \Hpen^{-1} A$ or $A \Hpen^{-1} B$ is a change of energy by $\pm 2 g_x^{(a)} \pm 2 g_x^{(a)} \pm 2 g_z^{(b)} \pm 2 g_z^{(c)}$, which is nonzero by our assumption.
    Hence $P_0 B \Hpen^{-1} A P_0 = P_0 A \Hpen^{-1} B P_0 = 0$.
\end{proof}

\begin{lemma}\label{lem:gadget-coexist-3blocks-overlap-qubit}
    Consider three blocks $a$, $b$, $c$ with parameters $g_x^{(a)}, g_z^{(a)}, g_x^{(b)}, g_z^{(b)}, g_x^{(c)}, g_z^{(c)}$.
    Suppose
    \begin{equation}
        A := Z_i^{(a)} X_j^{(b)}, \quad B := Z_i^{(a)} X_k^{(c)}, \quad \left| g_z^{(b)} \right| \ne \left| g_z^{(c)} \right|,
    \end{equation}
    or
    \begin{equation}
        A := X_i^{(a)} Z_j^{(b)}, \quad B := X_i^{(a)} Z_k^{(c)}, \quad \left| g_x^{(b)} \right| \ne \left| g_x^{(c)} \right|,
    \end{equation}
    where $i,j,k \in \{1,2,3,4\}$.
    Then $P_0 A \Hpen^{-1} B P_0 = 0$ and $P_0 B \Hpen^{-1} A P_0 = 0$.
\end{lemma}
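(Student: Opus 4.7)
The plan is to proceed along the same lines as in the proof of Lemma \ref{lem:gadget-coexist-3blocks-whatsoever}, by tracking the energy changes induced by $A$ and $B$ on the Bell-pair eigenbasis \eqref{eq:Hpen-eigenbasis} of $\Hpen$ and showing that the total energy shift of $B\Hpen^{-1}A$ (and $A\Hpen^{-1}B$) is always nonzero, so the outer $P_0$ annihilates the result.

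More concretely, I would focus on the first case $A=Z_i^{(a)}X_j^{(b)}$, $B=Z_i^{(a)}X_k^{(c)}$; the second case follows by exchanging the roles of $X$ and $Z$ (equivalently $g_x\leftrightarrow g_z$). First I would record the basic bookkeeping rules: $Z_i^{(a)}$ anticommutes only with the $g_x^{(a)} X X$ term on the pair containing qubit $i$, so acting on an eigenbasis state it changes the eigenvalue $t$ of that $XX$ factor to $-t$ and shifts the energy by $-2tg_x^{(a)}$. Analogously, $X_j^{(b)}$ shifts the energy by $\pm 2g_z^{(b)}$ and $X_k^{(c)}$ shifts it by $\pm 2g_z^{(c)}$.

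Next I would let $\ket{\psi}$ be an arbitrary zero-energy Bell-pair eigenstate in the form of \eqref{eq:Hpen-eigenbasis} and track $B\Hpen^{-1}A\ket{\psi}$ step by step. The state $A\ket{\psi}$ is (up to a sign) another Bell-pair eigenstate with energy $E_A := -2t_a g_x^{(a)}-2s_b g_z^{(b)}$, where $t_a$ and $s_b$ are the corresponding eigenvalues in $\ket{\psi}$. If $E_A=0$, then $\Hpen^{-1}$ annihilates it (since $\Hpen^{-1}=Q_0\Hpen^{-1}Q_0$), and we are done. Otherwise, acting with $B$ flips the same $t_a$ back to $t_a$, contributing an energy change $+2t_a g_x^{(a)}$ that exactly cancels the block-$a$ contribution from $A$; simultaneously, $X_k^{(c)}$ contributes $-2s_c g_z^{(c)}$. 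Hence $B\Hpen^{-1}A\ket{\psi}$ is proportional to an eigenstate of $\Hpen$ with energy
\begin{equation}
E = E_A + 2t_a g_x^{(a)} - 2 s_c g_z^{(c)} = -2 s_b g_z^{(b)} - 2 s_c g_z^{(c)}.
\end{equation}
Since $|g_z^{(b)}|\ne|g_z^{(c)}|$ by hypothesis and $s_b,s_c\in\{-1,1\}$, we have $E\ne 0$, so $P_0$ kills it. The computation for $A\Hpen^{-1}B\ket{\psi}$ is symmetric and yields the same conclusion. Because this holds for every basis vector $\ket{\psi}\in\Szero$, we conclude $P_0 B\Hpen^{-1}A P_0 = P_0 A\Hpen^{-1}B P_0 = 0$.

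There is no real obstacle here beyond careful bookkeeping; the only conceptually important point, which I would highlight, is that the shared single-qubit factor $Z_i^{(a)}$ (or $X_i^{(a)}$ in the second case) appearing in both $A$ and $B$ forces the block-$a$ energy shifts to cancel between the two applications, so the nondegeneracy constraint needed to guarantee leakage suppression reduces from a four-term inequality (as in Lemma \ref{lem:gadget-coexist-3blocks-whatsoever}) to the simpler two-term inequality $|g_z^{(b)}|\ne|g_z^{(c)}|$ (respectively $|g_x^{(b)}|\ne|g_x^{(c)}|$). This is exactly the hypothesis of the lemma.
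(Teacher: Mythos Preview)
Your proposal is correct and follows essentially the same approach as the paper: both arguments track the energy of a Bell-pair eigenstate under $B\Hpen^{-1}A$ (and $A\Hpen^{-1}B$), observe that the two applications of the shared $Z_i^{(a)}$ cancel the block-$a$ contribution, and conclude that the residual energy $-2s_b g_z^{(b)} - 2s_c g_z^{(c)}$ is nonzero under the hypothesis. Your explicit handling of the $E_A=0$ case and the remark explaining why the constraint reduces from four terms to two are nice additions absent from the paper's version.
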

\begin{proof}
    Without loss of generality assume that $a,b,c$ are the first three blocks among the $n$ blocks.
    For concreteness, we prove the lemma for the case $A = Z_1^{(1)} X_1^{(2)}, B = Z_1^{(1)} X_1^{(3)}$ under the constraint $|g_z^{(2)}| \ne |g_z^{(3)}|$.
    Let $\ket{\psi}$ be a zero-energy eigenstate of $\Hpen$ in the form of Eq.~\eqref{eq:Hpen-eigenbasis}, i.e., $\ket{\psi} = \ket{\phi_{s_1,t_1}} \ket{\phi_{s'_1,t'_1}} \cdots \ket{\phi_{s_n,t_n}} \ket{\phi_{s'_n,t'_n}}$ with $(s_1+s'_1) g_z^{(1)} + (t_1+t'_1) g_x^{(1)} + \cdots + (s_n+s'_n) g_z^{(n)} + (t_n+t'_n) g_x^{(n)} = 0$.
    We have
    \begin{equation}
        \begin{aligned}
        B \Hpen^{-1} A \ket{\psi} &= B \Hpen^{-1} \left( (Z \otimes \id) \ket{\phi_{s_1,t_1}} \right) \ket{\phi_{s'_1,t'_1}} \left( (X \otimes \id) \ket{\phi_{s_2,t_2}} \right) \ket{\phi_{s'_2,t'_2}} \ket{\phi_{s_3,t_3}} \ket{\phi_{s'_3,t'_3}} \ket{\text{rest}} \\
        &\propto B \left( (Z \otimes \id) \ket{\phi_{s_1,t_1}} \right) \ket{\phi_{s'_1,t'_1}} \left( (X \otimes \id) \ket{\phi_{s_2,t_2}} \right) \ket{\phi_{s'_2,t'_2}} \ket{\phi_{s_3,t_3}} \ket{\phi_{s'_3,t'_3}} \ket{\text{rest}} \\
        &= \ket{\phi_{s_1,t_1}} \ket{\phi_{s'_1,t'_1}} \left( (X \otimes \id) \ket{\phi_{s_2,t_2}} \right) \ket{\phi_{s'_2,t'_2}} \left( (X \otimes \id) \ket{\phi_{s_3,t_3}} \right) \ket{\phi_{s'_3,t'_3}} \ket{\text{rest}}, \\
        &\propto \ket{\phi_{s_1,t_1}} \ket{\phi_{s'_1,t'_1}} \ket{\phi_{-s_2,t_2}} \ket{\phi_{s'_2,t'_2}} \ket{\phi_{-s_3,t_3}} \ket{\phi_{s'_3,t'_3}} \ket{\text{rest}},
        \end{aligned}
    \end{equation}
    where $\ket{\text{rest}}$ represents the state of the physical qubits from the 4th to the last block.
    So $B \Hpen^{-1} A \ket{\psi}$ is an eigenstate of $\Hpen$ with energy $-2 s_2 g_z^{(2)} - 2 s_3 g_z^{(3)}$, which is nonzero by our constraint.
    Therefore $P_0 B \Hpen^{-1} A \ket{\psi} = 0$.
    Since the above argument holds for arbitrary choice of $\ket{\psi}$, we arrive at $P_0 B \Hpen^{-1} A P_0 = 0$.
    Similarly, $P_0 A \Hpen^{-1} B P_0 = 0$.
\end{proof}

\section{Encoding schemes for 1D spin models}\label{appxsec:scheme-1D-models}
In this section, we provide the details of the encoding schemes we used in the main article for simulating the 1D transverse-field Ising (TFI) model and the 1D XY chain.

\subsection{1D TFI model}
\label{appxsec:scheme-1D-TFI}
For the 1D TFI model with $2n$ sites, the target Hamiltonian is given by
\begin{equation}\label{eq:Htar-1d-TFI}
    \Htar^{\text{1d-TFI}} = \sum_{k=1}^{2n-1} J_k \overline{Z}_k \overline{Z}_{k+1} + h_Z \sum_{k=1}^{2n} \overline{Z}_k + h_X \sum_{k=1}^{2n} \overline{X}_k,
\end{equation}
where $J_k, h_Z, h_X$ are real numbers.
Here we have assumed the open boundary condition and a uniform local field, but we note that both the periodic boundary condition and a non-uniform local field can be incorporated straightforwardly.
Moreover, our scheme can deal with both ferromagnetic ($J_k < 0$) and anti-ferromagnetic ($J_k > 0$) interactions.

First of all, we should choose the $(g_x, g_z)$ parameters of the $n$ code blocks and properly assign the $2n$ logical qubits to the $2n$ sites of the Ising chain.
For concreteness, we choose $g_x=1$ and $g_z=3$ for all the blocks, and we identify site $2b-1$ and site $2b$ as the first and the second logical qubit encoded in the $b$th block, respectively, for $b=1,2,\dots,n$.
For better presentation, we rewrite the target Hamiltonian per the notation we used in Section \ref{appxsec:gadgets}:
\begin{equation}\label{eq:Htar-1d-TFI-block-encoded}
    \Htar^{\text{1d-TFI}} = \underbrace{\sum_{b=1}^{n-1} J_{2b} \overline{Z}_2^{(b)} \overline{Z}_{1}^{(b+1)}}_{\text{cross-block terms}} + \underbrace{\sum_{b=1}^{n} J_{2b-1} \overline{Z}_1^{(b)} \overline{Z}_{2}^{(b)} + h_Z \sum_{b=1}^{n} \left( \overline{Z}_1^{(b)} + \overline{Z}_2^{(b)} \right) + h_X \sum_{b=1}^{n} \left( \overline{X}_1^{(b)} + \overline{X}_2^{(b)} \right)}_{\text{inner-block terms}},
\end{equation}
where $\overline{X}_{i}^{(b)}$ and $\overline{Z}_{i}^{(b)}$ denote the logical Pauli X and Pauli Z operators acting on the $i$th logical qubit in the $b$th block, respectively, for $i=1,2$ and $b=1,2,\dots,n$.

Next, we translate the inner-block logical terms directly into $\Hencfirst$ using the mapping described in the main article, which we copy below for reference:
\begin{subequations}
\begin{align}
& \overline{Z}_1^{(b)} = Z_{1}^{(b)} Z_{2}^{(b)}, \quad
\overline{Z}_2^{(b)} = Z_{1}^{(b)} Z_{3}^{(b)}, \quad \overline{X}_1^{(b)} = -X_{1}^{(b)} X_{3}^{(b)}, \quad \overline{X}_2^{(b)} = X_{1}^{(b)} X_{2}^{(b)}, \\
& \overline{Z}_1^{(b)} \overline{Z}_2^{(b)} = Z_{2}^{(b)} Z_{3}^{(b)}, \quad 
\overline{X}_1^{(b)} \overline{X}_2^{(b)} = - X_{2}^{(b)} X_{3}^{(b)}. 
\end{align}
\end{subequations}

Finally, for each cross-block logical term $J_{2b} \overline{Z}_2^{(b)} \overline{Z}_{1}^{(b+1)}$, we add the following perturbative gadget to $\Hencsecond$:
\begin{subequations}
\begin{align}
& G^{(b,b+1)} := \alpha Z_2^{(b)} X_3^{(b+1)} + \beta Z_4^{(b)} X_3^{(b+1)}, \\
& \text{where} \quad \alpha := \sqrt{\frac{8}{3} |J_{2b}|}, \quad \beta := \sgn(J_{2b}) \sqrt{\frac{8}{3} |J_{2b}|}. 
\end{align}
\end{subequations}
That is,
\begin{equation}
    \Hencsecond = \sum_{b=1}^{n-1} G^{(b,b+1)}.
\end{equation}
To see that this gives rise to the desired interaction, note that $g_z \ne g_x$ implies that $P_0 \Hencsecond P_0 = 0$, and
\begin{align}
    - P_0 \Hencsecond \Hpen^{-1} \Hencsecond P_0 = & - \sum_{b=1}^{n-1} P_0 G^{(b,b+1)} \Hpen^{-1} G^{(b,b+1)} P_0 
    \nonumber \\
    & - \sum_{a<b} \left( P_0 G^{(a,a+1)} \Hpen^{-1} G^{(b,b+1)} P_0 + P_0 G^{(b,b+1)} \Hpen^{-1} G^{(a,a+1)} P_0 \right).
\end{align}
All the two-gadget interference terms above vanish for the following reasons:
\begin{itemize}
    \item When $b - a \ge 2$, the gadgets $G^{(a,a+1)}$ and $G^{(b,b+1)}$ act on four different blocks.
    So by Lemma \ref{lem:gadget-coexist-4blocks} they do not interfere with each other.
    \item When $b=a+1$, we further expand
    \begin{equation}
        \begin{aligned}
        &~~~ P_0 G^{(a,a+1)} \Hpen^{-1} G^{(a+1,a+2)} P_0 + P_0 G^{(a+1,a+2)} \Hpen^{-1} G^{(a,a+1)} P_0 \\
        &= \alpha^2 \left( P_0 Z_2^{(a)} X_3^{(a+1)} \Hpen^{-1} Z_2^{(a+1)} X_3^{(a+2)} P_0 + P_0 Z_2^{(a+1)} X_3^{(a+2)} \Hpen^{-1} Z_2^{(a)} X_3^{(a+1)} P_0 \right) \\
        &+ \beta^2 \left( P_0 Z_4^{(a)} X_3^{(a+1)} \Hpen^{-1} Z_4^{(a+1)} X_3^{(a+2)} P_0 + P_0 Z_4^{(a+1)} X_3^{(a+2)} \Hpen^{-1} Z_4^{(a)} X_3^{(a+1)} P_0 \right) \\
        &+ \alpha\beta \left( P_0 Z_2^{(a)} X_3^{(a+1)} \Hpen^{-1} Z_4^{(a+1)} X_3^{(a+2)} P_0 + P_0 Z_4^{(a+1)} X_3^{(a+2)} \Hpen^{-1} Z_2^{(a)} X_3^{(a+1)} P_0 \right) \\
        &+ \alpha\beta \left( P_0 Z_4^{(a)} X_3^{(a+1)} \Hpen^{-1} Z_2^{(a+1)} X_3^{(a+2)} P_0 + P_0 Z_2^{(a+1)} X_3^{(a+2)} \Hpen^{-1} Z_4^{(a)} X_3^{(a+1)} P_0 \right). \\
        \end{aligned}
    \end{equation}
    By Lemma \ref{lem:gadget-coexist-3blocks}, the first and the fourth line on the right-hand side is zero; while by Lemma \ref{lem:gadget-coexist-3blocks-ZX}, the second and the third line on the right-hand side is zero.
\end{itemize}
Thus, the logical interactions generated by the perturbative gadgets combine additively, and according to the first row of Table \ref{tab:gadget-lookup}, we obtain that
\begin{equation}
- \Penc \Hencsecond \Hpen^{-1} \Hencsecond \Penc = \sum_{b=1}^{n-1} J_{2b} \overline{Z}_2^{(b)} \overline{Z}_1^{(b+1)} - \sum_{b=1}^{n-1} |J_{2b}| \overline{Z}_1^{(b+1)}.
\end{equation}
Note that besides the desired cross-block Ising interactions, $\Hencsecond$ also induces extra single-qubit logical operations.
Fortunately, these can be compensated for by adding correction terms to $\Hencfirst$.
We arrive at
\begin{align}
    \Hencfirst = & \sum_{b=1}^{n} J_{2b-1} Z_{2}^{(b)} Z_{3}^{(b)} + h_Z \sum_{b=1}^{n} \left( Z_{1}^{(b)} Z_{2}^{(b)} + Z_{1}^{(b)} Z_{3}^{(b)} \right) 
    \nonumber \\
    &+ h_X \sum_{b=1}^{n} \left( X_{1}^{(b)} X_{2}^{(b)} - X_{1}^{(b)} X_{3}^{(b)} \right) + \sum_{b=1}^{n-1} |J_{2b}| Z_{1}^{(b+1)} Z_{2}^{(b+1)}.
\end{align}

\subsection{1D XY chain}
The target Hamiltonian of a $2n$-site XY chain is given by
\begin{equation}\label{eq:Htar-1d-XY}
    \Htar^{\text{1d-XY}} = \sum_{k=1}^{2n-1} J_k \left( \overline{Z}_k \overline{Z}_{k+1} + \overline{X}_k \overline{X}_{k+1} \right),
\end{equation}
where $J_k$'s are real numbers.
We have assumed the open boundary condition and isotropic coupling coefficients for simplicity, but our encoding scheme does not have such constraints.
We choose $g_x=1$ and $g_z=3$ for all the blocks and assign the logical qubits in the same way as we did for the 1D TFI model.
The only difference from the 1D TFI model is the choice of perturbative gadgets; here we apply the third row of Table \ref{tab:gadget-lookup}, and the result is
\begin{subequations}
\begin{align}
\Hencfirst &= \sum_{b=1}^{n} J_{2b-1} \left( Z_{2}^{(b)} Z_{3}^{(b)} - X_{2}^{(b)} X_{3}^{(b)} \right) + \sum_{b=1}^{n-1} \frac{3}{2} |J_{2b}| X_{1}^{(b)} X_{2}^{(b)} - \sum_{b=1}^{n-1} \frac{7}{2} |J_{2b}| Z_{1}^{(b+1)} Z_{2}^{(b+1)}, \\
\Hencsecond &= \sum_{b=1}^{n-1} G^{(b,b+1)},
\end{align}
\end{subequations}
with
\begin{subequations}
\begin{align}
& G^{(b,b+1)} := \alpha Z_2^{(b)} X_3^{(b+1)} + \beta Z_4^{(b)} X_3^{(b+1)} + \gamma Z_2^{(b)} X_1^{(b+1)}, \\
& \text{where} \quad \alpha := \sqrt{\frac{8}{3} |J_{2b}|}, \quad \beta := \sgn(J_{2b}) \sqrt{\frac{8}{3} |J_{2b}|}, \quad \gamma := \sgn(J_{2b}) 3 \sqrt{\frac{8}{3} |J_{2b}|}. 
\end{align}
\end{subequations}
The reason for all the two-gadget interference terms being zero is similar to that for the 1D TFI model.

\section{Encoding schemes for 2D spin models and beyond}\label{appxsec:scheme-2D-models}
In Section \ref{appxsec:scheme-1D-models} we chose $g_x$ and $g_z$ to be uniform among all blocks.
For 1D TFI and XY models, we have shown that this choice does not incur two-gadget interference.
This is because we only made use of Lemmas \ref{lem:gadget-coexist-4blocks}, \ref{lem:gadget-coexist-3blocks} and \ref{lem:gadget-coexist-3blocks-ZX}, none of which requires any constraint on the parameters of the blocks.
Unfortunately, when the interaction graph gets more complicated, these lemmas are no longer applicable to every pair of perturbative gadgets.
We solve this problem by selecting the block parameters \textit{non-uniformly} subject to the constraints made in Lemma \ref{lem:gadget-coexist-3blocks-whatsoever} and \ref{lem:gadget-coexist-3blocks-overlap-qubit}.
We first present the solution for a general interaction graph, and then focus on the 2D TFI model and 2D compass model.

\subsection{General interaction graph}
Suppose we have already assigned the logical qubits of a 2-local target Hamiltonian $\Htar$ to $n$ blocks, but have not determined the block parameters $g_x^{(1)}, g_z^{(1)}, \dots, g_x^{(n)}, g_z^{(n)}$.
We assume that for each pair of blocks $a$ and $b$, their cross-block logical interaction can be generated by a perturbative gadget $G^{(a,b)}$ (modulo some residual inner-block terms) that can be written as a weighted sum of terms of the form $X_i^{(a)} X_j^{(b)}$, $Z_i^{(a)} Z_j^{(b)}$, $X_i^{(a)} Z_j^{(b)}$ or $Z_i^{(a)} X_j^{(b)}$ where $i,j \in \{1,2,3,4\}$.
Now, the goal is to design the block parameters such that different gadgets do not interfere with each other.

Let us define the block-level interaction graph $\mathcal{G}$ to be an $n$-vertex graph, where vertex $a$ and vertex $b$ are connected, denoted $a \stackrel{\mathcal{G}}{\sim} b$, if and only if there is a nonzero cross-block logical interaction between block $a$ and block $b$.
We further define graph $\mathcal{G}'$ over the same vertex set, where vertex $a$ and vertex $b$ are connected, denoted $a \stackrel{\mathcal{G}'}{\sim} b$, if and only if $a \stackrel{\mathcal{G}}{\sim} b$ or there exists another vertex $c$ such that both $a \stackrel{\mathcal{G}}{\sim} c$ and $b \stackrel{\mathcal{G}}{\sim} c$.
Now, we choose a coloring of the vertices of $\mathcal{G}'$ such that no two adjacent vertices have the same color.
Suppose a total number of $N_{\mathrm{color}}$ colors are used.
Then, for each block $b$ of color $C_b \in \{0,1,2,\dots,N_{\mathrm{color}}-1\}$, we set
\begin{equation}
    g_x^{(b)} := 2^{2 C_b}, \quad g_z^{(b)} := 2^{2 C_b + 1}.
\end{equation}
We claim that with this choice of block parameters, all the perturbative gadgets collectively work as desired.
Firstly, for each nonzero gadget $G^{(a,b)}$, by definition $a \stackrel{\mathcal{G}}{\sim} b$, and hence $C_a \ne C_b$.
It follows that the four numbers $|g_x^{(a)}|, |g_z^{(a)}|, |g_x^{(b)}|, |g_z^{(b)}|$ are distinct from each other, which implies that $P_0 G^{(a,b)} P_0 = 0$.
Secondly, any two gadgets $G^{(a,b)}$ and $G^{(c,d)}$ that act on four different blocks do not interfere according to Lemma \ref{lem:gadget-coexist-4blocks}, i.e., $P_0 G^{(a,b)} \Hpen^{-1} G^{(c,d)} P_0 + P_0 G^{(c,d)} \Hpen^{-1} G^{(a,b)} P_0 = 0$.
Lastly, consider two nonzero gadgets $G^{(a,b)}$ and $G^{(a,c)}$ that shares a common block $a$.
By definition, $a$, $b$ and $c$ are adjacent to each other in $\mathcal{G}'$ and hence colored differently.
It is then easy to see that $\sigma_1 g_{w_1}^{(a)} + \sigma_2 g_{w_2}^{(a)} + \sigma_3 g_{w_3}^{(b)} + \sigma_4 g_{w_4}^{(c)} \ne 0$ for all choices of $\sigma_1, \sigma_2, \sigma_3, \sigma_4 \in \{-1,1\}$ and $w_1, w_2, w_3, w_4 \in \{x,z\}$.
By Lemma \ref{lem:gadget-coexist-3blocks-whatsoever}, we deduce that $P_0 G^{(a,b)} \Hpen^{-1} G^{(a,c)} P_0 = P_0 G^{(a,c)} \Hpen^{-1} G^{(a,b)} P_0 = 0$.

We remark that if $\mathcal{G}$ has a finite degree (i.e., every vertex is adjacent to only finitely many other vertices), then $\mathcal{G}'$ also has a finite degree and hence can be colored using a finite number of different colors, independent of the number of blocks $n$.
We also note that the strategy above based on graph coloring is just one out of many possibilities that guarantee the correct functioning of the gadgets.
Depending on the topology of the interaction graph and the chosen gadgets, one may find other strategies with a smaller number of different block parameters.
Indeed, the 2D TFI model and the 2D compass model are two such examples that we address below.

\subsection{2D TFI model}
Consider a 2D TFI model on a $2n \times 2n$ square lattice:
\begin{equation}\label{eq:Htar-2dTFI}
    \Htar^{\text{2d-TFI}} = J_1 \sum_{i=1}^{2n-1} \sum_{j=1}^{2n} \overline{Z}_{i,j} \overline{Z}_{i+1,j} + J_2 \sum_{i=1}^{2n} \sum_{j=1}^{2n-1} \overline{Z}_{i,j} \overline{Z}_{i,j+1} + h_Z \sum_{i,j=1}^{2n} \overline{Z}_{i,j} + h_X \sum_{i,j=1}^{2n} \overline{X}_{i,j}.
\end{equation}
We identify every two consecutive lattice sites in the horizontal direction as the two logical qubits encoded in a code block, as illustrated in Fig.~\ref{fig:2d-model-block-layout}(a) for a $6 \times 6$ lattice.
Note that we have reversed the positional ordering of the two logical qubits inside a block for every other row, such that all the cross-block Ising interactions take the form $\overline{Z}_2^{(a)} \overline{Z}_1^{(b)}$ and therefore can be encoded using perturbative gadgets of the form $\alpha Z_2^{(a)} X_3^{(b)} + \beta Z_4^{(a)} X_3^{(b)}$ (c.f. Table \ref{tab:gadget-lookup}).
Also, note that every row of the lattice is in the same configuration as the encoding for the 1D TFIM model.
Motivated by this observation, we can safely choose the block parameters to be identical for all the blocks in the same horizontal chain, as we did in Section \ref{appxsec:scheme-1D-models}.
It remains to make sure that the incorporation of the Ising interactions in the vertical direction does not incur unwanted two-gadget interference.
One can check that the only two-gadget interference terms that are not covered by Lemmas \ref{lem:gadget-coexist-4blocks}, \ref{lem:gadget-coexist-3blocks}, \ref{lem:gadget-coexist-3blocks-ZX}, \ref{lem:gadget-coexist-2blocks}, \ref{lem:gadget-coexist-2blocks-ZX} are the situation described in Lemma \ref{lem:gadget-coexist-3blocks-overlap-qubit}.
Therefore, all we need to do is choose the code parameters for each row to comply with the constraints in Lemma \ref{lem:gadget-coexist-3blocks-overlap-qubit}.
This amounts to requiring that the absolute value of the code parameters be different for every three consecutive rows.
For example, in our numerical simulation, we have chosen $g_x = 1, g_z = 4$ for all the blocks in the $(3k-2)$-th row, $g_x = 2, g_z = 5$ for all the blocks in the $(3k-1)$-th row, and $g_x = 3, g_z = 6$ for all the blocks in the $3k$-th row, where $k=1,2,\dots$.

\begin{figure}[h!]
\includegraphics[width=0.7\linewidth]{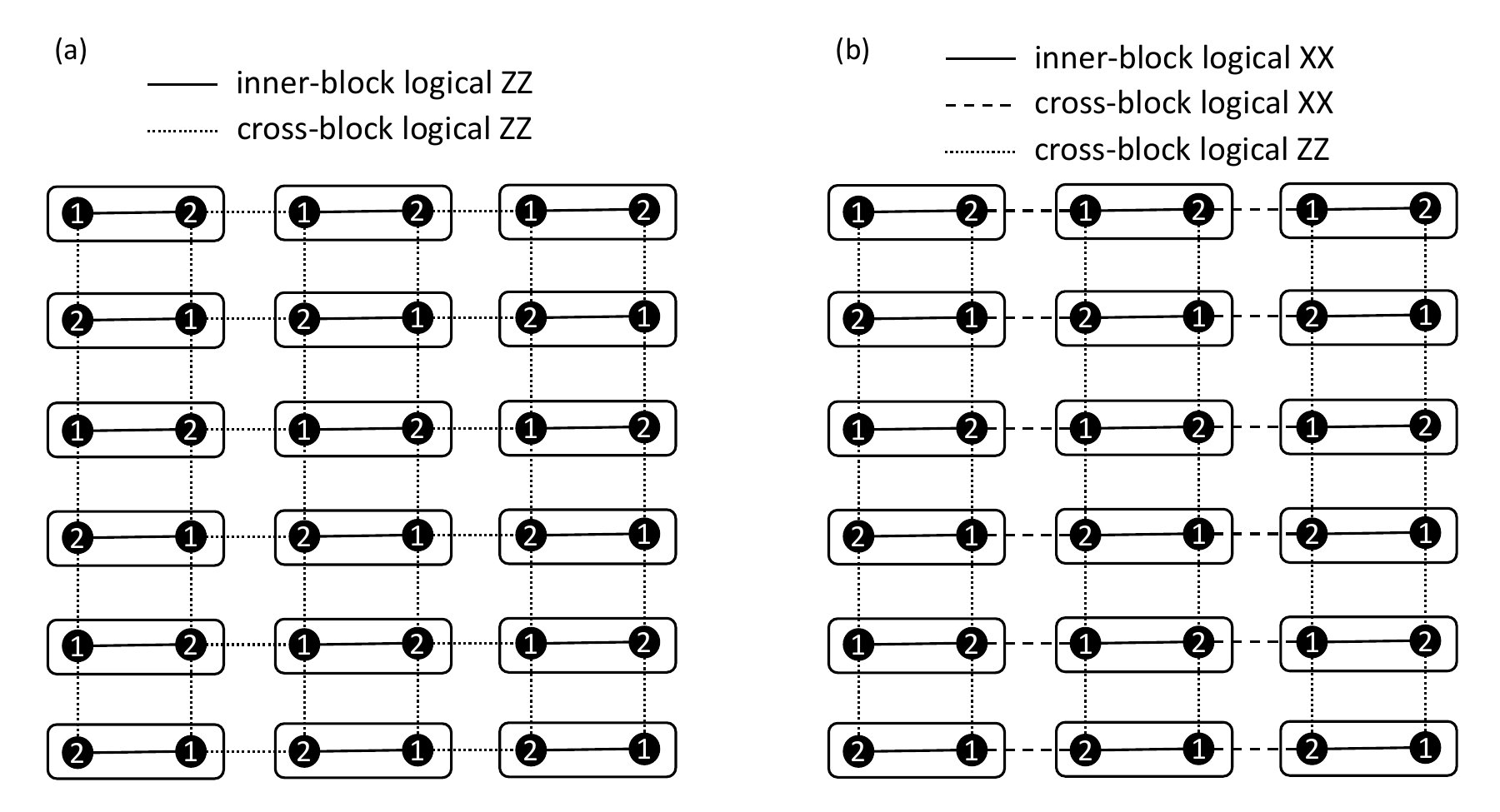}
\centering
\caption{Layout of the code blocks (shown as rectangles) used to simulate (a) the 2D TFI model and (b) the 2D compass model on a $6 \times 6$ square lattice.
Also included in the figure are the logical qubits occupying the lattice sites (shown as solid circles labeled by their logical indices within each block) and the 2-site interaction terms (shown as solid, dashed, or dotted lines).}
\label{fig:2d-model-block-layout}
\end{figure}

\subsection{2D compass model}
The 2D compass model on a $2n \times 2n$ square lattice is given by \cite{dorier2005quantum-compass-model}:
\begin{equation}\label{eq:Htar-2dCompass}
    \Htar^{\text{2d-Compass}} = J_Z \sum_{i=1}^{2n-1} \sum_{j=1}^{2n} \overline{Z}_{i,j} \overline{Z}_{i+1,j} + J_X \sum_{i=1}^{2n} \sum_{j=1}^{2n-1} \overline{X}_{i,j} \overline{X}_{i,j+1}.
\end{equation}
We can use the same layout of the logical qubits (see Fig.~\ref{fig:2d-model-block-layout}(b)) and the same block parameters as before.
The only difference is the choice of the perturbative gadget to realize the cross-block logical XX interactions---now we need to apply the second row of Table \ref{tab:gadget-lookup}.

\section{Numerical simulation}\label{appxsec:numerics}
In order to verify the performance of our error suppression scheme, we perform extensive numerical simulations of the encoded dynamics under the spin models mentioned in Appendices~\ref{appxsec:scheme-1D-models} and~\ref{appxsec:scheme-2D-models}.
In FIG.~\ref{fig:sweep_lamb}, we plot the scaling of the infidelity of the encoded simulation with respect to the penalty coefficient at a fixed time ($t=1$); the generic behavior does not depend on the choice of the evolution time.
More concretely, each run of the scheme consists of time-evolving the Hamiltonian $\Hsim + V$ for time $t$ starting from a Haar-random initial state inside $\Senc$.
We take the noise term as $V = \sum_{\ell} (\epsilon^{X}_{\ell} X_{\ell} + \epsilon^{Y}_{\ell} Y_{\ell} + \epsilon^{Z}_{\ell} Z_{\ell})$, with $\epsilon ^{\alpha}_{\ell}$ for $\alpha = X,Y,Z$ denoting random variables independently sampled from the uniform distribution over $[-0.1,0.1]$.
We plot the average infidelities between the encoded dynamics and the target evolution in FIG.~\ref{fig:sweep_lamb}; as shown in this figure, the infidelity of the noisy, encoded evolution scales inversely with the penalty coefficient $\lambda$, which agrees excellently with the analytical error bound in Theorem~\ref{thm:main-theorem}.

\begin{figure}[h]
\includegraphics[width=0.8\linewidth]{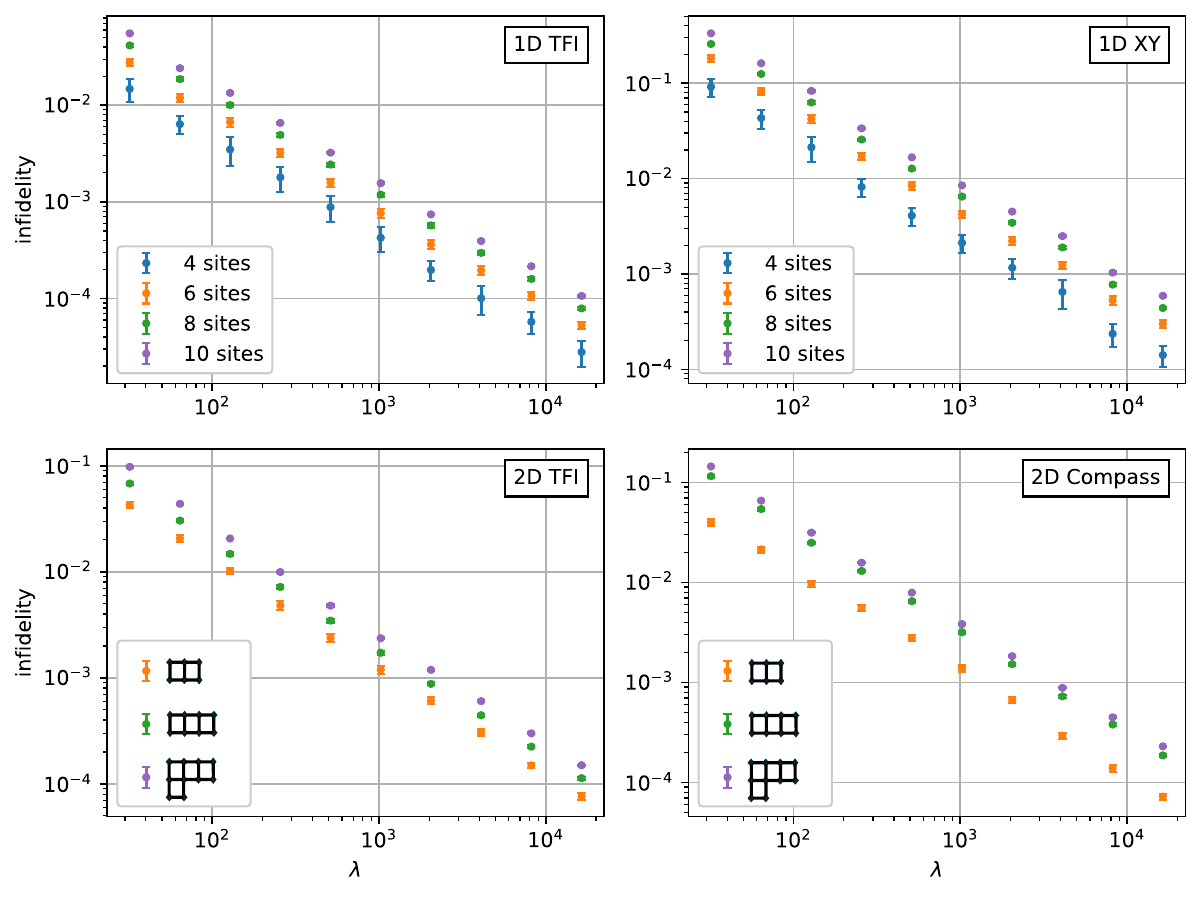}
\centering
\caption{Numerically simulated performance of the encoded many-body quantum spin models on a 1D chain or a 2D lattice, as quantified by the average infidelity over a randomly selected initial state within the encoding subspace $\Senc$.
All Hamiltonian coefficients are chosen to be $1$ (see Eqs.~\eqref{eq:Htar-1d-TFI}, \eqref{eq:Htar-1d-XY}, \eqref{eq:Htar-2dTFI}, \eqref{eq:Htar-2dCompass}), and the penalty coefficient is chosen from the range $\lambda \in [2^5, 2^{14}]$.
Lattice shapes and dimensions for the 2D models are shown in the legend of the lower two plots.
Each data point is obtained by averaging over 20 random samples of initial states for systems with sizes smaller than $10$ (or 5 samples for the ones with 10 sites).
The numerically computed infidelity exhibits a $\propto \lambda ^{-1}$ scaling, in agreement with the prediction from Theorem \ref{thm:main-theorem}.
}
\label{fig:sweep_lamb}
\end{figure}

\end{document}